\definecolor{lilla}{HTML}{750787}
\renewcommand*{\backref}[1]{}
\renewcommand*{\backrefalt}[4]{\ifcase #1\or [p.~#2]\else [pp.~#2]\fi }
\tikzstyle{xnode}=[circle,scale=1,draw,fill=white]
\newcommandx{\set}[2][1=1]{\ensuremath{\{#1,\ldots,#2\}}}
\newcommandx{\tlog}[3][1=,3=]{\log_{#1}^{#3}(#2)}
\newcommandx{\ith}[2][1=th]{#2\nobreakdash-#1}
\newtheorem{theorem}{Theorem}
\newtheorem{lemma}{Lemma}
\newtheorem{observation}[lemma]{Observation}
\newtheorem{corollary}[theorem]{Corollary}
\crefname{observation}{Observation}{Observations}
\crefname{theorem}{Theorem}{Theorems}
\Crefname{theorem}{Thm.}{Thms.}
\crefname{corollary}{Corollary}{Corollaries}
\crefname{lemma}{Lemma}{Lemmata}
\Crefname{corollary}{Cor.}{Cors.}
\crefname{proposition}{Proposition}{Propositions}
\Crefname{proposition}{Prop.}{Props.}
\crefname{algorithm}{Algorithm}{Algorithms}
\crefname{algocf}{alg.}{algs.}
\Crefname{algocf}{Algorithm}{Algorithms}
\crefname{algocfline}{line}{lines}
\Crefname{algocfline}{l.}{l.}
\newcommand{\calF}{\mathcal{F}}
\newcommand{\calI}{\mathcal{I}}
\newcommand{\calT}{\mathcal{T}}
\newcommand{\cC}{\mathcal{C}}
\crefname{problem}{Problem}{Problems}
\Crefname{problem}{Prob.}{Probs.}
\newcommand{\boxproblem}[4]{
	\begin{center}   
		\fbox{~\begin{minipage}{.97\textwidth}
				\vspace{2pt} 
				\noindent
				\normalsize\textsc{#1}
				\vspace{1pt}
				
				\setlength{\tabcolsep}{3pt}
				\renewcommand{\arraystretch}{1.0}
				\begin{tabularx}{\textwidth}{@{}lX@{}}
					\normalsize\textbf{Input:}       & \normalsize#2 \\
					\normalsize\textbf{Output:}    & \normalsize#3 \\
					\normalsize\textbf{Objective:}    & \normalsize#4 \\
				\end{tabularx}
		\end{minipage}}
	\end{center}
}
\newcommand{\Qpos}{\mathbb{Q}_{\ge 0}}
\newcommand{\bigO}{\mathcal{O}}
\newcommand{\Oh}{\bigO}
\newcommand{\cocl}[1]{\ensuremath{\operatorname{#1}}}
\newcommand{\classP}{\cocl{P}}
\newcommand{\NP}{\cocl{NP}}
\newcommand{\calC}{\mathcal{C}}
\newcommand{\calP}{\mathcal{P}}
\newcommand{\calQ}{\mathcal{Q}}
\newcommand{\eps}{\varepsilon}
\newcommand{\prob}[1]{{\normalfont\textsc{#1}}\xspace}
\newcommand{\vertexcover}{\prob{Vertex Cover}}
\newcommand{\convertexcover}{\prob{Connected Vertex Cover}}
\newcommand{\weightedvertexcover}{\prob{Weighted Vertex Cover}}
\newcommand{\chromaticnumber}{\prob{Chromatic Number}}
\newcommand{\trianglepacking}{\prob{Triangle Packing}}
\DeclareMathOperator{\OPT}{OPT}
\DeclareMathOperator{\OPTVC}{OPT_{VC}}
\DeclareMathOperator{\OPTWVC}{OPT_{WVC}}
\DeclareMathOperator{\OPTCVD}{OPT_{CVD}}
\DeclareMathOperator{\OPTCGVD}{OPT_{CGVD}}
\DeclareMathOperator{\OPTCCVD}{OPT_{CCVD}}
\DeclareMathOperator{\OPTIVD}{OPT_{IVD}}
\DeclareMathOperator{\OPTCOL}{OPT_{COL}}
\DeclareMathOperator{\OPTFVS}{OPT_{FVS}}
\DeclareMathOperator{\OPTTP}{OPT_{TP}}
\DeclarePairedDelimiter{\abs}{\lvert}{\rvert}
\newcommand{\thetitle}{Efficient parameterized approximation}
\date{January 24, 2025}
\title{\thetitle} 
\author{Stefan Kratsch}
\author{Pascal Kunz\thanks{Supported by the DFG Research Training Group 2434 ``Facets of Complexity''.}}
\affil{\small Humboldt-Universit\"at zu Berlin, Algorithm Engineering, Berlin, Germany 
\\ \{kratsch,kunzpasc\}@informatik.hu-berlin.de}
\begin{document}

\maketitle
\begin{abstract}
Many important problems are \NP-hard and, unless $\classP=\NP$, they do not admit polynomial-time exact algorithms. In fact, the fastest known algorithms for solving them exactly usually take time exponential in the input size. Much research effort has gone into obtaining much faster exact algorithms for instances that are sufficiently well-structured, e.g., through parameterized algorithms with running time $f(k)\cdot n^{\Oh(1)}$ where $n$ is the input size and $k$ quantifies some structural property such as treewidth. When $k$ is small, this is comparable to a polynomial-time exact algorithm and usually it outperforms the fastest exact exponential-time algorithms for a large range of $k$.

In this work, we are interested instead in leveraging instance structure for polynomial-time approximation algorithms. Concretely, we aim for polynomial-time algorithms that produce a solution of value at most or at least (depending on minimization vs.\ maximization) $c \OPT \pm f(k)$ where $c$ is a constant. Unlike for standard parameterized algorithms, we do not assume that any structural information is provided with the input. Ideally, we can obtain algorithms with small additive error only, i.e., $c=1$ and $f(k)$ is polynomial or even linear in $k$. For small $k$, this is similarly comparable to a polynomial-time exact algorithm and it will beat general case approximation for a large range of $k$.

We study \vertexcover, \convertexcover, \chromaticnumber, and \trianglepacking. The parameters we consider are the size of minimum modulators to graph classes on which the respective problem is tractable. For most problem--parameter combinations that we consider we give algorithms that compute a solution of size at least or at most $\OPT \pm k$. In the case of \textsc{Vertex Cover}, most of our algorithms are tight under the Unique Games Conjecture and provide considerably better approximation guarantees than standard $2$-approximations whenever the modulator is smaller than the optimum solution.
\end{abstract}

\section{Introduction}
	\label{sec:intro}

	Many important (graph) problems are \NP-hard, ruling out polynomial-time exact algorithms that optimally solve every input, unless $\classP=\NP$. Taking into account only the \emph{size} of inputs, the best general (i.e., worst-case) results one can thus hope for are polynomial-time approximation algorithms and super-polynomial-time exact algorithms (usually leading to exact exponential-time algorithms). At the same time, it is well known that hard problems usually admit efficient algorithms on sufficiently well-structured inputs. How much can the \emph{structure} of inputs help us to improve upon the best general-case approximation or exact algorithms?

	Regarding the influence of structure on the \emph{time complexity} of (mainly) exact algorithms, this is the main focus of the field of \emph{parameterized complexity}. Therein, structural properties are quantified by \emph{parameters} (such as solution size, similarity to a tractable special case, or treewidth) and one seeks algorithms that run in time $f(k)\cdot|x|^{\Oh(1)}$ for inputs $(x,k)$ with parameter value $k$, called \emph{fixed-parameter tractable (FPT)} algorithms. For small values of the parameter $k$ this is almost as good as an efficient algorithm and for a certain range of $k$ it will outperform the best known exact exponential-time algorithm. Such algorithms are known for a wide variety of problems with an (equal) variety of parameters.

	There has been far less work on leveraging structure to design polynomial-time approximation algorithms.\footnote{Notably, the active subfield of \emph{parameterized approximation algorithms} studies approximation algorithms that are allowed to run in FPT-time $f(k)\cdot|x|^{\Oh(1)}$, especially for problems with no exact FPT-algorithm, but this does not address polynomial-time approximation.} The natural objective here is to study the influence of structure on the \emph{approximation ratio} of polynomial-time algorithms. Mimicking the benefits of FPT-algorithms, i.e., closeness to a polynomial-time exact algorithm for small parameter value, leads us to \emph{additive approximation} as the main goal: In polynomial time an \emph{efficient parameterized algorithm} with parameter $k$ should return a solution of cost at most $\OPT+f(k)$ (resp., value at least $\OPT-f(k)$) for a minimization (resp.\ maximization) problem; for small values of $k$ this is close to an exact solution. Clearly, it is desirable for $f(k)$ to be no more than polynomial in $k$, or even linear in $k$, such as to beat the best unparameterized (i.e., standard) approximation algorithm for a large range of $k$. 
	
	Intuitively, for many standard problems we should not expect an additive error of at most $\mathrm{poly}(k)$ for parameters that do not grow under disjoint union of instances, such as treewidth: As a toy example consider \vertexcover with parameter $k$ being the largest size of a connected component. By taking a disjoint union of $\mathrm{poly}(n)+1$ copies of an instance with $n$ vertices, the overall solution with error at most $\mathrm{poly}(k)\leq \mathrm{poly}(n)$ would have to contain an optimal solution for at least one copy (yielding $\classP=\NP$).\footnote{\chromaticnumber is a notable exception to this observation. Indeed, it is well known that any graph can in polynomial time be colored with at most $k+1$ colors where $k$ is its degeneracy, which subsumes e.g.\ treewidth at most $k$.} At the same time, cost at most $\OPT+2^k$ is straightforward to get for this example: It holds trivially when $n\leq 2^k$; and for $n>2^k$ we can exactly solve each connected component and obtain an optimal solution in polynomial time. This is clearly reminiscent of what is known about the existence of \emph{polynomial kernelizations} (a parameterized form of efficient preprocessing), which mostly works only for parameters like solution size or distance to a tractable case (i.e., size of a suitable modulator); we will return to this relationship later. It also emphasizes once more that we should mainly be concerned with polynomial or even linear additive error (and will hence focus on modulator-based parameters).
	
	Among the few works on efficient parameterized approximation algorithms, the \emph{structural rounding} framework of Demaine et al.~\cite{Demaine2019} stands out as being closest to our present research interest. They also focus on modulator-based parameters, i.e., on the distance of inputs to some known tractable case of the problem, and show a general approach for getting better than general $\alpha\cdot\OPT$ approximation results for instances that are sufficiently close to a specific tractable case (i.e, with sufficiently small parameter value). This involves a sufficiently good approximation of the distance, efficiently solving the remaining tractable case (with a known algorithm), and lifting back to an approximate solution of the initial input (with cost/value change proportional to the approximate distance). Here the final step implicitly produces an additive approximation but it is traded for $\alpha\cdot\OPT$ using the assumption that the distance is not too large.\footnote{Their framework applies as well for distance to a (better than general) $\alpha$-approximable case and will then allow lift to a slightly worse than that $\alpha\cdot\OPT$ (but better than general) approximation.} Lavallee et al.~\cite{Lavallee2020} evaluated the performance of their approach applied to \vertexcover on practical inputs and found that, empirically, it can outperform the standard $2$-approximation algorithms.
	
	\subparagraph{Our results.}
	Like Demaine et al.~\cite{Demaine2019}, we focus on graph problems parameterized by the vertex-deletion distance to a known tractable graph class $\cC$ for the problem in question, i.e., the size of a \emph{modulator} $X$ such that $G-X\in\cC$. We similarly do not assume that such a modulator is given with the input, though this is commonly assumed to be given for FPT-algorithms and kernelization. 
	In this setting, however, we focus mainly on getting cost at most $\OPT+\alpha k$ (resp.\ value at least $\OPT-\alpha k$) for $\alpha>0$ as low as possible, rather than aiming for $c \cdot \OPT$ (resp.\ $\frac1c \cdot \OPT$) for $c>1$ better than general-case approximation. In many cases, we improve over a straightforward application of structural rounding by not computing an (approximate) modulator but instead seeking a direct argument for why the generated solution fits the claimed error bound.
	
	We will discuss the relationship between efficient parameterized approximation on the one hand and FPT algorithms and structural rounding on the other.
	We then present efficient parameterized approximation algorithms for concrete problems.
	We focus mainly on \vertexcover, \chromaticnumber, and \trianglepacking and obtain the results listed in \cref{tab:results}. (See Section~\ref{section:preliminaries} for definitions of the used parameters/modulators.)
	\begin{table}[t]\centering
		\begin{tabular}{ c | c | c | c}
			Problem & Parameter is modulator to & Guarantee & Ref. \\
			\hline
			\multirow{7}{*}{
			\begin{tabular}{@{}c@{}}\textsc{Weighted}  \\  \textsc{Vertex Cover} \end{tabular}} & cograph & $\OPTVC + 2k$ & \Cref{cor:vc-cograph} \\
			& interval & $\OPTVC + 4k$ & \Cref{cor:vc-interval}\\
			& cluster & $\OPTVC + k$ & \Cref{thm:vc-cvd}\\
			& cocluster & $\OPTVC + k$ & \Cref{cor:vc-ccvd}\\
			& forest & $\OPTVC + k$ & \Cref{thm:vc-fvs}\\
			& chordal & $\frac{3}{2}\OPTVC + k$ & \Cref{cor:vc-chordal}\\
			& bipartite & no $(2-\varepsilon)\OPTVC + c k$ & \Cref{cor:vc-oct}\\
			\hline
			\textsc{Vertex Cover} & split & $\OPTVC + k$ & \Cref{thm:vc-split}\\
			\hline
			\begin{tabular}{@{}c@{}} \textsc{Connected} \\  \textsc{Vertex Cover} \end{tabular} & split & $\OPTVC + k$ & \Cref{thm:cvc-split}\\
			\hline
			\multirow{6}{*}{\chromaticnumber} & bipartite & $2 + k$ & \Cref{cor:col-bounded-c}\\
			& planar & $4 + k$ & \Cref{cor:col-bounded-c}\\
			& chordal & $\OPTCOL + k$ & \Cref{thm:col-chordal}\\
			& cograph & $\OPTCOL + k$ & \Cref{thm:col-cograph}\\
			& cochordal & $2\OPTCOL + k -1$ & \Cref{thm:col-cochordal}\\
			& $(P_3+K_1)$-free & $\OPTCOL + k$ & \Cref{thm:col-p3k1free}\\
			\hline
			\multirow{2}{*}{\textsc{Triangle Packing}} & cluster & $\OPTTP - k$ & \Cref{thm:tp-cvd}\\
			& cocluster & $\OPTTP - k$ & \Cref{thm:tp-ccvd}\\
			\hline
		\end{tabular}
		\caption{An overview of our results: The third column lists the solution quality guaranteed by the algorithm we give.
		Here, $k$ refers to the parameter value and $c$ to an arbitrary constant. Throughout, the input consists of just a graph with no further information about its structure.
		For \textsc{Weighted Vertex Cover}, where the input is a weighted graph, the parameter value is the minimum weight of a modulator to the respective graph class whereas for the other problems it is the minimum size of a modulator.}
		\label{tab:results}
	\end{table}

	\subparagraph{Related work.}
	Efficient parameterized approximation with additive error at most polynomial in some structural parameter attains close to optimal solutions when the parameter value is small, and intuitively this will mostly be restricted to modulator-type parameters. An alternative algorithmic approach for this type of structure is to apply FPT-algorithms for the same parameter, which will return an exact solution though with running time depending exponentially on the parameter value. Depending on the specific setting of problem and parameter choice, efficient parameterized approximation may be a sound alternative for making good use of such structure.
	
	There a few other works that also make use of structure in the input for polynomial-time approximation.
	For example, Chalermsook et al.~\cite{Chalermsook2023} give an $\bigO(\mathrm{tw}\cdot(\log \log \mathrm{tw})^3/\log^3\mathrm{tw})$)-factor approximation for \textsc{Independent Set} where $\mathrm{tw}$ refers to treewidth.
	In contrast to our results, this approximation guarantee is multiplicative rather than additive.
	In the spirit of ``above guarantee'' parameterization, Mishra et al.~\cite{Mishra2011} showed that there is a polynomial-time algorithm that outputs a vertex cover of size at most $\OPT_{\mathrm{MM}} + \bigO(\log n \log \log n)(\OPT_{\mathrm{VC}} \allowbreak - \allowbreak \OPT_{\mathrm{MM}})$ where $\OPT_{\mathrm{MM}}$ denotes the size of a maximum matching and $n$ the number of vertices.
	Fomin et al.~\cite{Fomin2023} obtained similar results for the problem \textsc{Longest Cycle}.
	Kashaev and Sch\"afer~\cite{Kashaev2023} investigate \textsc{Vertex Cover} approximation on graphs that are close to being bipartite.
	They measure this proximity, not in terms of the modulator size, but in terms of the odd girth, that is the length of a shortest odd cycle, but assume that a small modulator to a bipartite graph is given or can be approximated efficiently.	
	Several standard approximation algorithms could also be interpreted as efficient parameterized approximations.
	For example, the standard $2$-approximation for \textsc{Vertex Cover} actually computes a vertex cover of size at most $\OPTVC+\OPT_\mathrm{MM}$.
	In the same vein, the aforementioned simple greedy coloring algorithm achieves a $(k +1)$-coloring, $k$ being the degeneracy of the input graph.
	
	In the field of exponential-time parameterized approximation, the work of Inamdar et al.~\cite{Inamdar2023} is most similar to ours.
	It, too, focuses on ``distance-to-triviality'' parameters, i.e., parameters that measure a graph's distance to a graph class on which a given problem can be solved efficiently.
	
	\subparagraph{Organization.}
	We give basic definitions in \cref{section:preliminaries}.
	In \cref{sec:epa}, we give a formal definition of efficient parameterized approximation and compare this approach to both FPT algorithms and to the structural rounding paradigm.
	In \cref{sec:vc,sec:cvc,sec:col,sec:tp}, we give concrete examples of efficient parameterized approximation algorithms for the problems \textsc{(Weighted) Vertex Cover}, \textsc{Connected Vertex Cover}, \chromaticnumber, and \textsc{Triangle Packing}, respectively.
	We conclude in \cref{section:conclusion}, by pointing out avenues for further research.

\section{Preliminaries}\label{section:preliminaries}
	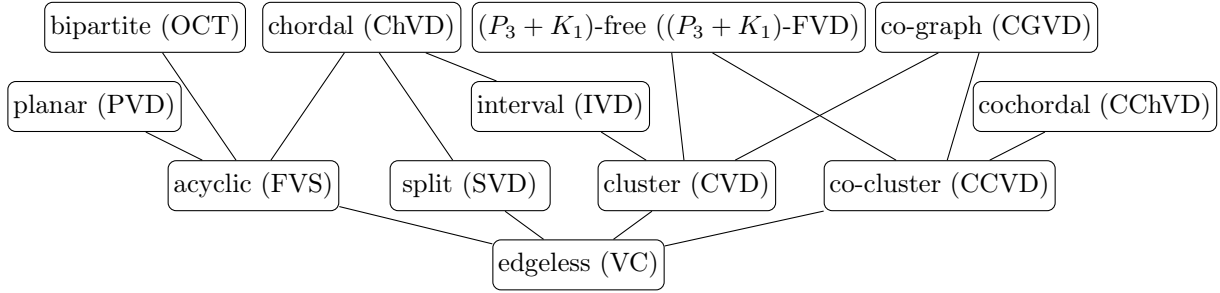
\begin{figure}
		\centering
		\begin{tikzpicture}[yscale=0.7,xscale=0.95]
		\tikzset{
			param/.style={draw, fill=white, rectangle, rounded corners=3, minimum width=5.5em, minimum height=4ex,font={\fontsize{10pt}{12}\selectfont},inner sep=2pt,inner ysep=+0pt
		},
		}
		\begin{scope}[yscale=-1,xscale=1]
		\node[param] at (-7, 1.5) (vc) {edgeless (VC)};
		\node[param] at (-11.5, 0) (fv) {acyclic (FVS)};
		\node[param] at (-2, 0) (ccvd) {co-cluster (CCVD)};
		\node[param] at (-1.35, -3) (cgvd) {co-graph (CGVD)};
		\node[param] at (-5.5, 0) (cvd) {cluster (CVD)};
		\node[param] at (-13, -3) (oct) {bipartite (OCT)};
		\node[param] at (-7.25, -1.5) (pid) {interval (IVD)};
		\node[param] at (-8.5, 0) (split) {split (SVD)};
		\node[param] at (-10, -3) (chordal) {chordal (ChVD)};
		\node[param] at (0.15, -1.5) (cochordal) {cochordal (CChVD)};
		\node[param] at (-5.75, -3) (p3k1free) {$(P_3+K_1)$-free ($(P_3+K_1)$-FVD)};
		\node[param] at (-13.7, -1.5) (planar) {planar (PVD)};
		\end{scope}

		\draw (vc) -- (fv);
		\draw (vc) -- (ccvd);
		\draw (vc) -- (cvd);
		\draw (cvd) -- (pid);
		\draw (fv) -- (oct);
		\draw (cvd) -- (cgvd);
		\draw (ccvd) -- (cgvd);
		\draw (vc) -- (split);
		\draw (pid) -- (chordal);
		\draw (split) -- (chordal);
		\draw (p3k1free) -- (cvd);
		\draw (p3k1free) -- (ccvd);
		\draw (fv) -- (chordal);
		\draw (cochordal)  -- (ccvd);
		\draw (planar) -- (fv);
		\end{tikzpicture}
		\caption{A Hasse diagram of the graph classes and related modulator parameters used in this work:
		A line from one graph class to another means that the lower of the two classes is contained in the other.
		We will use abbreviations to refer to the vertex deletion problem to these graph classes.
		These abbreviations are indicated in parentheses.}
		\label{fig:param-hier}
	\end{figure}

	For any set $X$ and $k\in \mathbb N$, we let $\binom{X}{k} \coloneqq \{Y \subseteq X \mid \abs{Y}= k \}$.
	For any $f\colon X \to \Qpos$ and $Y \subseteq X$, we denote $f(Y) \coloneqq \sum_{x\in Y} f(x)$. 

	Let $G=(V,E)$ be a graph.
	We denote the \emph{open} and \emph{closed neighborhood} of a vertex $v\in V$ by $N_G(v)\coloneqq \{u \in V \mid \{u,v\} \in E\}$ and $N_G[v] \coloneqq N(v) \cup \{v\}$, respectively.
	Two vertices $u,v \in V$ are \emph{true twins} if $N_G[u] = N_G[v]$ and \emph{false twins} if $N_G(u) = N_G(v)$.
	The complement of $G$ is denoted by $\overline{G} \coloneqq (V,\{\{u,v\} \in \binom{V}{2} \mid \{u,v\} \notin E \})$.
	We will denote the size of a maximum clique in a graph $G$ by $\omega(G)$ and the size of a maximum independent set by $\alpha(G)$.
	
	We will now define the four problems that most of this work is concerned with.
	A vertex set $X\subseteq V$ is a \emph{(connected) vertex cover} if $G-X$ is edgeless (and $G[X]$ is connected).
	\boxproblem{(Connected) Vertex Cover ((C)VC)}{A graph $G=(V,E)$.}{A (connected) vertex cover $X$.}{Minimize $\abs{X}$.}
	We will also consider \textsc{Weighted Vertex Cover} (WVC) where the input additionally contains a weight function $w\colon V \to \Qpos$ and the objective is to minimize $w(X)$ rather than $\abs{X}$.
	
	A function $c \colon V \to \mathbb N$ is a \emph{coloring} if $c(u)\ne c(v)$ for all $\{u,v\} \in E$.
	\boxproblem{\chromaticnumber (COL)}{A graph $G=(V,E)$.}{A coloring $c$.}{Minimize the number of used colors, i.e., $\abs{\{i \in \mathbb N \mid c^{-1}(i) \ne \emptyset\}}$.}
	A \emph{triangle packing} in $G$ is a set $\calT$ such that $G[T]$ induces a triangle for every $T\in \calT$.
	\boxproblem{Triangle Packing (TP)}{A graph $G=(V,E)$.}{A triangle packing $\calT$.}{Maximize $\abs{\calT}$.}
	
	We will now define the parameters used in this work.
	If $\calC$ is a graph class and $G=(V,E)$ an arbitrary graph, then $M\subseteq V$ is a \emph{modulator} to $\calC$ in $G$ if $G-M \in \calC$.
	It is a \emph{minimum modulator} to $\calC$ if $\abs{M'} \ge \abs{M}$ for all modulators $M'$ in $G$.
	If $w\colon V\to \Qpos$ is a weight function on $G$, then $M$ is a \emph{minimum-weight modulator} to $\calC$ if $w(M') \ge w(M)$ for all modulators $M'$ in $G$.
	
	Most of the parameters we will consider will be the minimum size or weight of a modulator in the given graph to a certain graph class.
	We will now define the relevant graph classes.
	A diagram describing the containment relationships between these classes is given in \Cref{fig:param-hier}.
	As noted above, a modulator to an edgeless graph is a \emph{vertex cover}.
	A modulator to an acyclic graph is a \emph{feedback vertex set}.
	The graph $G=(V,E)$ is \emph{bipartite} if it does not contain an odd cycle and a modulator to bipartite graph is called an \emph{odd cycle transversal}.
	It is a \emph{split graph} if $V = I \uplus C$ where $G[I]$ is edgeless and $G[C]$ is complete.
	It is \emph{chordal} if it does not contain an induced cycle on at least four vertices.
	It is \emph{cochordal} if $\overline{G}$ is chordal.
	It is a \emph{cluster graph} if it does not contain a path on three vertices as an induced subgraph.
	It is a \emph{cocluster} if $\overline{G}$ is a cluster graph.
	It is a \emph{cograph} if it does not contain a path on four vertices as an induced subgraph.
	It is an \emph{interval graph} if there is a set $\calI$ of intervals on the real line and a bijection $\tau \colon V \to \calI$ such that $\{u,v\} \in E$ if and only if $\tau(u)$ and $\tau(v)$ overlap.
	A graph is \emph{$(P_3+K_1)$}-free if it does not contain the graph $P_3+K_1$ (a drawing of which is given in \Cref{fig:p3k1}) as an induced subgraph.

\section{Efficient parameterized approximation}
	\label{sec:epa}
	An \emph{optimization problem} over an alphabet $\Sigma$ is defined by a function $\calP \colon \Sigma^* \times \Sigma^* \to \mathbb R \cup \{\pm \infty\}$.
	An optimization problem can be a minimization or a maximization problem.
	Intuitively, $\calP(I,X)$ is the value of a solution $X$ in an instance $I$.
	For each $I\in \Sigma^*$, the number of $X\in \Sigma^*$ such that $\calP(I,X) \notin \{\pm \infty\}$ should be finite.
	Then, we define the optimum value of $\calP$ on $I$ as $\OPT_\calP(I) \coloneqq \min_{X \in \Sigma^*} \calP(I,X)$ if $\calP$ is a minimization problem or as $\OPT_\calP(I) \coloneqq \max_{X \in \Sigma^*} \calP(I,X)$ if it is a maximization problem.
	A \emph{constant-factor approximation} with ratio $\alpha \ge 1$ for $\calP$ is a polynomial-time algorithm that, given an instance $I \in \Sigma^*$, outputs a solution $X\in \Sigma^*$ such that $\calP(I,X) \le \alpha \OPT_\calP(I)$ if $\calP$ is a minimization problem or $\calP(I,X) \ge \frac{1}{\alpha} \OPT_\calP(I)$ if it is a maximization problem.
	
	A \emph{parameterized optimization problem} is a pair consisting of an optimization problem $\calP$ and a function $\kappa \colon \Sigma^* \to \mathbb N$.
	Let $\alpha \ge 1$ and $f\colon \mathbb N \to \mathbb R_{\ge 0}$.
	An \emph{efficient parameterized approximation} (EPA) with an additive error of $f$ and a ratio of $\alpha$ is a polynomial-time algorithm that given an instance $I\in \Sigma^*$ outputs a solution $X\in \Sigma^*$ with $\calP(I,X) \le \alpha \OPT_\calP +  f(\kappa(I))$ if $\calP$ is a minimization problem or $\calP(I,X) \ge \frac{1}{\alpha} \OPT_\calP - f(\kappa(I))$ if it is a maximization problem.
	It is important to note that an EPA is not given access to the parameter value $\kappa(I)$ or any structure witnessing the parameter value.
	In fact, designing EPAs is trivial in many cases when such a structure is given.
	While in parameterized complexity it is fairly common to consider the solution size as a parameter (it is sometimes called the \emph{standard parameterization}), this is not interesting in the context of efficient parameterized approximation, since this is equivalent to normal polynomial-time approximation.
	Consider, for instance, a minimization problem with an EPA with ratio $\alpha$ and additive error $f(k)$ where the parameter $k$ is the optimum solution size.
	Then, this EPA is simply a polynomial-time approximation with factor $\alpha + f(\OPT)/\OPT$.
	On the other hand, graph parameters that do not increase when a graph is copied, while popular in parameterized algorithms, are also of little use if one wants an EPA with polynomial additive error for most common graph problems, as we noted in the introduction.
	This is why, in the present work, we focus almost exclusively on parameterizations by modulator parameters.

	On the other hand, there are trivial EPAs with superpolynomial additive error for such parameterizations.
	We will say that $(\calP,\kappa)$ is \emph{fixed-parameter tractable}, if there is a computable function $f\colon \mathbb N \to \mathbb N$ and an algorithm with running time bounded by $f(\kappa(I)) \cdot \abs{I}^{\bigO(1)}$ that given an instance $I \in \Sigma^*$ outputs an optimum solution, that is, a solution $X\in \Sigma^*$ with $\calP(I,X) = \OPT_\calP(I)$.
	This is a somewhat non-standard definition in parameterized complexity, as in this field one generally assumes that an FPT algorithm is given the parameter value and sometimes even a structure witnessing this parameter value as input and also because when dealing with exact parameterized algorithms one usually considers only decision problems.
	In practice, neither of these deviations from the normal definitions should make much of a difference for typical problems, parameters, or algorithms.

	We will first show that, if a parameterized optimization problem that satisfies certain mild conditions admits an FPT algorithm, then it also admits an EPA with a ratio of $1$ and an additive error which depends on the superpolynomial part of the running time of the FPT algorithm.
	This is similar to the relationship between FPT algorithms and kernelization where it is known (see \cite{Cygan2015}, for instance) that the existence of an FPT algorithm implies a kernelization, albeit one with superpolynomial size.
	We need the additional assumption that for some constant $c$ there is a polynomial-time algorithm that outputs a solution $X$ with $\calP(I,X) \le \OPT_\calP(I) +  \abs{I}^c$ if $\calP$ is a minimization problem
	or $\calP(I,X) \ge \OPT_\calP(I) -  \abs{I}^c$ if it is a maximization problem.
	We note that this condition is trivial for most typical optimization problems (for instance, in the case of \textsc{Vertex Cover}, one may output the entire graph, or, in the case of \textsc{Independent Set}, the empty set).
	
	\begin{theorem}
		\label{thm:FPT-EPA}
		Let $(\calP,\kappa)$ be a parameterized minimization (resp. maximization) problem such that there is a polynomial-time algorithm that outputs a solution $X$ with $\calP(I,X) \le \OPT_\calP(I) +  \abs{I}^c$ (resp. $\calP(I,X) \ge \OPT_\calP(I) -  \abs{I}^c$) for some constant $c$.
		If $(\calP,\kappa)$ is FPT with a running time of $f(k)\cdot \abs{I} ^{\bigO(1)}$, then $(\calP,\kappa)$ admits an EPA with a ratio of $1$ and additive error of $f$.
	\end{theorem}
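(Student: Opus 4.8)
The plan is to run the given FPT algorithm under a fixed polynomial step budget and to fall back on the hypothesized trivial additive approximation whenever that budget is exceeded. The crucial feature—and the reason this yields a legitimate EPA—is that the construction never needs to know $\kappa(I)$, and in fact never needs to evaluate or even know the function $f$. To set the budget, first fix the constants hidden in the FPT running time: there are constants $C\ge 1$ and $d\ge 1$ such that on every instance $I$ with $k=\kappa(I)$ the FPT algorithm halts with an optimal solution within $C\cdot f(k)\cdot\abs{I}^d$ steps. Let $c$ be the constant from the assumed additive approximation, and set the budget $B(I)\coloneqq C\cdot\abs{I}^{c+d}$, which is an explicit polynomial in $\abs{I}$.

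The EPA then proceeds as follows. First I would instrument the FPT algorithm with a step counter and simulate it on $I$, aborting as soon as it exceeds $B(I)$ steps. If the simulation halts within the budget, output the solution it produces; by the paper's notion of FPT this solution is optimal, so its additive error is $0\le f(\kappa(I))$ (recall $f$ is nonnegative). Otherwise, run the assumed polynomial-time additive approximation and output its solution instead.

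Correctness in the second branch is where the win--win argument pays off. If the simulation did not halt within $B(I)$ steps, then the true number of steps exceeds $B(I)$, and combining this with the running-time bound gives $C\,f(k)\,\abs{I}^d > C\,\abs{I}^{c+d}$, hence $f(k) > \abs{I}^c$ after dividing by $C\abs{I}^d$. For a minimization problem the fallback then returns $X$ with $\calP(I,X)\le\OPT_\calP(I)+\abs{I}^c < \OPT_\calP(I)+f(\kappa(I))$, which is exactly the EPA guarantee; the maximization case is symmetric using $\calP(I,X)\ge\OPT_\calP(I)-\abs{I}^c$. Thus in both branches the additive error is at most $f(\kappa(I))$ and the ratio is $1$. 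The overall running time is polynomial: the clocked simulation costs $\Oh(B(I))$ up to per-step simulation overhead, and the fallback is polynomial by assumption.

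This is essentially a threshold trick and I do not expect a genuine obstacle. The only points needing care are (i) pinning down the constants $C$ and $d$ in the $\abs{I}^{\bigO(1)}$ factor so that $B(I)$ is a concrete polynomial and the comparison $f(k)>\abs{I}^c$ comes out in the right direction, and (ii) observing that we may halt the FPT algorithm early via a step counter without ever computing $f$ or recovering $\kappa(I)$—which is precisely what keeps the resulting algorithm within the EPA model. A minor degenerate case is $f(k)=0$ (or very small), where the FPT algorithm already runs in polynomial time and the first branch always applies.
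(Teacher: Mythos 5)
Your proposal is correct and follows essentially the same argument as the paper's proof: run the FPT algorithm under a polynomial step budget of roughly $\abs{I}^{c+d}$, return its (optimal) solution if it finishes, and otherwise fall back to the assumed additive $\abs{I}^c$-approximation, noting that exceeding the budget forces $f(\kappa(I)) \ge \abs{I}^c$ so the fallback meets the error bound. The only difference is that you make the hidden constant $C$ in the running time explicit, which the paper simply absorbs into the bound $f(k)\cdot\abs{I}^d$.
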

	\begin{proof}
		Let $f(k) \cdot \abs{I}^d$ with for a constant $d$ be the running time of the FPT algorithm for $(\calP,\kappa$).
		The EPA for $(\calP,\kappa)$ works as follows on input $I$.
		It first runs the FPT algorithm for $\abs{I}^{c+d}$ steps.
		If the FPT algorithm terminates within that number of steps, then it outputs an optimum solution, so the EPA outputs this solution.
		If this algorithm does not terminate, then the EPA computes a solution $X$ with $\calP(I,X) \le \OPT_\calP(I) +  \abs{I}^c$ (resp. $\calP(I,X) \ge \OPT_\calP(I) -  \abs{I}^c$) and outputs this solution.
		In the latter case, we know that $f(k) \cdot \abs{I}^d \ge \abs{I}^{c+d}$ and, therefore, $f(k) \ge \abs{I} ^ c$.
	\end{proof}

	Unlike in the analogous statement for kernelization that we pointed to, the converse of \cref{thm:FPT-EPA} is not true.
	Consider, for example, the \textsc{Edge Coloring} problem, in which one is given a graph and is asked to assign colors to the edges of the graph such that no two edges that share a vertex are assigned the same color.
	The goal is to minimize the number of colors used.
	Testing whether a graph can be edge colored with three colors is NP-hard~\cite{Holyer1981}, so, unless $\mathrm{P}=\mathrm{NP}$, there is no FPT algorithm for \textsc{Edge Coloring} parameterized by the number of colors.
	There is, however, an EPA with ratio and additive error $1$ for this problem because Vizing's algorithm~\cite{Misra1992} only uses $\OPT + 1$ colors.
	
	On the other hand, there are also problems that admit a polynomial (exact) kernel, but do not admit an EPA with ratio~$1$ and a polynomial additive error.
	Consider the \textsc{Tree Deletion Set} problem, where the task is to find a minimum modulator to a tree, parameterized by the solution size.
	This problem admits a polynomial kernel, but no $\OPT^{\bigO(1)}$-factor approximation, unless $\mathrm{P}=\mathrm{NP}$~\cite{Giannopoulou2016}.
	As we noted above, an EPA with polynomial additive error would be equivalent to such an approximation.

	As we noted in the introduction, the present work is closely related to the ``structural rounding'' framework introduced by Demaine et al.~\cite{Demaine2019}.
	Let $\Psi$ be a class of edit operations on graphs, that is, each element of $\Psi$ is a function that inputs and outputs a graph.
	Such a class may include vertex deletions, edge deletions, edge additions, etc.
	A graph $G'$ is \emph{$\gamma$-editable} from a graph $G$ under $\Psi$, if there is a sequence of edits $\psi_1,\ldots,\psi_k \in \Psi$ with $k\le \gamma$ such that $G' = \psi_k(\psi_{k-1}(\cdots\psi_1(G) \cdots ))$.
	The graph $G$ is \emph{$\gamma$-close} to the graph class $\calC$, if there is a graph $G'\in \calC$ such that $G'$ is $\gamma$-editable from $G$ under $\Psi$.
	A graph minimization (resp. maximization) problem $\calP$ is \emph{stable} under $\Psi$ with constant $c'$ if $\OPT_\calP(G') \le \OPT_\calP(G) + c'\gamma$ (resp. $\OPT_\calP(G') \le \OPT_\calP(G) - c'\gamma$) for any graph $G'$ that is $\gamma$-editable from $G$ under $\Psi$.
	A minimization (resp. maximization) problem $\calP$ can be \emph{structurally lifted} with respect to $\Psi$ with constant $c$ if there is a polynomial-time algorithm that given graphs $G,G'$, an edit sequence such that $G' = \psi_k(\psi_{k-1}(\cdots\psi_1(G) \cdots ))$, and a solution $S'$ for $G'$ and outputs a solution $S$ for $G$ such that $\calP(G,S) \le \calP(G',S') + ck$ (resp. $\calP(G,S) \ge \calP(G',S') - ck$).	
	Let $\calQ(\calC,\Psi)$ denote the problem of finding a minimum number of edits from $\Psi$ in an input graph to obtain a graph in $\calC$.
	
	The main theorem of Demaine et al.~\cite{Demaine2019} can be restated as guaranteeing the existence of an EPA.
	For completeness and because our statement of the theorem differs from that of Demaine et al., we include a proof.
	\begin{theorem}[Demaine et al.~\cite{Demaine2019}]
		\label{thm:struct-rounding}
		Let $\calP$ be a minimization problem (resp. maximization problem) that is stable under $\Psi$ with constant $c'$ and can structurally lifted with respect to $\Psi$ with constant~$c$.
		Suppose that $\calP$ admits an $\alpha$-approximation algorithm on the graph class $\calC$ and that $\calQ(\calC,\Psi)$ admits a $\beta$-approximation algorithm on general graphs.
		Then, $\calP$ parameterized by the closeness~$k$ of a graph to $\calC$ under $\Psi$ admits an EPA with ratio $\alpha$ and additive error of $(\alpha \beta c' + c\beta)k$ (resp. $(\frac{1}{\alpha} \beta c' + c\beta)k$).
	\end{theorem}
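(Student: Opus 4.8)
The plan is to turn the three structural properties into an explicit algorithmic pipeline and then chain the corresponding inequalities. The one point that needs care is that an EPA receives only the graph $G$, with no modulator, no edit sequence, and crucially no value of the parameter $k$. First I would invoke the $\beta$-approximation for $\calQ(\calC,\Psi)$ on $G$; since the closeness $k$ of $G$ to $\calC$ is exactly the optimum edit distance $\OPT_{\calQ(\calC,\Psi)}(G)$, this produces in polynomial time an explicit edit sequence $\psi_1,\ldots,\psi_m$ with $m \le \beta k$ whose application yields a graph $G' = \psi_m(\cdots\psi_1(G)\cdots) \in \calC$. Note that this step never references $k$: the bound $m \le \beta k$ is guaranteed automatically by the approximation ratio, so the algorithm is oblivious to the parameter while the analysis is not.

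Given $G'$, the second step is to run the $\alpha$-approximation for $\calP$ on $G' \in \calC$, obtaining a solution $S'$. The third step applies structural lifting to the tuple $(G,G',(\psi_i)_{i\le m},S')$ to produce a solution $S$ for $G$. All three subroutines run in polynomial time and their composition does too, so the resulting algorithm is a valid polynomial-time algorithm and hence a candidate EPA.

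For the analysis in the minimization case I would chain the three guarantees. Stability applied to the $m$-editable graph $G'$ gives $\OPT_\calP(G') \le \OPT_\calP(G) + c'm$; the approximation on $\calC$ gives $\calP(G',S') \le \alpha\,\OPT_\calP(G')$; and lifting gives $\calP(G,S) \le \calP(G',S') + cm$. Composing these and substituting $m \le \beta k$ yields $\calP(G,S) \le \alpha\,\OPT_\calP(G) + (\alpha c' \beta + c\beta)k$, which is exactly the claimed EPA with ratio $\alpha$ and additive error $(\alpha\beta c' + c\beta)k$. The maximization case is symmetric: the three inequalities reverse (using $\OPT_\calP(G') \ge \OPT_\calP(G) - c'm$, $\calP(G',S') \ge \frac{1}{\alpha}\OPT_\calP(G')$, and $\calP(G,S) \ge \calP(G',S') - cm$), giving $\calP(G,S) \ge \frac{1}{\alpha}\OPT_\calP(G) - (\frac{1}{\alpha}\beta c' + c\beta)k$.

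I expect the main obstacle to be conceptual rather than computational: justifying that the argument is legitimate even though the algorithm knows neither $k$ nor any witnessing structure. This is resolved by observing that the only place structure enters is the $\beta$-approximation for $\calQ(\calC,\Psi)$, which needs no prior knowledge of $k$ and whose output length is bounded against the optimum, i.e.\ against $k$ itself. A secondary point worth double-checking is the precise direction of the stability inequality in the maximization case, so that the chain of inequalities composes correctly; here one wants that $m$ edits decrease the maximization optimum by at most $c'm$, i.e.\ $\OPT_\calP(G') \ge \OPT_\calP(G) - c'm$, which is the direction the lifting and approximation steps require.
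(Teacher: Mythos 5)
Your proposal is correct and follows essentially the same route as the paper: $\beta$-approximate the edit distance to $\calC$, run the $\alpha$-approximation on the edited graph, lift the solution back, and chain the stability, approximation, and lifting inequalities with $m \le \beta k$. Your explicit handling of the maximization case (in particular, fixing the stability inequality to $\OPT_\calP(G') \ge \OPT_\calP(G) - c'm$, where the paper's stated definition has a sign slip) and your remark that the algorithm never needs to know $k$ are both sound refinements of the paper's argument, which simply declares the maximization case analogous.
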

	\begin{proof}
		Let $G$ be a graph that is $k$-close to $\calC$ under $\Psi$.
		Using the $\beta$-approximation for $\calQ(\calC,\Psi)$, our algorithm can compute $G'$, a graph that is $\beta k$-editable from $G$ under $\Psi$ along with the edits $\psi_1,\ldots,\psi_{\beta k}$.
		We then apply the $\alpha$-approximation for $\calP$ on $\calC$ to $G'$ to obtain a solution $X'$.
		We use the structural lifting algorithm to obtain a solution $X$ for $G$.
		
		Consider the case where $\calP$ is a minimization problem.
		Because of the stability of $\calP$ with constant $c'$ and the fact that $G'$ is $\beta k$-editable from $G$,
		\begin{align*}
			\OPT_\calP(G') \le \OPT_\calP(G) + \beta k c'.
		\end{align*}
		Therefore,
		\begin{align*}
			\calP(G',X') \le \alpha(\OPT_\calP(G) + \beta k c')
		\end{align*}
		Because the structural lifting algorithm we apply has constant $c$ and the number of edits is $\beta k $,
		\begin{align*}
			\calP(G,X) \le \calP(G',X') + c \beta k \le \alpha(\OPT_\calP(G) + \beta k c') + c \beta k = \alpha \OPT_\calP(G) + (\alpha \beta c' + c\beta)k.
		\end{align*}
		If $\calP$ is a maximization problem, the proof is analogous.
	\end{proof}

	As an example, consider the \textsc{Vertex Cover} problem parameterized by the size of  a minimum feedback vertex set in the input graph.
	In other words, $\Psi$ contains vertex deletion operations and $\calC$ is the class of acyclic graphs.
	Then, any feedback vertex set of size $k$ corresponds to a sequence of $k$ edits from $\Psi$ to $\calC$.
	It is easy to see that \textsc{Vertex Cover} is stable under $\Psi$ with constant $c'=0$ and that it can be structurally lifted with constant $c=1$ by simply adding the entire feedback vertex set to the solution.
	Moreover, there is a $1$-approximation for \textsc{Vertex Cover} on acyclic graphs and $\calQ(\calC,\Psi)$, i.e. the \textsc{Feedback Vertex Set} problem, has a $2$-approximation~\cite{Bafna1999,Becker1996}.
	Therefore, \cref{thm:struct-rounding} implies an EPA with ratio $1$ and additive error of $2k$.
	In \cref{sec:vc-fvs}, we will improve on this by giving an EPA with ratio $1$ and an additive error of $k$.

\section{Vertex Cover}
\label{sec:vc}
		In this section, we will consider the problems \textsc{Vertex Cover} and \textsc{Weighted Vertex Cover}.
		
		We will give several efficient parameterized approximations for these two problems parameterized by the size of a minimum modulator to certain graph classes on which they can be solved in polynomial time.
		Most of these EPAs have a ratio of $1$.
		We briefly observe that such an EPA is at the same an EPA for the \textsc{Independent Set} (IS) problem with the same parameter.
		
		\begin{observation}
			\label{obs:vc-is}
			If \textsc{(Weighted) Vertex Cover} with parameter $\kappa$ has an EPA with ratio $1$ and additive error $f$, then \textsc{(Weighted) Independent Set} with parameter $\kappa$ has an EPA with the same ratio and the same additive error.
		\end{observation}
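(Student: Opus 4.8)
The plan is to exploit the elementary complementarity between vertex covers and independent sets. For a graph $G=(V,E)$ with weight function $w\colon V\to\Qpos$, a set $X\subseteq V$ is a (weighted) vertex cover if and only if its complement $V\setminus X$ is a (weighted) independent set. Since $w(X)+w(V\setminus X)=w(V)$ and $w(V)$ is a fixed quantity determined by the instance, minimizing the weight of a vertex cover is the same as maximizing the weight of the complementary independent set; in particular $\OPT_{\mathrm{IS}}(G)=w(V)-\OPT_{\mathrm{VC}}(G)$. The unweighted case is subsumed by taking $w\equiv 1$, so that $w(V)=\abs{V}$.

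First I would observe that the parameter transfers for free. The value $\kappa(I)$ depends only on the input graph, since it is the minimum size or weight of a modulator to a fixed graph class; and because \textsc{(Weighted) Vertex Cover} and \textsc{(Weighted) Independent Set} receive the very same graph (resp.\ weighted graph) as input, the parameter value is identical for both problems on a given instance. Hence there is nothing to translate on the parameter side.

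Next I would describe the reduction itself. Given the assumed EPA $\calA$ for \textsc{(Weighted) Vertex Cover} with ratio $1$ and additive error $f$, the EPA for \textsc{(Weighted) Independent Set} runs $\calA$ on the input graph to obtain a vertex cover $X$ with $w(X)\le \OPT_{\mathrm{VC}}(G)+f(\kappa(G))$, and then outputs the independent set $S\coloneqq V\setminus X$. This is polynomial-time, as $\calA$ is and complementation is trivial, and the guarantee follows from a one-line calculation:
\[
	w(S)=w(V)-w(X)\ge w(V)-\OPT_{\mathrm{VC}}(G)-f(\kappa(G))=\OPT_{\mathrm{IS}}(G)-f(\kappa(G)),
\]
which is exactly the bound required of an EPA with ratio $1$ and additive error $f$ for the maximization problem.

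The only point that warrants care — and the reason the statement is restricted to ratio $1$ — is that the additive error carries over unchanged precisely because the multiplicative factor is one. With a ratio $\alpha>1$ the bound $w(X)\le \alpha\,\OPT_{\mathrm{VC}}(G)+f(\kappa(G))$ would translate into $w(S)\ge w(V)-\alpha\,\OPT_{\mathrm{VC}}(G)-f(\kappa(G))$, and the term $w(V)-\alpha\,\OPT_{\mathrm{VC}}(G)$ bears no clean relationship to $\OPT_{\mathrm{IS}}(G)$, so the complementarity argument does not preserve the approximation factor. Beyond flagging this, I expect no real obstacle, as the whole argument is elementary.
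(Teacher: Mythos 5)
Your proof is correct and follows essentially the same route as the paper's: both complement the vertex cover returned by the assumed EPA and use $\OPT_{\mathrm{IS}}(G)=w(V)-\OPTVC(G)$ to transfer the additive error, with the unweighted case handled via uniform weights. Your closing remark about why the argument breaks for ratio $\alpha>1$ matches the paper's own observation immediately following the statement.
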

		\begin{proof}
			We will prove the claim in the weighted case, since the unweighted case follows by taking a uniformly weighted graph.
			Note that for any weighted graph $G=(V,E)$ with weight function $w$, $\OPT_\mathrm{IS}(G) = w(V) - \OPTVC(G)$ and that the complement of any vertex cover is an independent set.
			Hence, if we output the complement of the result computed by an EPA for \textsc{Weighted Vertex Cover} with ratio $1$ and additive error $f$, we get an independent set of weight at most $w(V) - \OPTVC(G) - f(\kappa) = \OPT_\mathrm{IS}(G) - f(\kappa)$.
			Hence, we get an EPA for \textsc{Independent Set} with ratio $1$ and additive error $f$.
		\end{proof}
		
		Note that \cref{obs:vc-is} does not apply to EPAs with ratio greater than $1$.
		
		Several of our algorithms will use the well-known local ratio method.
		It is based on the following is an adaptation of what is known as the \emph{Local Ratio Theorem} or \emph{Local Ratio Lemma} (cf. e.g. \cite{BarYehuda2004}).
		For simplicity, we only state and prove the lemma for \textsc{Vertex Cover}, though it certainly applies to a much wider range of problems.
		
		\begin{lemma}
			\label{lemma:local-ratio}
			Let $\calC$ be a graph class and $\calP$ be the problem of finding a minimum-weight modulator to $\calC$. Let $\alpha,\beta \ge 1$.
			Let $G=(V,E)$ be a graph and $w,w_1,w_2 \colon V \to \Qpos$ be weight functions with $w = w_1 + w_2$.
			Let $X$ be a vertex cover in $G$
			such that $w(X) \le \alpha \OPTWVC(G,w_i) + \beta \OPT_\calP(G,w_i)$ for $i\in\{1,2\}$.
			Then, $w(X) \le \alpha \OPTWVC(G,w) + \beta \OPT_\calP(G,w)$.
		\end{lemma}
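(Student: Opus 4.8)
The plan is to combine the additivity of the weight function on the left-hand side with a \emph{superadditivity} of the two optima on the right-hand side. Since $w = w_1 + w_2$, I would first write $w(X) = w_1(X) + w_2(X)$ and then apply the per-function guarantee to each summand, bounding $w_i(X) \le \alpha \OPTWVC(G,w_i) + \beta \OPT_\calP(G,w_i)$ for $i \in \{1,2\}$. Adding the two resulting inequalities gives
\[
	w(X) \le \alpha\bigl(\OPTWVC(G,w_1) + \OPTWVC(G,w_2)\bigr) + \beta\bigl(\OPT_\calP(G,w_1) + \OPT_\calP(G,w_2)\bigr).
\]

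It then remains to absorb the two bracketed sums into the corresponding optima for $w$, i.e.\ to show $\OPTWVC(G,w_1) + \OPTWVC(G,w_2) \le \OPTWVC(G,w)$ and, analogously, $\OPT_\calP(G,w_1) + \OPT_\calP(G,w_2) \le \OPT_\calP(G,w)$. For the first inequality I would fix a single minimum-$w$-weight vertex cover $Y$; feasibility of $Y$ does not depend on the weights, so $Y$ is also a vertex cover with respect to $w_1$ and $w_2$, whence $\OPTWVC(G,w_i) \le w_i(Y)$ for each $i$. Summing these and using $w_1(Y) + w_2(Y) = w(Y) = \OPTWVC(G,w)$ yields the claim. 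The second inequality follows verbatim, replacing ``minimum-$w$-weight vertex cover'' by ``minimum-$w$-weight modulator to $\calC$''. Substituting both bounds into the displayed inequality completes the argument.

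I expect the only genuine subtlety to be this superadditivity step, and in particular the observation that \emph{one} optimal solution for $w$ can be used to bound the $w_1$- and $w_2$-optima simultaneously; this is exactly where the split $w = w_1 + w_2$ together with nonnegativity of the weights is needed, since taking a different optimal cover for each $w_i$ would not combine. Everything else is routine, and since the argument never divides by or compares $\alpha$ and $\beta$, it goes through uniformly for all $\alpha,\beta \ge 1$ with no case analysis.
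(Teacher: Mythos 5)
Your proof is correct and is essentially the paper's own argument: the paper likewise splits $w(X) = w_1(X) + w_2(X)$, applies the per-weight bounds, and then uses exactly your superadditivity observation, phrased as $w_i(X_i^*) \le w_i(X^*)$ and $w_i(Y_i^*) \le w_i(Y^*)$ for a single $w$-optimal vertex cover $X^*$ and $w$-optimal modulator $Y^*$, so that the sums telescope to $\alpha w(X^*) + \beta w(Y^*)$. The only difference is presentational: you isolate superadditivity as a standalone step, while the paper inlines it in the chain of inequalities.
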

		\begin{proof}
			Let $X^*,X_1^*,X_2^*$ be minimum-weight vertex covers in $(G,w)$, $(G,w_1)$, and $(G,w_2)$, respectively.
			Similarly, let $Y^*,Y_1^*,Y_2^*$ be minimum-weight solutions for $\calP$ for $(G,w)$, $(G,w_1)$, and $(G,w_2)$, respectively.
			For $i \in \{1,2\}$, it is clear that $w_i(X_i^*) \le w_i(X^*)$ and  $w_i(Y_i^*) \le w_i(Y^*)$.
			Then,
			\begin{align*}
				w(X) &= w_1(X) + w_2(X) \le \alpha w_1(X_1^*) + \beta w_1(Y_1^*) + \alpha w_2(X_2^*) + \beta w_2(Y_2^*)\\
				&\le \alpha w_1(X^*) + \beta w_1(Y^*) + \alpha w_2(X^*) + \beta w_2(Y^*) \\
				& = \alpha w(X^*) + \beta w(Y^*) = \alpha \OPTWVC(G,w) + \beta \OPT_\calP(G,w). \qedhere
			\end{align*}
		\end{proof}
	
		We use $\alpha(H)$ to denote the independence number of a graph $H$.
		If $\calF$ is a graph class, then we will say that a graph $G$ is \emph{$\calF$-free} if $G$ does not contain any graph in $\calF$ as an induced subgraph.
	
		\begin{theorem}
			\label{thm:F-del}
			Let $\calF$ be a graph class such that
			\begin{compactenum}[(i)]
				\item there is a polynomial-time algorithm that, given a graph $G=(V,E)$, either correctly decides that $G$ is $\calF$-free or outputs $S\subseteq V$ such that $G[S] \in \calF$,
				\item $\alpha^* \coloneqq \max_{H \in \calF} \alpha(H)$ is bounded, and
				\item \textsc{Weighted Vertex Cover} can be solved exactly in polynomial time on $\calF$-free graphs.
			\end{compactenum}
			Let $\calP$ denote the problem of finding a minimum-weight modulator to the class of $\calF$-graphs.
			Then, there is a polynomial-time algorithm that, given a weighted graph $(G,w)$, finds a vertex cover of weight at most $\OPTWVC(G,w) + \alpha^* \OPT_\calP(G,w)$.
		\end{theorem}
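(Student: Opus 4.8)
The plan is to run the local ratio method through \cref{lemma:local-ratio} with parameters $\alpha = 1$ and $\beta = \alpha^*$, organized as a recursion on the weighted graph $(G,w)$ that peels weight off induced copies of graphs in $\calF$ until the graph is $\calF$-free. I would split into three cases. If $G$ is $\calF$-free (detected by assumption~(i)), then $\OPT_\calP(G,w) = 0$ and assumption~(iii) returns an exact minimum-weight vertex cover, trivially meeting the bound. If some vertex $v$ has $w(v) = 0$, I recurse on $G - v$ and add $v$ to the returned cover for free. Otherwise $G$ is not $\calF$-free and all weights are positive, which is the interesting case.

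For the deletion case I would check the two monotonicity facts $\OPTWVC(G-v,w) \le \OPTWVC(G,w)$ and $\OPT_\calP(G-v,w) \le \OPT_\calP(G,w)$, so that the inductive bound passes through. Both are routine: deleting $v$ from an optimal vertex cover (resp.\ modulator) of $G$ yields a feasible vertex cover (resp.\ modulator) of $G - v$ of no greater weight, using for the modulator that an induced subgraph of an $\calF$-free graph is again $\calF$-free. Adding back the zero-weight vertex $v$ then keeps the cover feasible and does not change its weight.

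The crux is the weight-reduction case, and it is the step I expect to be the main obstacle. Using assumption~(i) I obtain $S$ with $G[S] \in \calF$, set $\eps \coloneqq \min_{v\in S} w(v) > 0$, and split $w = w_1 + w_2$ where $w_1$ assigns $\eps$ to every vertex of $S$ and $0$ elsewhere. I then claim that \emph{every} vertex cover $X$ of $G$ satisfies the local-ratio hypothesis for $w_1$, i.e.\ $w_1(X) \le \OPTWVC(G,w_1) + \alpha^* \OPT_\calP(G,w_1)$. The left side is at most $\eps\abs{S}$. For the right side, $X \cap S$ is a vertex cover of $G[S]$, which has at least $\abs{S} - \alpha(G[S])$ vertices (the complement of a maximum independent set of $G[S]$ is a minimum vertex cover), so $\OPTWVC(G,w_1) \ge \eps(\abs{S} - \alpha(G[S])) \ge \eps(\abs{S} - \alpha^*)$ by assumption~(ii); and any modulator to the $\calF$-free graphs must delete at least one vertex of $S$ to destroy the induced copy $G[S]\in\calF$, whence $\OPT_\calP(G,w_1) \ge \eps$. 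Adding these bounds the right side below by $\eps(\abs{S}-\alpha^*) + \alpha^*\eps = \eps\abs{S}$, establishing the inequality. This is precisely where all three hypotheses on $\calF$ enter.

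To finish, I would recurse on $(G,w_2)$ to get a vertex cover $X$ satisfying the bound for $w_2$ (the inductive hypothesis, valid because $w_2$ has strictly more zero-weight vertices than $w$), note that this same $X$ also satisfies the bound for $w_1$ by the previous paragraph, and invoke \cref{lemma:local-ratio} to conclude $w(X) \le \OPTWVC(G,w) + \alpha^* \OPT_\calP(G,w)$. Termination and polynomial running time follow from a simple potential argument: each weight-reduction step strictly increases the number of zero-weight vertices without altering $G$, and each deletion step strictly decreases $\abs{V}$, so there are $\O(\abs{V})$ recursive calls, each polynomial by assumptions~(i) and~(iii).
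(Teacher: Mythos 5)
Your proposal is correct and follows essentially the same route as the paper's own proof: the identical three-case recursion ($\calF$-free, zero-weight vertex deletion, weight reduction on an induced copy $S$ with $G[S]\in\calF$), the same decomposition $w=w_1+w_2$ with $w_1$ uniform on $S$, the same key inequality $\eps\abs{S}=\eps(\abs{S}-\alpha^*)+\alpha^*\eps\le \OPTWVC(G,w_1)+\alpha^*\OPT_\calP(G,w_1)$, and the same invocation of \cref{lemma:local-ratio} with $\alpha=1$, $\beta=\alpha^*$. You are in fact slightly more explicit than the paper on the monotonicity facts in the deletion case and on the termination/potential argument, but these are details the paper treats as routine.
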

		\begin{proof}
			The algorithm proceeds in the following manner given a graph $G=(V,E)$:
			If $G \in \calF$, then it computes a minimum vertex cover using the given algorithm.
			If there is a vertex $v$ with $w(v)=0$, then this algorithm is recursively called on $G-v$ to obtain a solution $X'$ and the algorithm outputs $X\cup \{v\}$.
			Otherwise, $G$ contains $S\subseteq V$ such that $G[S] \in \calF$
			The weights of all vertices in $S$ are reduced by $\lambda^* \coloneqq \min_{v\in S} w(v)$ and this algorithm is recursively called on the graph with the modified weights and the solution for this graph is output.
			
			Clearly, this algorithm can be implemented to run in polynomial time.
			We will show that $w(X) \le \OPTWVC(G,w) + \alpha^* \OPT_\calP(G,w)$ where $X$ is the solution output by the vertex cover by induction on the recursive calls of the algorithm, following the standard pattern for the analysis of the local ratio algorithms.
			If $G \in \calF$, then $w(X) = \OPTWVC(G,w)$.
			If $G$ contains $v\in V$ with $w(v)$, then, by induction, $w(X) = w(X') \le \OPTWVC(G',w) + \alpha^* \OPT_\calP(G',w) \le \OPTWVC(G,w) + \alpha^* \OPT_\calP(G,w)$.
			Now, suppose that $S\subseteq V$ and $G[S] \in \calF$.
			Let
			\begin{align*}
				w_1(v) \coloneqq\begin{cases}
					\lambda^*, & \text{ if } v\in S,\\
					0, & \text{ if } v\notin S,
				\end{cases} \quad \text{ and } \quad 
				w_2(v) \coloneqq\begin{cases}
					w(v) - \lambda^*, & \text{ if } v\in S,\\
					w(v), & \text{ if } v\notin S.
				\end{cases}
			\end{align*}
			Then, $w=w_1+w_2$.
			By induction, $w_2(X) \le \OPTWVC(G,w_2) + \alpha^* \OPT_\calP(G,w_2)$.
			Moreover,
			\begin{align*}
				w_1(X) = \lambda^* \abs{X\cap S} \le \lambda^*\abs{S} =
				\lambda^*(\abs{S}-\alpha^*) + \alpha^*\lambda^*
				\le \OPTWVC(G,w_1) + \alpha^* \OPT_\calP(G,w_1).
			\end{align*}
			Hence, by \cref{lemma:local-ratio}, $w(X) \le \OPTWVC(G,w) + \alpha^* \OPT_\calP(G,w)$.
		\end{proof}
	
		As an example, we consider cographs.
		Recall that a graph is a cograph if and only if it is $P_4$-free. \textsc{Weighted Vertex Cover} can be solved exactly on cographs in polynomial-time.
		There is a naive $4$-approximation for \textsc{Cograph Vertex Deletion} (CGVD), i.e., the problem of finding a minimum-weight modulator to a cograph and no better approximation algorithm is known.
		A naive application of the structural rounding framework, therefore, shows that we can find a vertex cover of weight at most $\OPTWVC + 4\OPTCGVD$ in polynomial-time.
		\cref{thm:F-del} shows that the structural rounding algorithm actually gives a vertex cover of weight at most $\OPTWVC + 2\OPTCGVD$, because $\alpha(P_4) = 2$.
		
		\begin{corollary}
			\label{cor:vc-cograph}
			There is a polynomial-time algorithm that, given a weighted graph $(G,w)$, computes a vertex cover of weight at most $\OPTWVC(G,w) + 2\OPTCGVD(G,w)$.
		\end{corollary}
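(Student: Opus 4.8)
The plan is to derive this as a direct instance of \cref{thm:F-del}, taking $\calF$ to be the one-element class $\{P_4\}$. Since a graph is a cograph precisely when it is $P_4$-free, the $\calF$-free graphs are then exactly the cographs, and the associated modulator problem $\calP$ is \textsc{Cograph Vertex Deletion}, so that $\OPT_\calP(G,w) = \OPTCGVD(G,w)$. It therefore suffices to verify the three hypotheses of the theorem for this choice of $\calF$.

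First I would confirm condition~(i): we need a polynomial-time procedure that either certifies $P_4$-freeness or returns an induced $P_4$. This is furnished by a standard cograph-recognition algorithm, which in linear time either constructs the cotree (thereby certifying that the graph is a cograph) or halts with an explicit induced $P_4$ witnessing non-membership. Condition~(ii) is immediate, since a maximum independent set of $P_4$ is given by its two endpoints, whence $\alpha^* = \alpha(P_4) = 2$ is bounded. Condition~(iii) is the fact, recalled in the preceding discussion, that \textsc{Weighted Vertex Cover} is solvable exactly in polynomial time on cographs.

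With all hypotheses met, \cref{thm:F-del} immediately yields a polynomial-time algorithm returning a vertex cover of weight at most $\OPTWVC(G,w) + \alpha^* \OPT_\calP(G,w) = \OPTWVC(G,w) + 2\OPTCGVD(G,w)$, as required. The only step carrying any substance is the $P_4$-detection of condition~(i); the remainder is simply a matter of substituting $\alpha^* = 2$ and invoking the known tractability of \textsc{Weighted Vertex Cover} on cographs. Since producing an induced $P_4$ is a well-understood byproduct of cograph recognition, I anticipate no genuine obstacle, and the entire argument amounts to instantiating the general theorem.
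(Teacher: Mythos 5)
Your proposal is correct and follows exactly the paper's own route: the paper likewise obtains \cref{cor:vc-cograph} by instantiating \cref{thm:F-del} with $\calF=\{P_4\}$, using that cographs are precisely the $P_4$-free graphs, that $\alpha(P_4)=2$, and that \textsc{Weighted Vertex Cover} is polynomial-time solvable on cographs. Your explicit verification of condition~(i) via cograph recognition (which yields an induced $P_4$ on failure) is a detail the paper leaves implicit, but the argument is the same.
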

	
		\begin{figure}
			\centering
			\begin{tikzpicture}[scale=.7]
				\def\nsc{0.5}
				\def\lheight{-.6}
				\tikzstyle{xnode}=[circle,scale=\nsc,draw];
				\begin{scope}[xshift=4cm]
					\node[xnode] (b1) at (1,2) {};
					\node[xnode] (b2) at (0,.5) {};
					\node[xnode] (b3) at (2,.5) {};
					\node[xnode] (b4) at (1,3) {};
					\node[xnode] (b5) at (-.75,-.25) {};
					\node[xnode] (b6) at (2.5,-.25) {};
					\draw (b4) -- (b1) -- (b2) -- (b3) -- (b1);
					\draw (b2) -- (b5);
					\draw (b3) -- (b6);
					\node[] () at (1,\lheight) {net};
				\end{scope}
			
				\begin{scope}[xshift=7.5cm]
					\node[xnode] (c1) at (1,2.5) {};
					\node[xnode] (c2) at (0,1.5) {};
					\node[xnode] (c3) at (2,1.5) {};
					\node[xnode] (c4) at (1,.5) {};
					\node[xnode] (c5) at (-.5,.5) {};
					\node[xnode] (c6) at (2.5,.5) {};
					\draw (c2) -- (c1) -- (c3) -- (c2) -- (c5) -- (c4) -- (c2);
					\draw (c4) -- (c3) -- (c6) -- (c4);
					\node[] () at (1,\lheight) {tent};
				\end{scope}
			
				\begin{scope}[xshift=10.5cm]
					\node[xnode] (d1) at (0,2.5) {};
					\node[xnode] (d2) at (2,2.5) {};
					\node[xnode] (d3) at (2,0.5) {};
					\node[xnode] (d4) at (0,0.5) {};
					\draw (d1) -- (d2) -- (d3) -- (d4) -- (d1);
					\node[] () at (1,\lheight) {$C_4$};
				\end{scope}
			
				\begin{scope}[xshift=13cm]
					\node[xnode] (e1) at (0,2.5) {};
					\node[xnode] (e2) at (1,3) {};
					\node[xnode] (e3) at (2,2.5) {};
					\node[xnode] (e4) at (2,0.5) {};
					\node[xnode] (e5) at (0,0.5) {};
					\draw (e1) -- (e2) -- (e3) -- (e4) -- (e5) -- (e1);
					\node[] () at (1,\lheight) {$C_5$};
				\end{scope}
			
				\begin{scope}[xshift=15.5cm]
					\node[xnode] (a1) at (1,1) {};
					\node[xnode] (a2) at (1,2) {};
					\node[xnode] (a3) at (0,.5) {};
					\node[xnode] (a4) at (2,.5) {};
					\node[xnode] (a5) at (1,1.5) {};
					\node[xnode] (a6) at (0.5,.75) {};
					\node[xnode] (a7) at (1.5,.75) {};
					\draw (a1) -- (a5) -- (a2);
					\draw (a1) -- (a6) -- (a3);
					\draw (a1) -- (a7) -- (a4);
					\node[] () at (1,\lheight) {long claw};
				\end{scope}
			
				\begin{scope}[xshift=18.5cm]
					\node[xnode] (c1) at (1,2.5) {};
					\node[xnode] (c2) at (0,1.5) {};
					\node[xnode] (c3) at (2,1.5) {};
					\node[xnode] (c4) at (0.5,.5) {};
					\node[xnode] (c7) at (1.5,.5) {};
					\node[xnode] (c5) at (-.5,.5) {};
					\node[xnode] (c6) at (2.5,.5) {};
					\draw (c2) -- (c1) -- (c3) -- (c2) -- (c5) -- (c4) -- (c2);
					\draw (c4) -- (c3) -- (c6) -- (c7) -- (c4);
					\draw (c3) -- (c7) -- (c2);
					\node[] () at (1,\lheight) {rising sun};
				\end{scope}
			
				\begin{scope}[xshift=22.5cm]
					\node[xnode] (c1) at (1,1.5) {};
					\node[xnode] (c2) at (0,1.5) {};
					\node[xnode] (c3) at (-1,1.5) {};
					\node[xnode] (c4) at (2,1.5) {};
					\node[xnode] (c5) at (3,1.5) {};
					\node[xnode] (c6) at (1,2.5) {};
					\node[xnode] (c7) at (1,.5) {};
					\draw (c3) -- (c2) -- (c1) -- (c4) -- (c5);
					\draw (c6) -- (c1);
					\draw (c7) -- (c3);
					\draw (c7) -- (c2);
					\draw (c7) -- (c1);
					\draw (c7) -- (c4);
					\draw (c7) -- (c5);
					\node[] () at (1,\lheight) {whipping top};
				\end{scope}
	
			\end{tikzpicture}
			\caption{Small forbidden induced subgraphs for interval graphs.}
			\label{fig:forbidden-int}
		\end{figure}
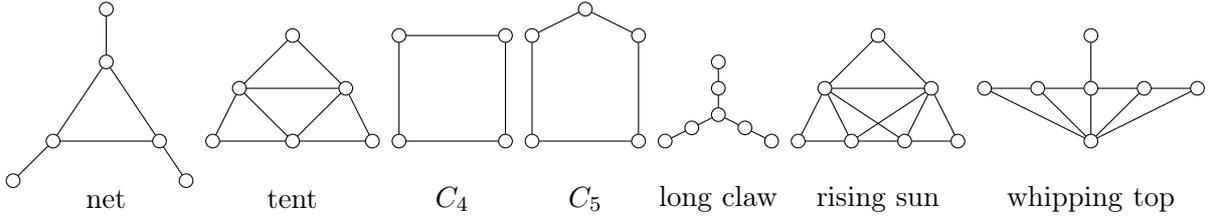
	
		Next, we consider interval graphs.
		\cref{thm:F-del} is not directly applicable for the parameter $\OPTIVD$, the weight of a minimum modulator to an interval graph, because the set of minimal forbidden induced subgraphs for interval graphs is infinite and has unbounded independence number.
		Nevertheless, we are able to find a vertex cover of weight at most $\OPTWVC + 4 \OPTIVD$ by considering a superset of interval graphs that can be characterized by a finite number of forbidden induced subgraphs that each have independence number at most $4$.
		The main challenge is showing that \textsc{Weighted Vertex Cover} can solved exactly in polynomial time on this superset of interval graphs.
		We will make use of arguments in Cao's fixed-parameter tractable algorithm for \textsc{Interval Vertex Deletion}~\cite{Cao2016}.
		We first require some terminology.
		
		Let $G=(V,E)$ be a graph.
		A vertex set $M\subseteq V$ is a \emph{module} if $N(v)\setminus M = N(v')\setminus M$ for all $v,v'\in M$.
		Singleton sets, $V$, and the empty set are \emph{trivial modules}, while all other modules are called \emph{non-trivial}.
		The graph is \emph{prime} if it has no non-trivial modules.
		For any two disjoint modules $M$ and $M'$, either all pairs $v\in M$ and $v'\in M'$ are adjacent or none of them are.
		If $M_1,\ldots,M_p$ constitute a partition of $V$ and each $M_i$ is a module, then this partition is a \emph{modular partition}.
		The graph $Q$ is a \emph{quotient graph} of $G$, if the vertices of $Q$ are in a one-to-one correspondence with the modules of a modular partition of $G$ such that two vertices in $Q$ are adjacent if and only if all pairs of vertices in the corresponding modules are adjacent.
		The following statement is a classic result:
		If both $G$ and its complement are connected, then $G$ has a prime quotient graph~\cite{Gallai1967}.
		A graph $H$ is a \emph{clique decomposition} of $G$ if the vertices of $H$ are in one-to-one correspondence with the maximal cliques of $G$ and, for any vertex $v$ of $G$, the vertices of $H$ corresponding to cliques that contain $v$ induce a connected subgraph of $H$.
		
		Let $\calF=\{\text{net}, \text{tent}, \text{rising sun}, \text{long claw}, \text{whipping top}, C_4, C_5\}$ (see~\Cref{fig:forbidden-int}·).
		By simple inspection we can observe that $\max_{H \in \calF} \alpha(H) = 4$.
		Moreover, it is known~\cite{Lekkeikerker1962} that all graphs in $\calF$ are forbidden for interval graphs, though there more minimal forbidden graphs.
		A \emph{caterpillar} is a tree graph consisting of a single path and possibly leaves attached to some of the vertices in the path.
		An \emph{olive ring} is a graph consisting of a single cycle and possibly leaves attached to some of the vertices in the cycle.
		
		\begin{theorem}[Cao~\cite{Cao2016}]
			\label{thm:Cao}
			A prime $\calF$-free graph $G$ has a clique decomposition that is either a caterpillar or an olive ring.
			This decomposition can be computed in linear time.
		\end{theorem}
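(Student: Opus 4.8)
The plan is to establish the dichotomy by first asking which tree- or cycle-like \emph{model} can carry the maximal cliques of $G$, and then ruling out every shape other than a caterpillar or an olive ring using $\calF$-freeness together with primality. Recall that a graph is an interval graph precisely when its maximal cliques admit a \emph{clique path}, i.e.\ a clique decomposition whose underlying graph is a path; a caterpillar and an olive ring are the two minimal relaxations of a path, obtained respectively by attaching pendant cliques to a path-shaped spine, or by closing the spine into a single cycle. I would split the argument according to whether $G$ is chordal.

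First I would treat the chordal case. A chordal graph always admits a \emph{clique tree} $T$: a clique decomposition whose underlying graph is a tree, so that for every vertex $v$ the cliques containing $v$ induce a subtree. It suffices to show that $T$ can be chosen to be a caterpillar, i.e.\ that deleting the leaf-cliques of $T$ leaves a path. Suppose not; then $T$ contains a node from which three pairwise internally disjoint branches of nontrivial length emanate. Choosing representative vertices of $G$ deep inside the three branches, the failure of consecutiveness manifests as an asteroidal triple, and I would argue that the induced subgraph on a bounded neighborhood of this triple must contain one of the graphs of $\calF$ (concretely net, tent, rising sun, long claw, or whipping top in this chordal regime) as an induced subgraph. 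Primality is exactly what guarantees that this witness appears as an \emph{induced} subgraph of bounded size, rather than being absorbed into a module where twins could otherwise hide the obstruction; this is what allows a \emph{finite} family $\calF$ to suffice.

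The non-chordal case is handled symmetrically. Since $C_4,C_5\in\calF$, every hole of $G$ has length at least $6$; fix such an induced cycle $C$. The maximal cliques meeting $C$ inherit a cyclic order, which I would promote to the cycle of an olive ring. The work is to show, again using $\calF$-freeness and primality, that (i) a single hole governs the global structure, so the model has exactly one cycle, and (ii) every clique off the cycle attaches as a pendant, so no secondary branching or second hole occurs; a violation of either would once more expose a member of $\calF$ as an induced subgraph. Finally, the linear-time claim follows because, once the target shape is known to be a caterpillar or an olive ring, the decomposition can be assembled in a single pass from the linear-time clique-tree / PQ-tree (equivalently, modular-decomposition) machinery.

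I expect the main obstacle to be the combinatorial core shared by both cases: translating a forbidden shape of the model (a third long branch in the tree, or a second hole or long off-cycle path) into a concrete member of $\calF$ realized as an \emph{induced} subgraph, and verifying inducedness in each configuration. This is precisely where the seven specific graphs are pinned down and where primality must be invoked to control modules, and it is the step that resists a short uniform argument.
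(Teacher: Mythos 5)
This theorem is not proved in the paper at all: it is imported verbatim from Cao~\cite{Cao2016}, so there is no internal proof to match your attempt against. The only question is whether your outline itself constitutes a proof, and it does not. In both branches of your case analysis you stop precisely where the content of the theorem lies. In the chordal case, the step ``the induced subgraph on a bounded neighborhood of this triple must contain one of the graphs of $\calF$'' is the hard direction of the theorem restated, not an argument for it: by Lekkerkerker--Boland, chordal non-interval graphs contain asteroidal triples, but the minimal chordal non-interval graphs form \emph{infinite} families, so an asteroidal triple does not in general yield a bounded-size witness. The entire point of Cao's result is that primality truncates these infinite obstructions down to the finite family $\calF$, and establishing how modular structure does this is a lengthy structural analysis (it is the core of Cao's paper), which you invoke as a black box. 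The same applies to your non-chordal case, where both claims (i) and (ii) are announced as ``the work to be done'' and never carried out. You candidly flag this yourself in your final paragraph, which is effectively an admission that the proof is missing rather than a proof.

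Two further points would need repair even as an outline. First, the negation in the chordal case is handled incorrectly: the failure of ``some clique tree of $G$ is a caterpillar'' means that \emph{every} clique tree has a branching node, so you cannot fix one tree $T$, exhibit a bad node in it, and conclude; you must argue across all clique trees (or work with a canonical one and justify that choice). Second, the linear-time claim does not ``follow'' from clique-tree or PQ-tree machinery: in the olive-ring case the graph is not chordal, so that machinery is not even applicable, and producing the decomposition in linear time is itself a substantial algorithmic contribution of Cao's paper. As it stands, your proposal is a reasonable roadmap of what Cao proves, but every step that distinguishes the theorem from a triviality is deferred.
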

	
		This implies a fairly straightforward polynomial-time algorithm for \textsc{Weighted Vertex Cover} on $\calF$-free graphs.
	
		\begin{lemma}
			\textsc{Vertex Cover} can be solved exactly in polynomial time on $\calF$-free graphs.
		\end{lemma}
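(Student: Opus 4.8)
The plan is to reduce, via modular decomposition, to the prime case governed by \cref{thm:Cao}, and then to solve \emph{weighted} independent set on a graph that comes equipped with a caterpillar or olive-ring clique decomposition; since this proves the more general weighted statement it in particular establishes the lemma. Because a minimum-weight vertex cover is exactly the complement of a maximum-weight independent set (so $\OPTWVC(G,w) = w(V) - \alpha_w(G)$, writing $\alpha_w$ for the weighted independence number), it suffices to compute $\alpha_w(G)$ in polynomial time for a weighted $\calF$-free graph $G$. First I would recurse over the modular decomposition tree, which can be computed in polynomial time. If $G$ is disconnected with components $C_1,\dots,C_p$, then $\alpha_w(G)=\sum_i \alpha_w(C_i)$; if $\overline{G}$ is disconnected, so that $G$ is the join of $G_1,\dots,G_p$, then every independent set lies inside a single part and $\alpha_w(G)=\max_i \alpha_w(G_i)$. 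In the remaining case both $G$ and $\overline{G}$ are connected, and by the classic result of Gallai quoted above $G$ has a prime quotient graph $Q$ under its maximal modular partition $M_1,\dots,M_p$. Assigning to each quotient vertex the weight $\alpha_w(G[M_i])$ (computed recursively, the recursion bottoming out at singleton modules), the value $\alpha_w(G)$ equals the maximum weight of an independent set of $Q$ under these weights: two modules that are fully adjacent in $Q$ contribute mutually exclusive vertices, whereas non-adjacent modules can be combined freely, so independent sets of $G$ correspond exactly to weighted independent sets of $Q$.

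It remains to handle the prime quotient $Q$. Selecting one representative vertex from each module realises $Q$ as an induced subgraph of $G$, so $Q$ is again $\calF$-free, and it is prime by construction; hence \cref{thm:Cao} applies and yields, in linear time, a clique decomposition $H$ of $Q$ that is either a caterpillar or an olive ring. The key structural fact I would exploit is that every bag of $H$ is a clique of $Q$, so any independent set meets it in at most one vertex. Consequently a dynamic program over $H$ needs only $\bigO(\omega(Q))$ states per bag --- recording which single vertex of the bag (or none) is selected --- instead of the $2^{\omega(Q)}$ states of a generic treewidth dynamic program, even though $\omega(Q)$ may be large.

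When $H$ is a caterpillar it is a clique tree of the chordal graph $Q$, and rooting it lets me combine the state of each bag with the states of its children across their shared separator cliques, yielding $\alpha_w(Q)$ in polynomial time. When $H$ is an olive ring it is unicyclic; here I would pick one bag $K$ on its unique cycle and branch over the $\bigO(\omega(Q))$ possible selections inside $K$. Fixing this choice forces the selection status of the separator vertices shared between $K$ and its neighbours, and deleting $K$ turns the host into a forest, on which the tree dynamic program above applies; taking the best outcome over all branches gives $\alpha_w(Q)$. I expect the main obstacle to lie in the bookkeeping of this dynamic program: vertices shared between adjacent bags must be counted exactly once and their chosen/forbidden status kept consistent across every separator, and in the olive-ring case one must verify that conditioning on the vertex selected in $K$ genuinely decouples the two sides of the cut cycle so that the forest computation is correct.
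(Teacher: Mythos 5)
Your proposal is correct and follows essentially the same route as the paper's proof: recurse on the modular decomposition (disconnected / co-disconnected / prime quotient cases), apply \cref{thm:Cao} to the prime quotient, solve the caterpillar case via chordality, and handle the olive ring by branching over the $\bigO(\omega(Q))$ choices within one clique on the cycle to reduce to the tree case. The only differences are presentational --- you work with weighted independent sets rather than vertex covers (equivalent by complementation, and your quotient weights $\alpha_w(G[M_i])$ coincide with the paper's $w(M)-w(X_M)$), and you spell out an explicit clique-tree dynamic program where the paper simply invokes known polynomial-time algorithms for \textsc{Weighted Vertex Cover} on chordal graphs.
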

		\begin{proof}
			The algorithm is based on the following well-known fact:
			For any module $M$ and minimum-weight vertex cover $X$ in a graph $G$, $X$ must contain either all of $M$ or a minimum-weight vertex cover in $G[M]$.
			
			The algorithm then proceeds as follows:
			If the input graph $G$ is disconnected, we recursively solve each connected component separately and output the union of the solutions.
			If the complement $\overline{G}$ of $G$ is disconnected, and $V_1$ and $C_2$ are the vertex sets of the two connected components of $\overline{G}$, we recursively solve each of the $G[V_1]$ and $G[V_2]$ separately, obtaining the solutions $X_1$ and $X_2$, and output $X_j$ with $j = \mathrm{arg\,min}_{i\in\{1,2\}} w(X_i) + w(V_{3-i})$.
			
			Otherwise, $G$ has a prime quotient graph $Q$.
			We compute $Q$ and the corresponding modular partition of $G$ and solve $G[M]$ for each module $M$ of the partition to obtain a solution $X_M$.
			For each vertex $v$ of $Q$, we then let $w(v) \coloneqq w(M) - w(X_M)$ where $M$ is the module corresponding to $M$ and we now consider the weighted graph $(Q,w)$.
			We use \cref{thm:Cao} to compute a clique decomposition.
			If this decomposition is a caterpillar, then $Q$ is chordal and, hence, we can solve \textsc{Weighted Vertex Cover} in linear time on $Q$.
			If the decomposition is an olive ring, then let $C$ be an arbitrary maximal clique of $Q$ located on the cycle of the decomposition.
			Any vertex cover must contain all or all but one of the vertices in $C$.
			Hence, we compute $\abs{C} + 1$ solutions, depending on whether every vertex in $C$ is in the solution and, if not, which one is omitted.
			In each case, we delete all vertices in $C$ and all vertices adjacent to the vertex omitted.
			The remaining graph is again chordal and, therefore, known polynomial-time algorithms can be applied.
		\end{proof}
	
		\begin{corollary}
			\label{cor:vc-interval}
			There is a polynomial-time algorithm that, given a weighted graph $(G,w)$, computes a vertex cover of weight at most $\OPTWVC(G,w) + 4\OPTIVD(G,w)$.
		\end{corollary}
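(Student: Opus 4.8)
The plan is to apply \cref{thm:F-del} to the finite family $\calF = \{\text{net}, \text{tent}, \text{rising sun}, \text{long claw}, \text{whipping top}, C_4, C_5\}$ depicted in \cref{fig:forbidden-int}, obtain a vertex cover whose weight is governed by the minimum-weight modulator to $\calF$-free graphs, and then compare that modulator parameter with $\OPTIVD$. Concretely, I would verify the three hypotheses of \cref{thm:F-del} for this $\calF$ and invoke the theorem to produce, in polynomial time, a vertex cover of weight at most $\OPTWVC(G,w) + \alpha^* \OPT_\calP(G,w)$, where $\calP$ denotes the problem of finding a minimum-weight modulator to $\calF$-free graphs.

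For hypothesis (i), since $\calF$ is finite and every graph in it has at most seven vertices, a brute-force search over all vertex subsets of size at most seven decides $\calF$-freeness and, where applicable, returns a set $S$ with $G[S]\in\calF$ in time $\Oh(n^7)$, hence polynomially. Hypothesis (ii) holds with $\alpha^* = 4$, as already observed by inspecting the seven graphs. Hypothesis (iii) is precisely the content of the preceding lemma, which establishes via Cao's decomposition (\cref{thm:Cao}) that \textsc{Weighted Vertex Cover} is solvable exactly in polynomial time on $\calF$-free graphs. With all three hypotheses in place, \cref{thm:F-del} yields the bound $\OPTWVC(G,w) + 4\,\OPT_\calP(G,w)$.

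The final step, and the only genuinely new observation needed here, is to bound $\OPT_\calP(G,w)$ by $\OPTIVD(G,w)$. This follows from a class inclusion: it is known~\cite{Lekkeikerker1962} that every graph in $\calF$ is a forbidden induced subgraph for interval graphs, so every interval graph is $\calF$-free. Consequently, any modulator $M$ to interval graphs (i.e.\ with $G-M$ an interval graph) is also a modulator to $\calF$-free graphs, whence the minimum weight of the latter cannot exceed that of the former: $\OPT_\calP(G,w) \le \OPTIVD(G,w)$. Substituting into the bound from \cref{thm:F-del} gives a vertex cover of weight at most $\OPTWVC(G,w) + 4\,\OPTIVD(G,w)$, as desired.

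I expect essentially no obstacle in the argument itself, since the hard part---polynomial-time solvability of \textsc{Weighted Vertex Cover} on the superclass of $\calF$-free graphs, i.e.\ hypothesis (iii)---has already been discharged in the lemma. The only point requiring care is the \emph{direction} of the inequality $\OPT_\calP \le \OPTIVD$: because we delete vertices to reach the larger class of $\calF$-free graphs rather than the class of interval graphs, fewer (or cheaper) deletions suffice, which is exactly what makes the factor-$4$ additive term legitimate for the parameter $\OPTIVD$.
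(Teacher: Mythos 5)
Your proposal is correct and follows essentially the same route as the paper: instantiate \cref{thm:F-del} with the finite family $\calF$ from \cref{fig:forbidden-int} (using the preceding lemma for hypothesis~(iii) and $\alpha^*=4$ for hypothesis~(ii)), then use the fact that interval graphs are $\calF$-free to conclude $\OPT_\calP(G,w) \le \OPTIVD(G,w)$ and hence the claimed bound. The paper leaves this chain implicit after establishing the lemma, but your argument, including the directionality of the modulator comparison, is exactly the intended one.
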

	
		\subsection{Cluster and co-cluster vertex deletion}
		\label{sec:vc-cvd}
		We now consider the problem of computing vertex covers in graphs that are close to being cluster graphs.
		Recall that a graph is a \emph{cluster graph} if every connected component is a clique or, equivalently, if it does not contain $P_3$, the path on three vertices, as an induced subgraph.
		The problem of finding a minimum-weight modulator to a cluster graph is called \textsc{Cluster Vertex Deletion} (CVD).
		We will show that there is a polynomial-time algorithm that outputs a vertex cover of weight at most $\OPTWVC + \OPTCVD$.
		Our algorithm is very similar to the $2$-approximation for \textsc{Cluster Vertex Deletion} given by Aprile et al.~\cite{Aprile2023}.
		
		Let $G$ be a graph and $\alpha \ge 1$.
		Let $H=(W,F)$ be an induced subgraph of $G$ and $w_H \colon W \to \Qpos$ a weight function on $H$.
		Note that $w_H$ is not necessarily related to any weight function on $G$ in any way.
		We will say that $(H,w_H)$ is \emph{strongly $\alpha$-good} in $G$ if $w_H$ is not uniformly $0$ and $w_H(W) \le \alpha \cdot \OPT_{\mathrm{CVD}}(H,w_H)$.
		We will call $(H,w_H)$ \emph{centrally $\alpha$-good} in $G$ with respect to $v_0 \in W$ if \begin{inparaenum}[(i)]
			\item $N_G[v_0] \subseteq W$,
			\item $w_H(v) \ge 1$ for all vertices $v \in N_G[v_0]$, and
			\item $w_H(W) \le \alpha \cdot \OPT_{\mathrm{CVD}}(H,w_H) +1$.
		\end{inparaenum} 
	
		\begin{theorem}[Aprile et al.~\cite{Aprile2023}]
			\label{thm:Aprile}
			There is a polynomial-time algorithm that given a true twin-free graph~$G$ outputs an induced subgraph $H=(W,F)$ of $G$ and a weight function $w_H\colon W \to \Qpos$ on $H$ such that $(H,w_H)$ is strongly $2$-good in $G$ or centrally $2$-good in $G$.
		\end{theorem}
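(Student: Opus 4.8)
The plan is to prove the dichotomy by a local analysis of the closed neighborhood of a carefully chosen vertex, certifying the factor-$2$ guarantee through combinatorial lower bounds on the integral optimum $\OPTCVD$ rather than through an LP (the integral optimum can beat the natural $P_3$-covering LP, e.g.\ $\OPTCVD(C_4)=2$ while the LP value is only $\tfrac43$). First I would preprocess so that all weights are positive and record the key structural identity for a closed neighborhood: since every vertex of $N_G[v_0]$ is adjacent to $v_0$ and cluster vertex deletion only removes vertices, any solution that keeps $v_0$ leaves a connected graph, hence a single clique through $v_0$. Writing $\omega_{v_0}$ for the maximum weight of a clique of $G[N_G[v_0]]$ containing $v_0$, this gives $\OPTCVD(G[N_G[v_0]],w) = \min\{\,w(N_G[v_0]) - \omega_{v_0},\ w(v_0) + \OPTCVD(G[N_G(v_0)],w)\,\}$, i.e.\ one either deletes $v_0$ or prunes its neighborhood down to one clique. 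This identity drives the case analysis.

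Next I would set up the two cases. The clean case is when there is an induced $C_4$: with unit weights a $C_4$ satisfies $w_H(W)=4 = 2\cdot\OPTCVD(C_4)$, so it is strongly $2$-good and can be output directly. Otherwise the graph is $C_4$-free, and I would try to make the closed neighborhood, or a slightly larger induced subgraph on a set $W \supseteq N_G[v_0]$, centrally $2$-good with respect to $v_0$. The freedom to take $W$ strictly larger than $N_G[v_0]$ and to scale $w_H(v_0)$ is essential; already a single induced $P_3$ is centrally $2$-good at its midpoint since $3 \le 2\cdot 1 + 1$, which illustrates the role of the additive $+1$. More generally one wants to construct $w_H$ so that \emph{both} options in the identity above cost at least $(w_H(W)-1)/2$, which immediately yields $w_H(W) \le 2\,\OPTCVD(H,w_H)+1$. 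Balancing the two options forces $w_H(v_0)$ to be roughly half of $w_H(W)$ while keeping every clique through $v_0$ light (formally $\omega_{v_0} \le (w_H(W)+1)/2$), and the $C_4$-free and true-twin-free hypotheses are what make such a balanced assignment possible.

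The main obstacle is exactly this balancing in the $C_4$-free branch, and in particular ruling out neighborhoods dominated by a heavy clique through $v_0$, where deleting $v_0$ is cheap and no centrally good weighting can exist. Here true-twin-freeness is indispensable: for any two vertices $u,v_0$ of a common clique we have $N[u] \neq N[v_0]$, which produces a vertex seen by exactly one of them and hence an induced $P_3$ leaving the clique, so no single clique can swallow the whole neighborhood. I expect the bulk of the work to be a careful combinatorial argument that, after choosing $v_0$ appropriately (e.g.\ with an inclusion-minimal closed neighborhood) and possibly enlarging $W$, one of the two good configurations must occur, together with a verification that the subgraph and weight function are produced in polynomial time. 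The factor being exactly $2$ rather than the naive $3$ obtained by charging whole $P_3$'s is precisely what this extra structural work buys.
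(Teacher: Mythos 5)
First, a point of reference: the paper contains no proof of \cref{thm:Aprile} at all. The statement is imported verbatim from Aprile et al.~\cite{Aprile2023} and used as a black box inside \cref{alg:cvd} (\cref{l:cvd-call-lemma}), so your attempt has to be measured against the proof in that reference, which is a lengthy structural case analysis of the neighborhood of a carefully chosen vertex.

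Measured that way, your proposal has a genuine gap: everything past the easy $C_4$ case is a plan rather than a proof, and you concede this yourself (``I expect the bulk of the work to be a careful combinatorial argument\ldots''). Two specific steps fail as stated. First, your balancing constraints are mutually inconsistent: central $2$-goodness requires $w_H(v)\ge 1$ for every $v\in N_G[v_0]$, so any maximum-weight clique through $v_0$ has weight $\omega_{v_0}\ge w_H(v_0)+1$ as soon as $v_0$ has a neighbor, and therefore ``$w_H(v_0)$ roughly half of $w_H(W)$'' and ``$\omega_{v_0}\le (w_H(W)+1)/2$'' cannot hold simultaneously. Making them compatible forces $w_H(v_0)\le (w_H(W)-1)/2$, and then the ``delete $v_0$'' branch is cheap unless $\OPTCVD(G[N_G(v_0)],w_H)$ is large---a quantity for which your argument supplies no lower bound; pushing your inequalities to equality shows the plan only closes when the heaviest clique through $v_0$ is a single edge, i.e.\ essentially when $N_G(v_0)$ is an independent set, so the general $C_4$-free case (which is the entire content of the theorem) remains untouched. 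Second, you call enlarging $W$ beyond $N_G[v_0]$ ``essential'', but your keep-or-delete identity is valid only for $H=G[N_G[v_0]]$: once $W\supsetneq N_G[v_0]$, keeping $v_0$ no longer forces deleting everything outside a single clique through $v_0$, since vertices of $W$ non-adjacent to $v_0$ can survive in other clusters, so the identity driving your analysis no longer applies. Note also that the ``keep $v_0$'' cost equals a minimum-weight vertex cover of the complement of $G[N_G(v_0)]$, so any proof along your lines must control that quantity and $\OPTCVD(G[N_G(v_0)],w_H)$ simultaneously; in \cite{Aprile2023} this is exactly where the heavy lifting happens, via an extended case analysis that uses twin-freeness to produce external vertices distinguishing neighbors of $v_0$, and no such machinery (or substitute for it) appears in your sketch. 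Your correct observations---the $C_4$ certificate, the role of the additive $+1$ on a $P_3$, and the fact that twin-freeness prevents a clique from swallowing a closed neighborhood---are the starting points of that analysis, not the analysis itself.
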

	
		\begin{algorithm}[t]
			\DontPrintSemicolon
			\SetKwInOut{Input}{input}\SetKwInOut{Output}{output}
			\Input{A graph $G=(V,E)$ with a weight function $w \colon V \to \Qpos$}
			\Output{A vertex cover $X$ of $G$ with weight at most $\OPT_{\mathrm{WVC}}(G,w) + \OPT_{\mathrm{CVD}}(G,w)$}
			\If{$G$ is edgeless}
			{
				\Return{$\emptyset$} \label{l:output-base}\; 
			}
			\If{$G$ contains a vertex $v$ with $w(v)=0$}{
				$G' \leftarrow G  - v$\;
				Call this algorithm on $(G',w)$ to obtain a vertex cover $X'$\;
				\eIf{$X'$ is a vertex cover in $G$}{\Return{$X'$}  \label{l:output-w0-notadded}}{\Return{$X' \cup \{v\}$}  \label{l:output-w0-added}}
			}
			\If{$G$ contains true twins $v$ and $v'$ with $w(v) \le w(v')$\label{l:ttwins1} }
			{
				$G' \leftarrow G - v$\;\label{l:ttwins2} 
				Call this algorithm on $(G',w)$ to obtain a vertex cover $X'$\;
				\Return{$X' \cup \{v\}$}\;  \label{l:output-twins}
			}
			Apply the algorithm in \cref{thm:Aprile} to find a weighted induced subgraph $(H=(W,F),w_H)$ of $G$ that is strongly $2$-good or centrally $2$-good in $G$.\label{l:cvd-call-lemma}\;
			$\lambda ^* \leftarrow \max \{ \lambda \in \Qpos \mid \forall v \in W \colon w(v) - \lambda w_H(v) \ge 0\}$\;
			$w'(v) \leftarrow w(v) - \lambda^* w_H(v)$ for all $v\in W$\;
			$w'(v) \leftarrow w(v)$ for all $v \in V\setminus W$\;
			Call this algorithm on $(G,w')$ to obtain a vertex cover $X'$\;
			\Return{$X'$}  \label{l:output-decomposition}
			\caption{}
			\label{alg:cvd}
		\end{algorithm}
	
		Our efficient parameterized approximation algorithm for \textsc{Vertex Cover} parameterized by $\OPTCVD$ is presented in \Cref{alg:cvd}.
		
		\begin{theorem}
			\label{thm:vc-cvd}
			There is a polynomial-time algorithm that, given a weighted graph $(G,w)$, computes a vertex cover of weight at most $\OPTWVC(G,w) + \OPTCVD(G,w)$.
		\end{theorem}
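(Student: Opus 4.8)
The plan is to analyze \Cref{alg:cvd} as a local-ratio recursion and to prove, by induction on the pair $(\abs{V}, \abs{\{v : w(v) > 0\}})$ ordered lexicographically, the slightly stronger invariant that the algorithm returns a \emph{minimal} vertex cover $X$ of $G$ with $w(X) \le \OPTWVC(G,w) + \OPTCVD(G,w)$. Carrying minimality through the recursion is precisely what will make the final case work, so I would verify it in every branch. The base case (edgeless $G$, output $\emptyset$) is immediate. For the branches that delete a vertex $v$ and recurse on $G-v$, I would first record two monotonicity facts: deleting a vertex never increases $\OPTWVC$ or $\OPTCVD$, since restricting a cover or intersecting a cluster-modulator with $V(G-v)$ only helps. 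In the zero-weight branch the returned set has the same weight as $X'$, so the bound transfers directly, and minimality is easy to check because any edge witnessing removability of a vertex already lies in $G-v$.

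For the true-twin branch I would use the standard exchange argument: if $v,v'$ are true twins with $w(v)\le w(v')$, then in any minimum-weight cover avoiding $v$ we may swap $v'$ for $v$ (their closed neighbourhoods coincide, so the common neighbours still cover the relevant edges), yielding an optimum cover containing $v$ and hence $\OPTWVC(G-v,w) + w(v) \le \OPTWVC(G,w)$. Combined with $\OPTCVD(G-v,w)\le\OPTCVD(G,w)$ and the inductive bound for $X'$, this gives $w(X'\cup\{v\}) \le \OPTWVC(G,w)+\OPTCVD(G,w)$. I would also check that $X'\cup\{v\}$ remains minimal: if $v'\in X'$ then its private neighbour in $G-v$ is a common neighbour of $v$, which certifies that $v$ is non-removable.

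The heart of the proof is the decomposition branch. Writing $w_1 = \lambda^* w_H$ on $W$ and $0$ elsewhere, and $w_2 = w'$, we have $w = w_1 + w_2$, so I would apply \Cref{lemma:local-ratio} with $\alpha=\beta=1$; it then suffices to bound $w_i(X)$ by $\OPTWVC(G,w_i)+\OPTCVD(G,w_i)$ for $i\in\{1,2\}$. The bound for $i=2$ is exactly the inductive hypothesis for the recursive call on $(G,w')$, which applies because the vertex attaining $\lambda^*$ becomes weightless, so the number of positive-weight vertices strictly drops while $\abs{V}$ is unchanged. For $i=1$ I would observe, since $w_1$ vanishes outside $W$ and $H = G[W]$, the two identities $\OPTWVC(G,w_1)=\lambda^*\OPTWVC(H,w_H)$ and $\OPTCVD(G,w_1)=\lambda^*\OPTCVD(H,w_H)$: a cheapest cover (resp.\ cluster-modulator) may take all of $V\setminus W$ for free and must then induce a cover (resp.\ cluster-modulator) of $H$. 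As $X\cap W$ is a vertex cover of $H$ we have $w_1(X)=\lambda^* w_H(X\cap W)$, so everything reduces to $w_H(X\cap W) \le \OPTWVC(H,w_H)+\OPTCVD(H,w_H)$.

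This last inequality is where the two ``good'' cases split, and the centrally-good case is the main obstacle. Throughout I would use that every vertex cover is a cluster-modulator, giving $\OPTCVD(H,w_H)\le\OPTWVC(H,w_H)$. If $(H,w_H)$ is strongly $2$-good then $w_H(X\cap W)\le w_H(W)\le 2\OPTCVD(H,w_H)\le \OPTWVC(H,w_H)+\OPTCVD(H,w_H)$ and we are done. In the centrally-good case we only have $w_H(W)\le 2\OPTCVD(H,w_H)+1$, so the crude bound $w_H(X\cap W)\le w_H(W)$ loses a $+1$; indeed for $H=P_3$ with unit weights it already fails. The fix is to exploit minimality together with condition~(i): because $N_G[v_0]\subseteq W$, the vertex $v_0$ has all its neighbours inside $H$, so a minimal cover cannot contain all of $N_G[v_0]$, as $v_0$ would otherwise be removable. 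Hence $X$ omits some $u\in N_G[v_0]\subseteq W$ with $w_H(u)\ge 1$ by condition~(ii), giving $w_H(X\cap W)\le w_H(W)-1\le 2\OPTCVD(H,w_H)\le\OPTWVC(H,w_H)+\OPTCVD(H,w_H)$. Feeding the two $w_i$-bounds into \Cref{lemma:local-ratio} closes the induction, and termination in polynomial time follows since each decomposition step is succeeded by at least one vertex deletion before the next one.
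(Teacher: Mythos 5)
Your proposal is correct and takes essentially the same route as the paper's proof: the same induction over the recursion of \cref{alg:cvd}, carrying inclusion-wise minimality of the output as an invariant, invoking \cref{lemma:local-ratio} with $\alpha=\beta=1$ in the decomposition branch, and using minimality together with conditions~(i) and~(ii) of central $2$-goodness to replace $w_H(W)$ by $w_H(W)-1$. Your additions---the explicit lexicographic termination measure, the explicit identities $\OPT_{\mathrm{WVC}}(G,w_1)=\lambda^*\OPT_{\mathrm{WVC}}(H,w_H)$ and $\OPT_{\mathrm{CVD}}(G,w_1)=\lambda^*\OPT_{\mathrm{CVD}}(H,w_H)$, and the observation $\OPTCVD \le \OPTWVC$---are all steps the paper uses implicitly, so this is the same argument presented with more care rather than a different one.
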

		\begin{proof}
			Let $X$ be the output of \Cref{alg:cvd}.
			It is easy to see that $X$ is a feasible vertex cover.
			We must show that $w(X) \le \OPT_{\mathrm{WVC}}(G,w) + \OPT_{\mathrm{CVD}}(G,w)$.
			
			Before we prove this, we will show that $X$ is an inclusion-wise minimal vertex cover by induction on the number of recursive calls of the algorithm.
			If the algorithm terminates in line \ref{l:output-base}, then this is obvious.
			If it terminates in line~\ref{l:output-w0-notadded} or~\ref{l:output-w0-added}, then this follows from the fact that $X'$ is, by induction hypothesis, inclusion-wise minimal and $v$ is only added to $X'$ if this is $X'$ is not a vertex cover by itself.
			If the algorithm terminates in line~\ref{l:output-twins}, then $X'$ is by induction hypothesis inclusion-wise minimal in $G'$.
			Hence, there is a vertex $u \in N_{G'}[v']$ that is not contained in $X'$.
			Since $N_{G'}[v'] = N_G(v)$, it follows that $u\in N_G(v)$.
			Therefore, $X' \cup \{v\}$ is inclusion-wise minimal in $G$.
			Finally, if the algorithm terminates in line~\ref{l:output-decomposition}, then the returned vertex cover is inclusion-wise minimal by induction.
			
			By induction on the recursion of \Cref{alg:cvd}, we will now show that $w(X) \le \OPT_{\mathrm{WVC}}(G,w) + \OPT_{\mathrm{CVD}}(G,w)$.
			
			In the base case, the algorithm terminates in line \ref{l:output-base}. Clearly, $X$ is an optimal solution in this case, i.e., $w(X) = \OPT_{\mathrm{WVC}}(G,w)$.
			If the algorithm terminates in line~\ref{l:output-w0-notadded} or~\ref{l:output-w0-added}, then $\OPT_{\mathrm{WVC}}(G') \le \OPT_{\mathrm{WVC}}(G)$ and $\OPT_{\mathrm{CVD}}(G') \le \OPT_{\mathrm{CVD}}(G)$.
			Then, by the induction hypothesis $w(X) = w(X') \le \OPT_{\mathrm{WVC}}(G',w) + \OPT_{\mathrm{CVD}}(G',w) =  \OPT_{\mathrm{WVC}}(G,w) + \OPT_{\mathrm{CVD}}(G,w)$.
			If the algorithm terminates in line~\ref{l:output-twins},
			let $X^*$ be a minimum vertex cover in $G$.
			Because $v$ and $v'$ are adjacent, $X^*$ must contain at least one of $v$ or $v'$.
			If $X^*$ contains $v'$ but not $v$, then $(X^*\setminus \{v'\}) \cup \{v\}$ is a vertex cover of smaller weight, contradicting the optimality.
			Hence, $v \in X^*$, implying that $X^*\setminus \{v\}$ is an optimal vertex cover in $G'$.
			Then,
			\begin{align*}
			w(X) &= w(X') + w(v) \le \OPT_{\mathrm{WVC}}(G',w) + \OPT_{\mathrm{CVD}}(G',w) + w(v) \\
			& = w(X^*\setminus \{v\}) + \OPT_{\mathrm{CVD}}(G',w) + w(v) = w(X^*) + \OPT_{\mathrm{CVD}}(G',w) \\
			& = \OPT_{\mathrm{WVC}}(G',w) + \OPT_{\mathrm{CVD}}(G',w) \le \OPT_{\mathrm{WVC}}(G,w) + \OPT_{\mathrm{CVD}}(G',w).
			\end{align*}
			
			Finally, suppose that the algorithm terminates in line~\ref{l:output-decomposition}.
			Clearly, $w' + (w - w') = w$.
			Therefore, in order to apply \cref{lemma:local-ratio}, we must show that $w'(X) \le \OPTWVC(G,w') + \OPT_\mathrm{CVC}(G,w')$
			and $w''(X) \le \OPTWVC(G,w'') + \OPT_\mathrm{CVC}(G,w'')$ where $w'' \coloneqq w-w'$.
			For $w'$, this follows by the induction hypothesis.
			Observe that $w''(v) = \lambda^* w_H(v)$, if $v \in W$, and $w''(v) = 0$, otherwise.
			First, assume that $(H,w_H)$ is strongly $2$-good.
			Then,
			\begin{align*}
				w''(X) & = \lambda^* w_H(X \cap W) \le \lambda^*w_H(W) \le 2 \lambda^* \OPT_{\mathrm{CVD}}(H,w_H)\\
				& \le \lambda^*\OPT_{\mathrm{CVD}}(H,w_H) + \lambda^*\OPT_{\mathrm{WVC}}(H,w_H)\\
				& = \OPT_{\mathrm{CVD}}(H,\lambda^*w_H) + \OPT_{\mathrm{WVC}}(H,\lambda^*w_H)\\
				& = \OPT_{\mathrm{CVD}}(G,w'') + \OPT_{\mathrm{WVC}}(G,w'').
			\end{align*}
			Now, assume that $(H,w_H)$ is centrally $2$-good with respect to $v_0$.
			Because $X$ is inclusion-wise minimal, it cannot contain all of $N[v_0]$.
			Moreover, since $w_H(v) \ge 1$ for all $v\in N[v_0]$, it follows that $w_H(X \cap W) \le w_H(W) - 1$.
			Hence,
			\begin{align*}
				w''(X) &= \lambda^* w_H(X \cap W) \le \lambda^*(w_H(W) - 1) \le 2 \lambda^* \OPT_{\mathrm{CVD}}(H,w_H)\\
				&\le \lambda^*\OPT_{\mathrm{CVD}}(H,w_H) + \lambda^*\OPT_{\mathrm{WVC}}(H,w_H)\\
				& = \OPT_{\mathrm{CVD}}(H,\lambda^*w_H) + \OPT_{\mathrm{WVC}}(H,\lambda^*w_H)\\
				& = \OPT_{\mathrm{CVD}}(G,w'') + \OPT_{\mathrm{WVC}}(G,w'').\qedhere
			\end{align*}
		\end{proof}
	
		Recall that a graph is a \emph{cocluster graph} if its complement is a cluster graph.
		The algorithm in \cref{thm:vc-cvd} can easily be adapted to the parameter $\OPTCCVD$ where \textsc{Cocluster vertex deletion} (CCVD) is the problem of computing a minimum-weight modulator to a cocluster graph.
	
		\begin{corollary}
			\label{cor:vc-ccvd}
			There is a polynomial-time algorithm that, given a weighted graph $(G,w)$, computes a vertex cover of weight at most $\OPTWVC(G,w) + \OPTCCVD(G,w)$.
		\end{corollary}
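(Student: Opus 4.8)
The plan is to run \cref{alg:cvd} essentially verbatim, replacing every appeal to cluster structure by its complement. Concretely, $M$ is a cocluster modulator of $G$ if and only if it is a cluster modulator of $\overline{G}$ (since $\overline{G-M}=\overline{G}-M$), so $\OPTCCVD(G,w)=\OPTCVD(\overline{G},w)$, and the same identity holds for every induced subgraph because complementation commutes with taking induced subgraphs. All ingredients that the proof of \cref{thm:vc-cvd} uses about cluster graphs have cocluster analogues: an edgeless graph is a cocluster (its complement is complete), so every vertex cover is a cocluster modulator and hence $\OPTCCVD(H,w_H)\le\OPTWVC(H,w_H)$ for every induced subgraph $H$; coclusters are closed under induced subgraphs; and the zero-weight-padding identities $\OPTCCVD(G,w'')=\lambda^*\OPTCCVD(H,w_H)$ and $\OPTWVC(G,w'')=\lambda^*\OPTWVC(H,w_H)$ hold by the same argument as before. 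In particular, \cref{lemma:local-ratio} applies directly with $\calP$ the minimum-weight cocluster modulator problem.

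The structural engine is obtained by applying \cref{thm:Aprile} to $\overline{G}$ rather than to $G$. This requires $\overline{G}$ to be true-twin-free; equivalently, $G$ must be false-twin-free, since $u,v$ are false twins in $G$ exactly when they are true twins in $\overline{G}$. I would therefore replace the true-twin reduction of \cref{alg:cvd} by a false-twin reduction. Because false twins are non-adjacent and share their neighborhood, some optimal vertex cover contains both or neither, so I fold $v$'s weight into $u$ and delete $v$; a short exchange argument shows this preserves both $\OPTWVC$ and $\OPTCCVD$, and mapping the recursively computed cover back (adding $v$ iff $u$ is taken) preserves feasibility and inclusion-wise minimality. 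With $\overline{G}$ true-twin-free, \cref{thm:Aprile} yields an induced subgraph $\overline{G}[W]$ with a weight $w_H$ that is strongly or centrally $2$-good for cluster deletion in $\overline{G}$; setting $H:=G[W]$ and using $\OPTCVD(\overline{G}[W],w_H)=\OPTCCVD(H,w_H)$ turns this into a strongly or centrally $2$-good pair for cocluster deletion in $G$.

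In the strongly $2$-good case the analysis transfers line for line with $\OPTCVD$ replaced by $\OPTCCVD$: one bounds $w''(X)=\lambda^*w_H(X\cap W)\le\lambda^*w_H(W)\le 2\lambda^*\OPTCCVD(H,w_H)$, splits the factor two using $\OPTCCVD(H,w_H)\le\OPTWVC(H,w_H)$, and concludes via \cref{lemma:local-ratio}. The base case, the zero-weight case, and the (now false-)twin case are handled exactly as in \cref{thm:vc-cvd}.

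The main obstacle is the centrally $2$-good case. There the saving of one unit comes from the fact that an inclusion-wise minimal vertex cover $X$ of $G$ cannot contain an entire closed neighborhood, so $w_H(X\cap W)\le w_H(W)-1$; equivalently, the maximal independent set $V\setminus X$ of $G$ must meet the unit-weight core of the central structure. In \cref{thm:Aprile} that core is the closed neighborhood $N_{\overline{G}}[v_0]$ taken in $\overline{G}$, whereas minimality of a cover of $G$ only guarantees that $V\setminus X$ meets every closed neighborhood $N_G[v_0]$ taken in $G$; these two sets differ, and a maximal independent set of $G$ need not meet $N_{\overline{G}}[v_0]$. The crux is thus to realign the central saving: I would verify that the central structure produced by \cref{thm:Aprile} on $\overline{G}$ can be taken so that its unit-weight core, pulled back to $G$, is met by every maximal independent set of $G$ --- either by re-deriving a cocluster-specific \emph{centrally good} notion whose core is a closed neighborhood $N_G[v_0]$, or by inspecting the decomposition underlying \cref{thm:Cao} and \cref{thm:Aprile} to argue that, for true-twin-free $\overline{G}$, the central vertex's neighborhood already has this property. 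Once the unit saving is secured, the final chain of inequalities is identical to the centrally $2$-good case of \cref{thm:vc-cvd}.
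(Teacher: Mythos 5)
Your proposal retraces the paper's own proof almost exactly: the false-twin reduction (your ``fold the weight into $u$ and delete $v$'' is the paper's merge of $v,v'$ into a vertex of weight $w(v)+w(v')$, with the same map back), the call to \cref{thm:Aprile} on $\overline{G}$, the identities $\OPTCVD(\overline{G}[W],w_H)=\OPTCCVD(G[W],w_H)$ and $\OPTCCVD\le\OPTWVC$, and the use of \cref{lemma:local-ratio} with $\calP$ the cocluster-deletion problem; your strongly $2$-good case is verbatim the paper's. The one step you decline to wave through---the centrally $2$-good case---is exactly the step the paper disposes of with ``follows just as in the proof of \cref{thm:vc-cvd}.'' So, judged as a proof, your write-up is incomplete: the crux is stated as something you \emph{would} verify, not something you verify. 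But your suspicion is well founded, and the gap is genuine rather than cosmetic; it is also present, unacknowledged, in the paper's own argument.

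Here is a concrete witness. Let $G=K_2$ on $\{v_0,u\}$ with $w(v_0)=1$, $w(u)=10$; a complete graph has no false twins, so both your algorithm and the paper's reach the call to \cref{thm:Aprile} on $\overline{G}$, which is edgeless. In an edgeless graph no strongly $2$-good pair exists (any induced subgraph has $\OPTCVD=0$), and every centrally $2$-good pair has core $N_{\overline{G}}[x]=\{x\}$ for a single vertex $x$, with $w_H(x)=1$ and $w_H\equiv 0$ on the rest of $W$. The algorithm of \cref{thm:Aprile} sees only $\overline{G}$, not $w$, so a theorem-compliant implementation may return $x=u$. Then $\lambda^*=10$, the recursion deletes the now zero-weight $u$ and is forced to add it back, and the returned cover $X=\{u\}$ is inclusion-wise minimal yet contains all of $N_{\overline{G}}[u]$: the unit saving vanishes, the invariant $w''(X)\le\OPTWVC(G,w'')+\OPTCCVD(G,w'')$ fails ($10\le 0$), and even the final guarantee fails, since $w(X)=10$ while $\OPTWVC(G,w)+\OPTCCVD(G,w)=1+0=1$. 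Retaining the paper's true-twin rule alongside the false-twin rule eliminates complete graphs but not, e.g., a wheel with a heavy universal hub: that graph is twin-free, $\overline{G}$ has an isolated vertex (the hub), and the same degenerate centrally good pair breaks the bound. So the ``realignment'' you call for is unavoidable. A plausible repair is the one you gesture at: add a rule for vertices $v_0$ universal in $G$ (isolated in $\overline{G}$)---recurse on $G-v_0$ and return the lighter of $X'\cup\{v_0\}$ and $V\setminus\{v_0\}$, which is sound because $\OPTCCVD(G,w)=\OPTCCVD(G-v_0,w)$ and every cover contains $v_0$ or all of $V\setminus\{v_0\}$---and then verify from Aprile et al.'s actual structure theorem (not \cref{thm:Cao}, which concerns interval graphs and is irrelevant here) that every remaining centrally good core is met by every maximal independent set of $G$. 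Neither your proposal nor the paper's proof carries out that verification; that is the gap.
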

		\begin{proof}
			We use \Cref{alg:cvd} and the same argument as in the proof of \cref{thm:vc-cvd} with the following modifications.
			We wish to call \cref{thm:Aprile} on the complement graph of $G$.
			Hence, we replace lines \ref{l:ttwins1}--\ref{l:output-twins} by the following.
			The condition in the if-statement is that $G$ contains false twins $v$ and $v'$.
			The graph $G'$ is obtained by replacing $v$ and $v'$ with a single vertex $v''$ with weight $w(v'') \coloneqq w(v) + w(v')$.
			In line~\ref{l:output-twins}, we return $X'$, if $v'' \notin X'$, and $X' \cup \{v,v'\}$, otherwise.
			The correctness of this modification follows from easily from the fact that any inclusion-wise minimal vertex cover in $G$ contains either $N(v) = N(v')$, but not $v$ or $v$', or both $v$ and $v'$.
			In line~\ref{l:cvd-call-lemma}, we must invoke \cref{thm:Aprile} to find a strongly 2-good or centrally 2-good subgraph in $\overline{G}$, the complement graph of $G$, rather than $G$ itself.
			The correctness of this modification follows, just as in the proof of \cref{thm:vc-cvd}, from the fact that $G$ contains a false twin if and only if $\overline{G}$ contains a true twin, the fact that $\OPT_{\mathrm{CCVD}}(G,w) = \OPT_{\mathrm{CVD}}(\overline{G},w)$, and the fact that $\OPT_{\mathrm{CCVD}}(G,w)$ is also a lower bound for  $\OPT_{\mathrm{CVD}}(G,w)$.
		\end{proof}
	
		The results in \cref{thm:vc-cvd} and \cref{cor:vc-ccvd} are tight in the following sense:
		Let $k\in \allowbreak \{\OPTCVD,\allowbreak\OPTCCVD\}$.
		Since an edgeless graph is also both a cluster graph and a cocluster, it follows that $k \le \OPTVC$
		Hence, the algorithms in both results are also $2$-approximations.
		Additionally, computing a vertex cover of weight at most $(1-\eps)\OPTVC + k$ or $\OPTVC + (1-\eps)k$ in polynomial time would imply a $(2-\eps)$-approximation for \textsc{Vertex Cover}, contradicting the Unique Games Conjecture~\cite{Khot2008}.
		
		Both of these results also improve on the the solution quality obtained by applying the structural rounding approach and \cref{thm:struct-rounding}.
		This approach only yields a solution of weight at most $\OPTVC + 2\OPTCVD$ resp. $\OPTVC + 2\OPTCCVD$, since under the Unique Games Conjecture there is no $(2-\eps)$-approximation for CVD or CCVD.
		
		\subsection{Feedback vertex number}
		\label{sec:vc-fvs}
		Now, we consider the parameterization of \textsc{Weighted Vertex Cover} (VC) by the weight of a minimum feedback vertex set.
		A \emph{feedback vertex set} is a modulator to an acyclic graph.
		We will show that \textsc{Vertex Cover} has a polynomial-time approximation that computes a vertex cover of weight at most $\OPTWVC(G) + \OPTFVS(G)$ in any weighted graph $G$.
		
	    Consider the following standard LP relaxation for \textsc{Weighted Vertex Cover}, where $(G=(V,E),w)$ is a weighted graph and $x$ is a vector indexed by $V$:
	    \begin{align*}
	    	&\min \sum_{v\in V} w(v) x_v \\
	    	\text{s.t. }& x_u + x_v \ge 1, \text{ for all } \{u,v\} \in E,\\
	    	& x_v \ge 0, \text{ for all } v \in V.
	    \end{align*}
		Let $\OPT_{\mathrm{LPVC}}(G,w)$ denote the weight of an optimum solution to this LP.
	
		\begin{theorem}[Nemhauser and Trotter~\cite{Nemhauser1974,Nemhauser1975}]
			\label{thm:NT}
			
			\begin{enumerate}[(i)]
				\item There is an optimum solution $x^*$ to the \textsc{Vertex Cover} LP such that $x^*_v \in \{0,\frac{1}{2},1\}$ for all $v\in V$. Such a solution is called \emph{half-integral} and can be computed in polynomial time.
				\item Let $x^*$ be an optimum half-integral solution to the \textsc{Vertex Cover} LP.
				Let $V_i \coloneqq \{v \in V \mid x^*_v = i\}$ for $i \in \{0,\frac{1}{2},1\}$
				Then, there is a minimum-weight vertex cover in $(G,w)$ that contains $V_1$, but does not contain any vertices in $V_0$.
				Setting all variables to $\frac{1}{2}$ yields an optimum LP solution for $(G[V_{\frac{1}{2}}],w)$.
			\end{enumerate}
		\end{theorem}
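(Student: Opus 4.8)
The plan is to prove the two parts separately, treating half-integrality (i) via a bipartite ``double cover'' construction and the structural decomposition (ii) via local perturbation arguments on the LP optimum.

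For part (i), I would build an auxiliary bipartite graph $H$ on two copies $V' = \{v' : v \in V\}$ and $V'' = \{v'' : v\in V\}$ of the vertex set, assigning weight $w(v)$ to both $v'$ and $v''$, and adding edges $\{u',v''\}$ and $\{u'',v'\}$ for each $\{u,v\}\in E$. The two LPs are linked by two maps: a feasible $x$ for the \textsc{Vertex Cover} LP on $G$ lifts to $y_{v'}=y_{v''}=x_v$ on $H$ (doubling the objective), and a feasible $y$ on $H$ projects to $x_v = \tfrac12(y_{v'}+y_{v''})$ on $G$ (halving the objective), where feasibility of the projection uses that \emph{both} $\{u',v''\}$ and $\{u'',v'\}$ are edges of $H$. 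This shows $\OPT_{\mathrm{LPVC}}(H,w)=2\,\OPT_{\mathrm{LPVC}}(G,w)$. Since $H$ is bipartite, its \textsc{Vertex Cover} LP has an integral optimum $y^*$, computable in polynomial time (e.g.\ via matching and K\"onig's theorem), and projecting it yields a feasible half-integral $x^*_v = \tfrac12(y^*_{v'}+y^*_{v''}) \in \{0,\tfrac12,1\}$ of optimal value.

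For part (ii), I first record three structural consequences of feasibility and half-integrality that follow by inspecting single edge constraints: $V_0$ is an independent set (an edge inside $V_0$ would sum to $0$), every neighbour of a $V_0$-vertex lies in $V_1$ (the constraint forces the other endpoint up to $1$), and consequently there is no edge between $V_0$ and $V_{\frac12}$ (such an edge would sum to $\tfrac12$). Given these, to establish the first assertion I take an arbitrary minimum-weight vertex cover $C$ and replace it by $C' \coloneqq (C\cup V_1)\setminus V_0$; the structural facts make $C'$ a vertex cover that contains $V_1$ and avoids $V_0$. The only thing left is the weight comparison $w(C')\le w(C)$, which reduces to $w(V_1\setminus C)\le w(V_0\cap C)$.

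This last inequality is the main obstacle, and I would obtain it by perturbing $x^*$: for small $\eps>0$, decrease $x^*$ by $\eps$ on $V_1\setminus C$ and increase it by $\eps$ on $V_0\cap C$. Every decreased vertex $u\in V_1\setminus C$ lies outside the cover $C$, so each of its edges is covered by its other endpoint $v\in C$; checking the three possibilities $v\in V_0$ (then $v\in V_0\cap C$ is raised by exactly $\eps$, compensating), $v\in V_{\frac12}$, or $v\in V_1$ shows the perturbed vector stays feasible. Optimality of $x^*$ then forces the objective change $\eps\bigl(w(V_0\cap C)-w(V_1\setminus C)\bigr)\ge 0$, giving the inequality. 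Finally, for the statement about $G[V_{\frac12}]$: the all-$\tfrac12$ vector is feasible there with value $\tfrac12 w(V_{\frac12})$, and any strictly cheaper solution $z$ on $G[V_{\frac12}]$ could be extended by $0$ on $V_0$ and $1$ on $V_1$ to a feasible solution of $G$ (feasibility again using that $V_0$ has no neighbours in $V_{\frac12}$) of value strictly below $\tfrac12 w(V_{\frac12})+w(V_1)=\OPT_{\mathrm{LPVC}}(G,w)$, a contradiction. I expect the feasibility bookkeeping in the perturbation step to be the only delicate part; everything else is routine edge-by-edge verification.
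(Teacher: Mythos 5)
The paper offers no proof of this theorem --- it is imported from Nemhauser and Trotter with a citation and used as a black box --- so there is no in-paper proof to compare against. Your proof is correct and is essentially the classical argument behind that citation: the bipartite double cover with integrality of the bipartite LP for half-integrality, the $\eps$-perturbation exchange for persistency, and the extension argument for optimality of the all-$\tfrac{1}{2}$ solution on $G[V_{\frac{1}{2}}]$. The only imprecision is the parenthetical appeal to K\"onig's theorem, which is the unweighted statement; for weighted graphs the integrality of the bipartite vertex cover LP optimum follows from total unimodularity of the incidence matrix (or an Egerv\'ary-type min-cut computation), which is what your argument actually needs.
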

	
		\begin{theorem}
			\label{thm:vc-fvs}
			There is a polynomial-time algorithm that, given a weighted graph $(G,w)$, computes a vertex cover of weight at most $\OPTWVC(G,w) + \OPTFVS(G,w)$.
		\end{theorem}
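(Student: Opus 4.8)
The plan is to reduce to a ``half-integral core'' via the Nemhauser--Trotter theorem and then exploit that, on this core, the LP value is exactly half the total weight. First I would compute an optimal half-integral LP solution $x^*$ (\cref{thm:NT}(i)) and set $V_i \coloneqq \{v \mid x^*_v = i\}$ for $i\in\{0,\tfrac12,1\}$. Writing $H \coloneqq G[V_{1/2}]$, \cref{thm:NT}(ii) gives a minimum-weight vertex cover of $G$ containing $V_1$ and disjoint from $V_0$; combined with the fact that every edge incident to $V_0$ has its other endpoint in $V_1$ (forced by the LP constraint $x_u + x_v \ge 1$, which leaves no $V_0$--$V_{1/2}$ edges), this yields the identity $\OPTWVC(G,w) = w(V_1) + \OPTWVC(H,w)$. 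Moreover the all-$\tfrac12$ vector is an optimal LP solution on $H$, so $\OPTWVC(H,w) \ge \OPT_{\mathrm{LPVC}}(H,w) = \tfrac12 w(V_{1/2})$. I also record that $\OPTFVS(H,w)\le\OPTFVS(G,w)$, since intersecting any feedback vertex set of $G$ with $V_{1/2}$ gives one of $H$ of no larger weight.

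On the core $H$ I would compute a $2$-approximate minimum-weight feedback vertex set $\tilde F$ via the known $2$-approximation for \textsc{Feedback Vertex Set}~\cite{Bafna1999,Becker1996}, so that $w(\tilde F) \le 2\,\OPTFVS(H,w)$. The graph $H - \tilde F$ is a forest, hence bipartite, so I can compute a minimum-weight vertex cover $X_R$ of it exactly in polynomial time. The algorithm then outputs $X \coloneqq V_1 \cup \tilde F \cup X_R$. Feasibility is immediate: edges meeting $V_1$ are covered by $V_1$, edges incident to $V_0$ only reach $V_1$, and every edge inside $V_{1/2}$ either meets $\tilde F$ or lies in $H-\tilde F$ and is covered by $X_R$.

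The crux is the weight estimate, where the factor-$2$ loss of the feedback-vertex-set approximation is exactly cancelled by a factor-$\tfrac12$ gain. Since $H-\tilde F$ is bipartite its LP is integral, and the all-$\tfrac12$ vector is feasible there, so $w(X_R) = \OPTWVC(H-\tilde F, w) \le \tfrac12 w(V_{1/2}\setminus \tilde F)$. Hence
\begin{align*}
w(\tilde F) + w(X_R) &\le w(\tilde F) + \tfrac12\big(w(V_{1/2}) - w(\tilde F)\big) = \tfrac12 w(V_{1/2}) + \tfrac12 w(\tilde F)\\
&\le \OPTWVC(H,w) + \OPTFVS(H,w),
\end{align*}
using $\tfrac12 w(V_{1/2}) \le \OPTWVC(H,w)$ and $\tfrac12 w(\tilde F) \le \OPTFVS(H,w)$. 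Adding $w(V_1)$ and invoking $\OPTWVC(G,w) = w(V_1)+\OPTWVC(H,w)$ together with $\OPTFVS(H,w)\le\OPTFVS(G,w)$ yields $w(X) \le \OPTWVC(G,w) + \OPTFVS(G,w)$, as required.

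I expect the main obstacle to be conceptual rather than technical: the point is to \emph{not} add an (approximate) feedback vertex set outright, which merely reproduces the $\OPTWVC + 2\,\OPTFVS$ bound of structural rounding, but instead to solve vertex cover optimally on the leftover forest, where the half-integral regime forces that cover to cost at most half the forest's weight. It is precisely this half that absorbs the extra factor in the feedback-vertex-set approximation. The only point requiring careful bookkeeping is the weighted Nemhauser--Trotter step, namely verifying the identity $\OPTWVC(G,w) = w(V_1) + \OPTWVC(H,w)$ and the absence of $V_0$--$V_{1/2}$ edges, both of which follow directly from the LP constraints.
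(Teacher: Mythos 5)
Your proposal is correct and follows essentially the same route as the paper: Nemhauser--Trotter preprocessing to the half-integral core, a $2$-approximate feedback vertex set on that core, an exact vertex cover on the leftover forest, and the observation that the all-$\tfrac12$ LP optimum on the core lets the factor $\tfrac12$ absorb the factor $2$ of the FVS approximation. The only difference is cosmetic bookkeeping in how the inequalities are chained (you bound $w(\tilde F)+w(X_R)$ directly by $\tfrac12 w(V_{1/2})+\tfrac12 w(\tilde F)$, whereas the paper bounds the forest's cover via $\OPT_{\mathrm{LPVC}}$ of the core minus $\tfrac{w(\tilde F)}{2}$), which yields the same estimate.
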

		\begin{proof}
			Let $(G=(V,E),w)$ be a given weighted graph.
			Our algorithm uses \cref{thm:NT} to compute the sets $V_0,V_{\frac{1}{2}},V_1 \subseteq V$.
			Let $G' \coloneqq G[V_{\frac{1}{2}}]$.
			By \cref{thm:NT}(ii), it follows that $\OPT_{\mathrm{WVC}}(G,w) = w(V_1) + \OPT_{\mathrm{VC}}(G',w)$.
			We then apply a $2$-approximation algorithm for \textsc{Feedback Vertex Set}~\cite{Bafna1999,Becker1996} to $G'$ to obtain $X\subseteq V_{\frac{1}{2}}$, a feedback vertex set in $G'$ of weight at most $2 \cdot \OPT_{\mathrm{FVS}}(G',w) \le 2 \cdot \OPT_{\mathrm{FVS}}(G,w)$.
			Let $T \coloneqq G' - X$.
			We also compute $Y$, a minimum-weight vertex cover in the forest~$T$.
			We return the solution $V_1 \cup X \cup Y$.
			Clearly, this algorithm can be implemented to run in polynomial time, and always outputs a feasible vertex cover.
			
			To show that the obtained solution always achieves the claimed weight, we first observe that, because $T$ is bipartite, we have $\OPT_{\mathrm{WVC}}(T,w) = \OPT_{\mathrm{LPVC}}(T,w)$.
			By \cref{thm:NT}(ii), the all-$\frac{1}{2}$ vector is an optimum LP solution to for $(G',w)$.
			Since $\OPT_{\mathrm{WVC}}(G',w) \ge \OPT_{\mathrm{LPVC}}(G',w)$, it follows that:
			\begin{align*}
				 \OPT_{\mathrm{WVC}}(T,w) & = \OPT_{\mathrm{LPVC}}(T,w)\\ &\le \OPT_{\mathrm{LPVC}}(G,w) - \frac{w(X)}{2} \le \OPT_{\mathrm{WVC}}(G,w) - \frac{w(X)}{2}.
			\end{align*}
			As a result, we get the following:
			\begin{align*}
				w(V_1 \cup X \cup Y) & = w(V_1) + w(X) + w(Y) \\
				&= w(V_1) + w(X) +  \OPT_{\mathrm{WVC}}(T,w)\\
				&\le w(V_1) + w(X) + \OPT_{\mathrm{WVC}}(G',w) - \frac{w(X)}{2}\\
				&\le w(V_1) + \OPT_{\mathrm{FVS}}(G',w) + \OPT_{\mathrm{WVC}}(G',w)\\
				&= \OPT_{\mathrm{WVC}}(G,w) + \OPT_{\mathrm{FVS}}(G',w)\\
				&\le \OPT_{\mathrm{WVC}}(G,w) + \OPT_{\mathrm{FVS}}(G,w).\qedhere
			\end{align*}
		\end{proof}
	
		Recall that $\OPT_\mathrm{ChVD}$ refers to the weight of a minimum modulator to a chordal graph.
	
		\begin{corollary}
			\label{cor:vc-chordal}
			There is a polynomial-time algorithm that, given a weighted graph $(G,w)$, computes a vertex cover of weight at most $\frac{3}{2}\OPT_\mathrm{WVC}(G,w) + \OPT_\mathrm{ChVD}(G,w)$.
		\end{corollary}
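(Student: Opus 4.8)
The plan is to reuse the very algorithm behind \cref{thm:vc-fvs} and to re-analyze its output more carefully. Recall that, since a forest is chordal, every feedback vertex set is in particular a chordal modulator, so $\OPT_{\mathrm{ChVD}}(G,w)\le\OPTFVS(G,w)$; invoking \cref{thm:vc-fvs} as a black box would therefore only give the weaker guarantee $\OPTWVC+\OPTFVS$. Instead I would first apply \cref{thm:NT} to compute $V_0,V_{\frac12},V_1$, set $G'\coloneqq G[V_{\frac12}]$, and recall that $\OPTWVC(G,w)=w(V_1)+\OPTWVC(G',w)$ and that the all-$\frac12$ vector is an optimal LP solution on $G'$, so that $\OPT_{\mathrm{LPVC}}(G',w)=\tfrac12 w(V_{\frac12})$. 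Running the \cref{thm:vc-fvs} algorithm, we obtain a $2$-approximate feedback vertex set $X\subseteq V_{\frac12}$ of $G'$, solve \textsc{Weighted Vertex Cover} exactly on the forest $T\coloneqq G'-X$ to get $Y$, and output $V_1\cup X\cup Y$.

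The crucial difference from the proof of \cref{thm:vc-fvs} lies in how $w(Y)$ is bounded. Exactly as there, $w(Y)=\OPTWVC(T,w)=\OPT_{\mathrm{LPVC}}(T,w)\le\OPT_{\mathrm{LPVC}}(G',w)-\tfrac12 w(X)=\tfrac12 w(V_{\frac12})-\tfrac12 w(X)$, but now I would \emph{not} relax $\tfrac12 w(V_{\frac12})$ to $\OPTWVC(G',w)$. Using $\tfrac12 w(X)\le\OPTFVS(G',w)$, this yields
\begin{align*}
w(V_1\cup X\cup Y)\le w(V_1)+\tfrac12 w(X)+\tfrac12 w(V_{\frac12})\le w(V_1)+\OPTFVS(G',w)+\OPT_{\mathrm{LPVC}}(G',w).
\end{align*}
Hence, since $w(V_1)\le\tfrac32 w(V_1)$ and $\OPT_{\mathrm{ChVD}}(G',w)\le\OPT_{\mathrm{ChVD}}(G,w)$, it suffices to establish
\begin{align*}
\OPTFVS(G',w)+\OPT_{\mathrm{LPVC}}(G',w)\le\tfrac32\OPTWVC(G',w)+\OPT_{\mathrm{ChVD}}(G',w).
\end{align*}

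To prove this inequality I would fix a minimum-weight chordal modulator $M$ of $G'$ and set $C\coloneqq G'-M$, which is chordal. As a feedback vertex set of $C$ together with $M$ is a feedback vertex set of $G'$, we get $\OPTFVS(G',w)\le w(M)+\OPTFVS(C,w)=\OPT_{\mathrm{ChVD}}(G',w)+\OPTFVS(C,w)$, so it remains to show $\OPTFVS(C,w)+\OPT_{\mathrm{LPVC}}(G',w)\le\tfrac32\OPTWVC(G',w)$. The crude bounds $\OPTFVS(C,w)\le\OPTWVC(C,w)\le\OPTWVC(G',w)$ and $\OPT_{\mathrm{LPVC}}(G',w)\le\OPTWVC(G',w)$ only yield a factor of $2$, so the chordal structure of $C$ must be used to save the remaining $\tfrac12\OPTWVC(G',w)$. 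The main obstacle is exactly this purely structural step: via a clique tree (perfect elimination ordering) of $C$ one should argue that on a chordal graph a feedback vertex set is cheaper than a vertex cover by roughly one vertex per nontrivial clique, while $\OPT_{\mathrm{LPVC}}$ only ever charges half of the vertex weight; balancing these two effects is what forces the factor $\tfrac32$, and it is already tight on a single clique, where $\OPTFVS=\OPTWVC-1$ and $\OPT_{\mathrm{LPVC}}=\tfrac12 w(V)$. I expect the cleanest route is to prove the inequality on $C$ first and then transfer it to $G'$ using $\OPT_{\mathrm{LPVC}}(G',w)=\tfrac12 w(V_{\frac12})$, controlling how the deletion of $M$ affects each term.
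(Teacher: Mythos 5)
Your plan correctly reduces the corollary to the inequality
\begin{align*}
\OPTFVS(G',w)+\OPT_{\mathrm{LPVC}}(G',w)\le\tfrac32\OPTWVC(G',w)+\OPT_{\mathrm{ChVD}}(G',w),
\end{align*}
but this inequality, which you defer as ``the main obstacle,'' is in fact false, so no clique-tree argument can complete the proof. Take $G'$ to be the complete split graph $S_{k,k}$: a clique $K$ of size $k$ fully joined to an independent set $I$ of size $k$. Split graphs are chordal, so $\OPT_{\mathrm{ChVD}}(G')=0$ and $C=G'$; the graph has a perfect matching (match $K$ to $I$), so the all-$\frac12$ vector is LP-optimal and $\OPT_{\mathrm{LPVC}}(G')=k$ (in particular, $G'$ can arise unchanged from the Nemhauser--Trotter step); $\OPTWVC(G')=k$ (take $K$; the perfect matching gives the matching lower bound); and $\OPTFVS(G')=k-1$, since deleting $k-1$ vertices of $K$ leaves a star, while after deleting any $k-2$ vertices two clique vertices and one independent-set vertex survive and form a triangle. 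The inequality would require $(k-1)+k\le\frac32 k$, which fails for all $k\ge 3$. Your tightness check on a single clique misses exactly this configuration: on $K_n$ the LP term is $\frac12\OPTWVC+\frac12$, but attaching the independent set $I$ inflates $\frac12 w(V')$ all the way up to $\OPTWVC$ while leaving $\OPTFVS\approx\OPTWVC$ untouched, so the left-hand side is about $2\OPTWVC$ even though $\OPT_{\mathrm{ChVD}}=0$. The failure is not merely one of analysis: on $S_{k,k}$ a valid $2$-approximate feedback vertex set may consist of $k-1$ vertices of $K$ together with $k-1$ vertices of $I$, after which your algorithm outputs a vertex cover of size $2k-1>\frac32 k=\frac32\OPTWVC(G)+\OPT_{\mathrm{ChVD}}(G)$. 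The root cause is that you charge the factor-$2$ loss of the FVS approximation against $\OPTFVS(G')$, which in a chordal graph can be as large as $\OPTWVC-1$; to get the claimed bound, that loss must be charged against $\OPT_{\mathrm{ChVD}}$ instead.

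The paper's proof avoids this trap by never running the FVS machinery on a graph that still contains triangles. It first applies the local-ratio rule to triangles: repeatedly pick a triangle among positive-weight vertices, subtract the minimum of their weights from all three, and put vertices of weight $0$ into the solution; since every vertex cover contains at least two of any triangle's three vertices, this phase costs at most $\frac32$ times the optimum on the weight it consumes (\cref{lemma:local-ratio} with $\alpha=\frac32$, $\beta=1$). Once the positive-weight part is triangle-free, an induced subgraph is chordal exactly when it is a forest, so there $\OPT_{\mathrm{ChVD}}=\OPTFVS$, and \cref{thm:vc-fvs} applied to the residual weights yields cost at most $\OPTWVC+\OPTFVS=\OPTWVC+\OPT_{\mathrm{ChVD}}$; combining the two phases via \cref{lemma:local-ratio} gives $\frac32\OPTWVC+\OPT_{\mathrm{ChVD}}$. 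If you want to salvage your plan, you would need the error of the FVS step to scale with $\OPT_{\mathrm{ChVD}}(G')$ rather than with $\OPTFVS(G')$, which is essentially what this triangle-elimination preprocessing accomplishes.
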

		\begin{proof}
			The algorithm works as follows:
			While the input graph $G$ contains a triangle, it decreases the weight of all three vertices uniformly until at least one of them has weight $0$.
			Vertices with weight $0$ are added to the solution.
			Since any vertex cover contains at least two of the three vertices in the triangle, \cref{lemma:local-ratio} implies that this leads to a solution of size at most $\frac{3}{2} \cdot \OPT_\mathrm{WVC}$.
			Once $G$ is triangle-free, any subgraph of $G$ is chordal if and only if it is a forest.
			Hence, we can apply \cref{thm:vc-fvs} to obtain a vertex cover of size at most $\OPT_\mathrm{WVC} + k$.
		\end{proof}
	
		Just as in the discussion at the end of \cref{sec:vc-cvd}, the result in \cref{thm:vc-cvd} is tight under the Unique Games Conjecture.
		As we discussed at the end of \cref{sec:epa}, \cref{thm:vc-fvs} improves on the result obtained by applying the structural rounding approach.
		The algorithm in \cref{cor:vc-chordal}, on the other hand, may not be.
		We leave open whether there it is possible to compute in polynomial time a vertex cover of weight at most $\OPT_\mathrm{WVC} + \OPT_\mathrm{ChVD}$.
		It is not even clear whether it is possible to achieve $\OPT_\mathrm{VC} + c\OPT_\mathrm{ChVD}$ where $c$ is any constant.
		There is a polynomial-time approximation algorithm that computes a modulator to a chordal graph of size at most $\OPT_\mathrm{ChVD}(G)^{\bigO(1)}$~\cite{Jansen2018}.
		To our knowledge, no constant-factor approximation is known.
		Hence, using the structural rounding approach and \cref{thm:struct-rounding}, we get an algorithm that computes a vertex cover of size at most $\OPTVC +\OPT_\mathrm{ChVD}^{\bigO(1)}$.
			
		\subsection{Split vertex deletion}
		\label{sec:vc-svd}
		We now consider the parameterization of \textsc{Vertex Cover} by the size of a minimum modulator to a split graph.
		Unlike in our previous results, in this section we only consider the unweighted version of the problem.
		We leave open whether our result can be generalized to the weighted case.
		
		Recall that a graph $G=(V,E)$ is a \emph{split graph} if $V$ can be partitioned into two sets $V=C \uplus I$ such that $C$ and $I$ are a clique and an independent set in $G$, respectively.
		The problem of finding a minimum modulator to a split graph is \textsc{Split Vertex Deletion} (SVD).
		
		We will say that a clique $Z\subseteq V$ in a graph $G=(V,E)$ is \emph{$\alpha$-maximal} if it cannot be augmented by a local search within radius $\alpha$, that is, if there are no $Z_1 \subseteq Z$ and $Z_2 \subseteq V \setminus Z$ with $\abs{Z_1} < \abs{Z_2} \le \alpha$ such that $(Z \setminus Z_1) \cup Z_2$ is also a clique.
		
		\begin{lemma}
			Let $G=(V,E)$ be a split graph.
			Let $X\subseteq V$ be a $2$-maximal clique in $G$.
			Then, $G-X$ has a vertex cover of size at most $1$.
			\label{lemma:clique-in-split}
		\end{lemma}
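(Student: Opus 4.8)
The plan is to exploit the split partition of $G$ together with the augmentation guarantee built into $2$-maximality. First I would fix any partition $V = C \uplus I$ in which $C$ induces a clique and $I$ an independent set. Since $X$ is a clique and $I$ is independent, $X$ meets $I$ in at most one vertex, so $\abs{X \cap I} \le 1$. The vertices of $G - X$ are $(C \setminus X) \cup (I \setminus X)$, and its only possible edges lie inside $C \setminus X$ (where $C$ is complete) or between $C \setminus X$ and $I \setminus X$, because $I \setminus X$ is independent. Hence everything reduces to bounding $\abs{C \setminus X}$: once I know $\abs{C \setminus X} \le 1$, every edge of $G - X$ is incident to the unique vertex of $C \setminus X$ (and $G - X$ is edgeless if $C \setminus X = \emptyset$), so that single vertex is the desired vertex cover of size at most~$1$.

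The heart of the argument is therefore the claim $\abs{C \setminus X} \le 1$, which I would establish by contradiction. Suppose $C \setminus X$ contained two distinct vertices $u, v$, and split on the value of $\abs{X \cap I}$. If $X \cap I = \emptyset$, then $X \subseteq C$, so $u$ is adjacent to all of $X$ and $X \cup \{u\}$ is a larger clique --- the forbidden move with $Z_1 = \emptyset$, $Z_2 = \{u\}$. If instead $X \cap I = \{x_I\}$, then $X \cap C$ together with $u$ and $v$ all lie in the clique $C$, so $(X \setminus \{x_I\}) \cup \{u,v\}$ is a clique of size $\abs{X}+1$ --- exactly the forbidden radius-$2$ move with $Z_1 = \{x_I\}$, $Z_2 = \{u,v\}$. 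In both cases $2$-maximality is contradicted, proving the claim and hence the lemma.

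The step I expect to be the main obstacle is getting the case distinction on $\abs{X \cap I}$ right and verifying that the augmenting set is genuinely a clique. The delicate observation is that the radius-$2$ swap deletes the single $I$-vertex $x_I$ and inserts two $C$-vertices $u, v$; this never asks $u$ or $v$ to be adjacent to $x_I$, only to one another and to $X \cap C$, which holds automatically since all of these vertices sit inside the clique $C$. It is precisely the exponent $2$ in ``$2$-maximal'' that authorises a one-out/two-in swap, so the bound would already fail for ordinary ($1$-)maximality. A minor point is that the split partition need not be unique, but fixing one arbitrary partition suffices throughout.
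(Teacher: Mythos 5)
Your proof is correct and follows essentially the same route as the paper's: both arguments reduce to showing $\abs{C \setminus X} \le 1$ via the one-out/two-in swap that exchanges the (at most one) vertex of $X \cap I$ for two vertices of $C \setminus X$, contradicting $2$-maximality, and then observe that the lone remaining vertex of $C \setminus X$ covers all edges of $G-X$. Your case split on $\abs{X \cap I} \in \{0,1\}$ is just a slight reorganization of the paper's split on $X = C$ versus $X \ne C$ (and is in fact stated more cleanly, since the paper's version contains a typo at exactly this point).
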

		\begin{proof}
			Let $C$ and $I$ be the clique and the independent set, respectively, in a split representation of $G$.
			If $X=C$, then $G-X = G[I]$ and, thus, $G-X$ has a vertex cover of size $0$.
			Since $X$ is a clique, it may contain at most one vertex from $I$ and, if $X \ne C$, then $C$ must contain exactly one vertex $u \in I$, since otherwise it would not be a maximal clique.
			If there are distinct $v_1,v_2 \in C\setminus X$, then $(X \setminus \{u\}) \cup \{v_1,v_2\}$ is also a clique, contradicting the assumption in the statement.
			Hence, there is at most one vertex $v\in C\setminus X$.
			Then, $\{v\}$ is a vertex cover in $G-X$.
		\end{proof}

		\begin{lemma}
			\label{lemma:vc-2-plus-c}
			For every constant $c$, there is a polynomial-time algorithm that computes a vertex cover of size at most $\max\{\OPTVC,2\OPTVC -c\}$.
		\end{lemma}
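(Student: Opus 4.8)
The plan is to exploit the two regimes hidden in the bound $\max\{\OPTVC, 2\OPTVC - c\}$: when $\OPTVC \le c$ the bound equals $\OPTVC$ and the algorithm must solve the instance \emph{exactly}, whereas when $\OPTVC > c$ the bound equals $2\OPTVC - c$ and it suffices to beat the trivial $2$-approximation by an additive $c$. A single bounded-depth branching procedure, which guesses $c$ vertices of an optimal cover, will handle both regimes at once.

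First I would describe the algorithm. Build a branching tree of depth at most $c$: each node maintains a set $S$ of already-chosen vertices (initially $S = \emptyset$); if $G - S$ is edgeless, then $S$ is a vertex cover and we record it; if $\abs{S} = c$, we complete $S$ by running the standard maximal-matching $2$-approximation on $G - S$ and record $S$ together with the returned cover; otherwise we pick any edge $\{u,v\}$ of $G - S$ and recurse on both $S \cup \{u\}$ and $S \cup \{v\}$. The algorithm outputs the smallest recorded vertex cover. Since the tree has at most $2^c$ leaves and $c$ is a constant, and each leaf performs only polynomial work, the whole procedure runs in polynomial time.

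The correctness argument follows a minimum vertex cover $X^*$ down the tree. At every branching step, at least one endpoint of the chosen edge lies in $X^*$, so we may follow a branch keeping $S \subseteq X^*$. If $\OPTVC \le c$, this path reaches an edgeless $G - S$ at depth $\abs{S} = \OPTVC \le c$ (it cannot become edgeless with $\abs{S} < \OPTVC$, since $S$ would then be a strictly smaller cover), so the recorded $S$ is an exact optimum. If $\OPTVC > c$, the same path cannot become edgeless before depth $c$, so it reaches a node with $S \subseteq X^*$ and $\abs{S} = c$. The key quantitative step is that $X^* \setminus S$ is a vertex cover of $G - S$ of size $\OPTVC - c$, so $\OPTVC(G - S) \le \OPTVC - c$; hence the $2$-approximation on $G - S$ returns a cover of size at most $2(\OPTVC - c)$, and adding back the $c$ vertices of $S$ yields a cover of size at most $2\OPTVC - c$. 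In either regime the recorded solution meets the target, and so does the minimum taken over all leaves.

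The main obstacle is not any single step but ensuring the two regimes are stitched together by one uniform rule: the depth-$c$ branching must simultaneously produce an exact solution when the optimum is tiny (the ``edgeless'' base case fires early along the correct path) and save the additive $c$ when the optimum is large (the completion base case fires with $S$ contained in an optimum). The delicate point is verifying that, along the $X^*$-path, the residual graph does not become edgeless prematurely in the large-optimum case, which is precisely where the minimality of $\OPTVC$ is invoked.
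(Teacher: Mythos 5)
Your proof is correct and follows essentially the same strategy as the paper's: fix $c$ vertices of an optimal cover exactly, then run the standard $2$-approximation on the residual graph, whose optimum has dropped to $\OPTVC - c$, giving $c + 2(\OPTVC - c) = 2\OPTVC - c$, with the same two-regime analysis (exact solution when $\OPTVC \le c$, additive savings otherwise). The only difference is the mechanism for guessing those $c$ vertices: the paper enumerates all $\bigO(n^c)$ vertex subsets of size at most $c$, whereas you grow them via a depth-$c$ edge-branching tree with at most $2^c$ leaves; both are polynomial for constant $c$ and the key inequality $\OPTVC(G-S) \le \OPTVC(G) - c$ for $S$ inside an optimum cover is identical.
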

		\begin{proof}
			Given a graph $G=(V,E)$, our algorithm iterates over all vertex sets $Y\subseteq V$ of size at most $c$ and applies a standard $2$-approximation to $G-Y$ to obtain a solution $Z_Y$.
			It outputs $Y^* \cup Z_{Y^*}$ where $Y^*$ is the set that minimizes $Y \cup Z_Y$.
			
			Clearly, this algorithm runs in polynomial time for any fixed $c$ and always outputs a feasible vertex cover.
			Let $X^*$ be a minimum vertex cover in $G$.
			If $\abs{X^*} \le c$, then $X^*$ is among the sets chosen as $Y$ in the algorithm.
			The algorithm then determines that $G-Y$ has a vertex cover of size $0$ and outputs $Y$.
			Therefore, we assume that $\abs{X^*} \ge c$.
			Let $Y^*$ be any subset of $X^*$ of size $c$.
			Then, $Y^*$ is among the sets chosen as $Y$ in the algorithm.
			Moreover, $\OPT_{\mathrm{VC}}(G-Y^*) = \OPT_{\mathrm{VC}}(G) - \abs{Y^*}$.
			Hence,
			\begin{align*}
				\abs{Y^* \cup Z_{Y^*}} &= \abs{Y^*} + \abs{Z_{Y^*}} \le \abs{Y^*} + 2\OPT_{\mathrm{VC}}(G-Y^*) = \abs{Y^*} + 2(\OPT_\mathrm{VC}(G) - \abs{Y^*}) \\
				&=2 \OPT_\mathrm{VC}(G) - c.\qedhere
			\end{align*}
		\end{proof}

		\begin{algorithm}[t]
			\DontPrintSemicolon
			\SetKwInOut{Input}{input}\SetKwInOut{Output}{output}
			\Input{A graph $G=(V,E)$}
			\Output{A vertex cover $X$ of $G$ of size at most $\OPT_{\mathrm{VC}}(G) + \OPT_\mathrm{SVD}(G)$.}
			\If{$G$ is edgeless}{\Return $\emptyset$}		
			Find a $2$-maximal clique $Z$ in $G$\;
			\If{$G-Z$ contains a vertex cover $X$ of size at most $1$}{
				\If{there is a vertex $v \in V\setminus Z$ such that $(Z\setminus \{v\}) \cup X$ is a vertex cover in $G$}{\Return $(Z\setminus \{v\}) \cup X$\label{l:vc-cvd-1}}
				\Return $Z\cup X$\label{l:vc-cvd-2}
			}
			Call the algorithm in \cref{lemma:vc-2-plus-c} with $c=2$ on $G-Z$ and store the result in $X'_1$\;\label{l:split-call-lemma}
			Recursively call this algorithm on $G-Z$ and store the result in $X'_2$\;
			$X_i \gets X'_i \cup Z$ for $i\in \{1,2\}$\;
			\Return $X_1$ or $X_2$, whichever is smaller		
			\caption{}
			\label{alg:svd}
		\end{algorithm}
	
		\begin{theorem}
			\label{thm:vc-split}
			There is a polynomial-time algorithm that, given a graph $G$, computes a vertex cover size at most $\OPTVC(G)+ \OPT_\mathrm{SVD}(G)$.
		\end{theorem}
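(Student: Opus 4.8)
The plan is to prove by induction on $\abs{V(G)}$ that \cref{alg:svd} runs in polynomial time and returns a vertex cover of size at most $\OPTVC(G) + \OPT_\mathrm{SVD}(G)$; write $s \coloneqq \OPT_\mathrm{SVD}(G)$. Polynomial running time and feasibility are routine: a $2$-maximal clique is found by local search starting from an edge (so $\abs{Z}\ge 1$ and the recursion strictly shrinks the graph, giving depth at most $\abs{V(G)}$), and every return statement visibly yields a vertex cover. The backbone of the analysis is the clique bound $\OPTVC(G) \ge \OPTVC(G-Z) + (\abs{Z}-1)$, which holds because any vertex cover of $G$ uses at least $\abs{Z}-1$ vertices of the clique $Z$ and its trace on $V\setminus Z$ covers $G-Z$.

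For the recursive case (from \cref{l:split-call-lemma} on, entered when $\OPTVC(G-Z)\ge 2$), the key is a dichotomy on whether some minimum split modulator $M$ of $G$ avoids $Z$. First I would observe that $2$-maximality of $Z$ is inherited by every induced subgraph, and that $M\setminus Z$ is always a split modulator of $G-Z$ since $(G-Z)-(M\setminus Z)=(G-M)-Z$ is an induced subgraph of the split graph $G-M$. If every minimum modulator meets $Z$, then $\OPT_\mathrm{SVD}(G-Z)\le s-1$, and the recursive solution $X_2$ satisfies $\abs{X_2}\le \OPTVC(G-Z)+(s-1)+\abs{Z}\le \OPTVC(G)+s$ by the induction hypothesis and the clique bound. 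Otherwise, fixing a minimum modulator $M$ disjoint from $Z$, I apply \cref{lemma:clique-in-split} to the split graph $G-M$ (in which $Z$ is still a $2$-maximal clique) to get that $(G-M)-Z$ has a vertex cover of size at most $1$; adjoining $M$ yields $\OPTVC(G-Z)\le \abs{M}+1=s+1$, so the solution $X_1$ built from \cref{lemma:vc-2-plus-c} satisfies $\abs{X_1}\le (2\OPTVC(G-Z)-2)+\abs{Z}\le \OPTVC(G)+s$. Since the algorithm returns the smaller of $X_1,X_2$, the bound holds in both branches.

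For the base-like case where $G-Z$ has a vertex cover $X$ with $\abs{X}\le 1$, I would argue directly. If $\abs{X}=0$ then $V\setminus Z$ is independent and $G$ is split, so $s=0$ and I must return an exactly optimal cover; dropping a clique vertex (\cref{l:vc-cvd-1}) gives size $\abs{Z}-1\le\OPTVC(G)$, and if no clique vertex can be dropped then every vertex of $Z$ has an outside neighbor, which forces $\OPTVC(G)\ge\abs{Z}$, matching the output $Z$ of \cref{l:vc-cvd-2}. If $\abs{X}=1$ then $G-Z$ has an edge, so $V\setminus Z$ is not independent and the clique bound sharpens to $\OPTVC(G)\ge\abs{Z}$; thus \cref{l:vc-cvd-1} (size $\abs{Z}$) is within budget, and it only remains to handle \cref{l:vc-cvd-2} (size $\abs{Z}+1$), where I must show $s\ge 1$.

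The main obstacle is exactly this last point: proving that if $\abs{X}=1$ and no clique vertex of $Z$ can be dropped, then $G$ is not split. I expect to prove the contrapositive using $2$-maximality: assuming a split partition $C\uplus I_G$, the swap argument of \cref{lemma:clique-in-split} forces $\abs{C\setminus Z}\le 1$ and $\abs{Z\cap I_G}\le 1$, and then the hypotheses that every vertex of $Z$ has a neighbor outside $Z$ other than the single cover vertex and that $V\setminus Z$ has an edge successively rule out $Z\cap I_G\ne\emptyset$, then $Z\supseteq C$, and finally $Z=C\setminus\{v_1\}$ (the last because $Z\cup\{v_1\}$ would be a larger clique, contradicting maximality). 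This case chase, tying the local-search optimality of $Z$ to the global split structure, is the delicate part; the recursive case, by contrast, is clean once the modulator dichotomy is set up.
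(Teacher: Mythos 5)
Your proposal is correct, and in the recursive case it is essentially the paper's own argument: the same dichotomy on whether some minimum split modulator avoids the $2$-maximal clique $Z$ (the paper phrases it by fixing one minimum modulator $M$ and branching on $Z\cap M=\emptyset$), the same use of \cref{lemma:clique-in-split} to get $\OPTVC(G-Z)\le\OPT_\mathrm{SVD}(G)+1$ in the avoiding case, the same invocation of \cref{lemma:vc-2-plus-c} with $c=2$, and the same clique-bound chain. Where you genuinely depart from the paper is the terminating case of \cref{alg:svd} (lines~\ref{l:vc-cvd-1} and~\ref{l:vc-cvd-2}): the paper dismisses it with the assertion that an optimum solution is ``easy to see'' to be returned there, and that assertion is in fact false for the algorithm as written, because the output depends on which size-$(\le 1)$ cover $X$ of $G-Z$ the algorithm happens to find. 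Concretely, take $Z=\{v,u_1,u_2\}$ a triangle, outside vertices $w,x,y$ with the single edge $\{w,x\}$ in $G-Z$, and connecting edges $\{v,w\},\{u_1,y\},\{u_2,y\}$: then $Z$ is $2$-maximal and $\OPTVC(G)=3$ (via $\{u_1,u_2,w\}$), but if the algorithm picks $X=\{x\}$ no vertex of $Z$ can be dropped and line~\ref{l:vc-cvd-2} returns a cover of size $4$. \cref{thm:vc-split} survives precisely because such a graph cannot be split (here $\{u_1,u_2\}$ and $\{w,x\}$ induce a $2K_2$, so $\OPT_\mathrm{SVD}(G)\ge 1$ pays for the extra vertex), and that is exactly the statement your ``delicate part'' proves --- your case chase (swap argument plus the two hypotheses, which I checked goes through) is not a stylistic alternative but the argument actually needed to make the proof sound; the other fix would be to strengthen the algorithm to enumerate all pairs $(v,X)$ with $X$ a size-$(\le 1)$ cover of $G-Z$, which restores exact optimality in that case. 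Two small points you should make explicit: you silently correct the paper's typo $v\in V\setminus Z$ to $v\in Z$ in \cref{alg:svd} (necessary, since otherwise the test is vacuous), and your analysis assumes the cover $X$ found for $G-Z$ has minimum size (so that $\abs{X}=1$ implies $G-Z$ has an edge); without that normalization line~\ref{l:vc-cvd-2} can again overshoot on split graphs, so the algorithm should test $X=\emptyset$ before singletons.
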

		\begin{proof}
			The algorithm is presented as \Cref{alg:svd}.
			It is clear that this algorithm can be implemented in polynomial time.
			We must show that the solution it outputs has size at most $\OPTVC+ \OPT_\mathrm{SVD}$.
			
			We will prove the claim by induction on the number of recursive calls.
			If $G$ is edgeless, then clearly $\emptyset$ is an optimum solution.
			It is also easy to see that, if the algorithm returns in lines~\ref{l:vc-cvd-1} or~\ref{l:vc-cvd-2}, then it returns an optimum solution.
			We now consider the case where $G$ is not edgeless and $G-Z$ does not contain a vertex cover of size at most $1$.
			Let $M$ be a minimum modulator to a split graph in $G$ and let $X^*$ be a minimum vertex cover in $G$.
			We distinguish two cases, depending on whether or not the chosen $2$-maximal clique $Z$ intersects $M$ or not.
			
			If $Z$ does not intersect $M$, we claim that $\abs{X_1} \le \abs{X^*} + \abs{M}$.
			Because any vertex cover contains all but at most one vertex from any clique, it follows that:
			\begin{align*}
			\abs{X^* \cap Z} \ge \abs{Z} -1.
			\end{align*}
			The clique $Z$ is a $2$-maximal clique in the split graph $G-M$.
			By \cref{lemma:clique-in-split}, $\OPT_\mathrm{VC}(G-(M \cup Z)) \le 1$.
			It follows that:
			\begin{align*}
				\OPT_\mathrm{VC}(G-Z) \le \abs{M} + 1.
			\end{align*}
			Because $X'_1$ is obtained using the algorithm in \cref{lemma:vc-2-plus-c} with $c=2$ and $\OPTVC(G-Z) \ge 2$, it follows that:
			\begin{align*}
				\abs{X'_1} \le 2\OPT_\mathrm{VC}(G-Z) -2
			\end{align*}
			Putting all this together we get that:
			\begin{align*}
				\abs{X^*} + \abs{M} &= \abs{X^* \cap Z} + \abs{X^* \setminus Z} + \abs{M}\\
				&\ge \abs{Z} -1 + \abs{X^* \setminus Z} + \abs{M}\\
				&\ge \abs{Z} -1 + \OPT_\mathrm{VC}(G-Z) + \abs{M} \\
				&\ge \abs{Z} -1 + \OPT_\mathrm{VC}(G-Z) + \OPT_\mathrm{VC}(G-Z) - 1\\
				&=\abs{Z} +2\OPT_\mathrm{VC}(G-Z) -2\\
				&\ge \abs{Z} +\abs{X'_1} = \abs{X_1}
			\end{align*}
			
			Now consider the case where $Z$ does intersect $M$.
			Here, we claim that $\abs{X_2} \le \abs{X^*} + \abs{M}$.
			By induction hypothesis
			\begin{align*}
				\abs{X'_2} &\le \OPT_\mathrm{VC}(G-Z) + \OPT_\mathrm{SVD}(G-Z) \le \OPT_\mathrm{VC}(G) - \abs{Z} + 1 + \OPT_\mathrm{SVD}(G) - 1 \\ &= \OPT_\mathrm{VC}(G) + \OPT_\mathrm{SVD}(G) -\abs{Z}.
			\end{align*}
			Hence,
			\begin{align*}
				\abs{X_2} = \abs{Z} + \abs{X_2'} \le \abs{Z} + \OPT_\mathrm{VC}(G) + \OPT_\mathrm{SVD}(G) - \abs{Z} = \OPT_\mathrm{VC}(G) + \OPT_\mathrm{SVD}(G).
			\end{align*}
		\end{proof}
	
		Just as in the discussion at the end of \cref{sec:vc-cvd}, this result is tight under the Unique Games Conjecture.
		It also beats the solution quality obtained by applying the structural rounding approach and \cref{thm:struct-rounding}.
		The best known approximation algorithms for \textsc{Split Vertex Deletion} achieve a ratio of $2+\eps$ for any $\eps>0$~\cite{Drescher2024,Lokshtanov2020}.
		Hence, with structural rounding, we would get a vertex cover of weight at most $\OPTWVC + (2+\eps)\OPT_\mathrm{SVD}$.
		This approach does, however, have the advantage that it also works in the weighted case.

		\subsection{A lower bound}
		\label{sec:vc-oct}
		In this section, we will show that there is presumably no EPA for \textsc{Vertex Cover} parameterized by the size of an odd cycle transversal with ratio less than $2$ and linear additive error. 
		This is a consequence of the following known approximation lower bound.
		\begin{theorem}[Bansal and Khot~\cite{Bansal2009}]
			\label{thm:Bansal}
			Assuming the Unique Games Conjecture and that $\mathrm{P}\ne \mathrm{NP}$, then for any constant $\eps,\delta>0$ there is no polynomial-time algorithm that, given an $n$-vertex graph containing two disjoint independent each of size $(\frac{1}{2}-\eps)n$, finds an independent set of size at $\delta n$.
		\end{theorem}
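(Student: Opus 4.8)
The plan is to prove the statement as a reduction from the Unique Games Conjecture (UGC) via a long-code dictatorship test, following the template of Khot--Regev-style vertex cover hardness but with the stronger completeness guarantee that the produced graph carries \emph{two} disjoint large independent sets. Fix a unique games instance $\mathcal{U}$ on label set $[R]$ with bijective constraints $\pi_{e}$; the UGC asserts that it is NP-hard to distinguish instances where some labeling satisfies a $(1-\eta)$-fraction of constraints from instances where every labeling satisfies at most an $\eta$-fraction, for $\eta$ arbitrarily small. First I would build a graph $G$ whose vertices form ``clouds'' $\{w\}\times\{-1,1\}^{R}$, one per variable $w$ of $\mathcal{U}$, so a graph vertex is a pair $(w,x)$. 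Edges are placed by a carefully chosen test distribution $\mathcal{T}$: sample a pair of constraints incident to a common variable, sample a correlated pair of cube points $(x,y)$ while using the relevant permutations to realign coordinates, and join $(w_1,x)$ to $(w_2,y)$. The distribution is engineered so that the test ``reads one free bit'', connecting two points only when they disagree on the decoded coordinate.

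For completeness, given a labeling $\sigma$ satisfying a $(1-\eta)$-fraction of constraints, I would exhibit the two independent sets directly as the positive and negative dictators $I^{+}=\{(w,x): x_{\sigma(w)}=1\}$ and $I^{-}=\{(w,x): x_{\sigma(w)}=-1\}$. Each has density exactly $\tfrac{1}{2}$ in every cloud, hence total size $n/2$; since $\mathcal{T}$ never connects two points that agree on their decoded coordinate, across a satisfied constraint neither $I^{+}$ nor $I^{-}$ spans an edge, and the $\eta$-fraction of violated constraints is absorbed into the $\eps n$ slack, yielding two disjoint independent sets of size $(\tfrac{1}{2}-\eps)n$.

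The soundness analysis is where the real work lies. Suppose $G$ has an independent set $S$ of size $\delta n$; let $f_{w}\colon\{-1,1\}^{R}\to\{0,1\}$ be its indicator on cloud $w$, so the clouds have average density $\delta$ and a constant fraction have density $\Omega(\delta)$. Independence becomes the analytic statement $\mathbb{E}_{\mathcal{T}}[f_{w_1}(x)f_{w_2}(y)]=0$ across test edges. The plan is to apply Fourier analysis together with an invariance principle / Majority-Is-Stablest-type bound: a dense function obeying this constraint under the noise built into $\mathcal{T}$ cannot be ``spread out'' like a low-influence function, so each such $f_{w}$ must have a coordinate of non-negligible influence. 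Collecting influential coordinates into a candidate labeling, and arguing that constraints of $\mathcal{U}$ between high-density clouds force their influential coordinates to be consistent under $\pi_{e}$, one decodes a labeling satisfying an $\Omega_{\delta}(1)$-fraction of constraints, contradicting UGC soundness once $\eta$ is small enough. Chaining the two cases, the reduction sends satisfiable instances to graphs with two $(\tfrac{1}{2}-\eps)n$ independent sets (on which the promised algorithm must output a $\delta n$ independent set) and unsatisfiable instances to graphs with no $\delta n$ independent set at all; running the hypothesized algorithm and verifying its output therefore decides the UGC gap problem, a contradiction.

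The main obstacle is exactly the design of $\mathcal{T}$ so that it simultaneously achieves \emph{perfect} completeness for the two-dictator partition (no edge within either $I^{\pm}$ across satisfied constraints) and \emph{strong} soundness (every dense cube function violates the constraint unless it has an influential coordinate). Balancing these is what makes the Bansal--Khot ``optimal long code test with one free bit'' delicate: the acceptance pattern must be tight enough that the completeness density is $\tfrac{1}{2}$ on the nose while the soundness Fourier bound still rules out low-influence dense sets. The invariance step and the quantitative dependence of the recovered influence on $\delta$ (uniform in $R$) are the technically heaviest pieces.
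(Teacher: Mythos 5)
First, note that the paper does not prove this statement at all: \cref{thm:Bansal} is imported as a black-box result of Bansal and Khot~\cite{Bansal2009}, and the paper's only use of it is the derivation of \cref{cor:vc-oct}. So your attempt is effectively being measured against the original Bansal--Khot proof, not anything internal to this paper. At the level of architecture, your sketch does reproduce that proof correctly: a reduction from Unique Games with one long-code cloud $\{-1,1\}^R$ per variable, a ``one free bit'' test distribution defining the edges, completeness witnessed by the two dictator halves $I^{+},I^{-}$ (with the $\eta$-fraction of violated constraints absorbed into the $\eps n$ slack), and soundness by extracting influential coordinates from dense independent-set indicators and decoding a labeling that contradicts UGC soundness. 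Your final chaining step --- that a search algorithm for $\delta n$ independent sets would decide the gap problem since its output is verifiable --- is also the right way to pass from the promise-distinguishing hardness to the algorithmic statement as phrased in the theorem.

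However, as a proof rather than a roadmap there is a genuine gap, and it sits exactly where you flag it: the test distribution $\mathcal{T}$ is never constructed, and the soundness lemma is asserted rather than proved. These are not routine fill-ins. Because the construction must have (near-)perfect completeness on \emph{both} dictator halves, the standard noise-stability machinery you invoke (``Majority-Is-Stablest-type bound'') does not directly apply --- Majority Is Stablest governs two-sided noise with imperfect completeness, and a naive noisy test would put edges inside $I^{\pm}$. Bansal and Khot instead design a test based on random subcube restrictions and prove soundness via the ``It Ain't Over Till It's Over'' theorem of Mossel, O'Donnell, and Oleszkiewicz: a low-influence set of density $\delta$ cannot avoid being hit by (in fact, nearly contained in the boundary behavior of) a random subcube, which is the statement that actually forces an influential coordinate uniformly in $R$. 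Without specifying $\mathcal{T}$, invoking the correct analytic theorem, and carrying out the influence-consistency decoding across constraints (including the quantitative dependence of the recovered fraction of satisfied constraints on $\delta$ alone), the argument establishes only that the known proof template is plausible, not the theorem itself.
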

		\begin{corollary}
			\label{cor:vc-oct}
			Let  $\alpha < 2$ and $\beta \ge 1$ be given constants.
			Assuming the Unique Games Conjecture and that $\mathrm{P}\ne \mathrm{NP}$, there is no polynomial-time algorithm that, given a graph $G$, computes a vertex cover of size at most $\alpha \OPTVC(G) + \beta \OPT_\mathrm{OCT}(G)$.	\end{corollary}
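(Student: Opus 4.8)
The plan is to argue by contradiction, reducing from the hardness in \cref{thm:Bansal}. Suppose that for some constants $\alpha < 2$ and $\beta \ge 1$ there were a polynomial-time algorithm $\mathcal{A}$ that, given any graph $G$, computes a vertex cover of size at most $\alpha\OPTVC(G) + \beta\OPT_\mathrm{OCT}(G)$. I would run $\mathcal{A}$ on the promise instances of \cref{thm:Bansal} and take the complement of its output to extract a large independent set, contradicting the theorem for a suitable choice of the constants $\eps,\delta$.

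Concretely, let $G$ be an $n$-vertex graph containing two disjoint independent sets $A,B$, each of size $(\frac12-\eps)n$. The heart of the argument is two simple bounds on the quantities appearing in the guarantee. First, deleting the $n - 2(\frac12-\eps)n = 2\eps n$ vertices lying outside $A\cup B$ leaves the graph induced on $A\cup B$, which is bipartite since $A$ and $B$ are independent; hence $\OPT_\mathrm{OCT}(G)\le 2\eps n$. Second, since $A$ is an independent set, its complement $V\setminus A$ is a vertex cover, so $\OPTVC(G) \le n-(\frac12-\eps)n = (\frac12+\eps)n$.

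Feeding $G$ to $\mathcal{A}$ therefore yields a vertex cover $X$ with
\[
\abs{X} \le \alpha\left(\tfrac12+\eps\right)n + \beta\cdot 2\eps n = \tfrac{\alpha}{2}n + (\alpha+2\beta)\eps n.
\]
Its complement $I\coloneqq V\setminus X$ is an independent set of size at least $(1-\frac{\alpha}{2})n - (\alpha+2\beta)\eps n$. Since $\alpha<2$, the coefficient $1-\frac{\alpha}{2}$ is a positive constant; I would set $\delta\coloneqq \frac12(1-\frac{\alpha}{2})>0$ and then choose $\eps>0$ small enough that $(\alpha+2\beta)\eps \le \delta$. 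With these constants, $\abs{I}\ge \delta n$, so $\mathcal{A}$ furnishes a polynomial-time algorithm finding an independent set of size $\delta n$ in every instance meeting the promise, contradicting \cref{thm:Bansal}.

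The reduction is immediate once the two estimates are in hand, so there is no genuine technical obstacle; the only point requiring care is the order of the quantifiers. I would fix $\delta$ purely as a function of $\alpha$ first, and only afterwards shrink $\eps$ to absorb the error term, so that both $\eps$ and $\delta$ are honest constants (independent of $n$) before invoking the Bansal--Khot hardness.
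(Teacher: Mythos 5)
Your proposal is correct and follows essentially the same argument as the paper: bound $\OPT_\mathrm{OCT}(G)\le 2\eps n$ and $\OPTVC(G)\le(\frac12+\eps)n$ on the promise instances of \cref{thm:Bansal}, run the hypothetical algorithm, and take the complement of its output to obtain an independent set of size $(1-\frac{\alpha}{2})n-(\alpha+2\beta)\eps n$, contradicting Bansal--Khot for suitable constants. The only (immaterial) difference is the order in which the constants are fixed: you pin down $\delta$ as a function of $\alpha$ and then shrink $\eps$, whereas the paper picks $\eps<\frac{1-\alpha/2}{\alpha+2\beta}$ first and then any admissible $\delta$; both choices are legitimate since the theorem quantifies over all constant pairs.
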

		\begin{proof}
			Suppose that there is such an approximation and let $G$ be an $n$-vertex graph containing two disjoint independent each of size $(\frac{1}{2}-\eps)n$.
			Then, $\OPT_{\mathrm{OCT}}(G) \le 2\eps n$ and $\OPT_{\mathrm{VC}}(G) \le (\frac{1}{2}+\eps)n$.
			If $X$ is the vertex cover output by the algorithm, then $V \setminus X$ is an independent set of size at least
			\begin{align*}
				n - \abs{X} & \ge n - \alpha \OPT_{\mathrm{VC}}(G) - \beta \OPT_{\mathrm{OCT}}(G) \\
				& \ge n - \alpha(\frac{1}{2} + \eps)n - 2\beta \eps n\\
				& = (1-\alpha(\frac{1}{2} + \eps)-2\beta \eps)n.
			\end{align*}
			Setting $0<\eps<\frac{1-\frac{\alpha}{2}}{\alpha + 2\beta}$ and $0<\delta<1-\alpha(\frac{1}{2} + \eps)-2\beta \eps$ yields a contradiction to \cref{thm:Bansal}.
			Note that both of the upper bounds are positive because $\alpha < 2$.
		\end{proof}

		The best known polynomial-time approximation algorithm for \textsc{Odd Cycle Transversal} is due to Agarwal et al.~\cite{Agarwal2005} and achieves an approximation ratio of $\bigO(\sqrt{\log n})$.
		As observed by Kratsch and Wahlstr\"{o}m~\cite{Kratsch2014}, this can be combined with an FPT algorithm to compute an odd cycle transversal of size at most $\OPT^{3/2}$.
		Combining this with a straightforward application of the structural rounding approach, gives a polynomial time algorithm that computes a vertex cover of size at most $\OPTVC + \OPT_\mathrm{OCT}^{3/2}$.

\section{Connected Vertex Cover}
	\label{sec:cvc}
	In the following, we will give an EPA for \textsc{Connected Vertex Cover} parameterized by the size of a minimum modulator to a split graph.
	This algorithm is similar to the one for \textsc{Vertex Cover} with the same parameter in \cref{sec:vc-svd}.
	Like in that case, we will only consider unweighted graphs.
	\textsc{Weighted Connected Vertex Cover} does not have a constant-factor approximation unless all problems in NP can be solved in time $n^{\bigO(\log\log n)}$~\cite{Fujto2001}.
	Since $\OPT_\mathrm{SVD} \le \OPT_\mathrm{CVC}$, it follows that a polynomial-time algorithm that computes a connected vertex cover of weight at most $\alpha \OPT_\mathrm{CVC} + \beta \OPT_\mathrm{SVD}$ would imply a constant-factor approximation for \textsc{Weighted Connected Vertex Cover}.
	Therefore, such an algorithm is unlikely.
	
	The following algorithm for the unweighted case has superficial similarities with the approximate kernelization for \textsc{Connected Vertex Cover} parameterized by the size of a modulator to a split graph introduced by Krithika et al.~\cite{Krithika2018}.
	Both our algorithm and the approximate kernelization work by contracting cliques to a single vertex, but the details of the algorithms and the underlying arguments for why contracting cliques is effective are fundamentally different.
	
	Given a graph $G=(V,E)$ and a vertex set $Y\subseteq V$, let $G\langle Y\rangle$ be the graph obtained from $G$ by contracting the vertices in $Y$ to a single vertex and appending a leaf to this vertex.
	
	\begin{lemma}
		Let $G=(V,E)$ be a split graph.
		Let $Z\subseteq V$ be a $2$-maximal clique in $G$.
		Then, $G\langle Z \rangle$ has a vertex cover of size at most $2$.
		\label{lemma:cvc-clique-in-split}
	\end{lemma}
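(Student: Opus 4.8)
The plan is to piggyback directly on \cref{lemma:clique-in-split}, which already tells us that deleting the $2$-maximal clique $Z$ leaves an induced subgraph $G-Z$ with a vertex cover of size at most~$1$. The only genuinely new ingredient here is the contraction operation, and the key observation is that contracting $Z$ to a single vertex collapses every edge that previously touched $Z$ into an edge incident to that one new vertex. Consequently, one extra vertex in the cover simultaneously disposes of all edges involving (the former) $Z$ as well as the appended leaf.

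Concretely, first I would apply \cref{lemma:clique-in-split} to obtain a set $X \subseteq V\setminus Z$ with $\abs{X}\le 1$ that covers every edge of $G-Z$. Let $z$ denote the vertex of $G\langle Z\rangle$ arising from the contraction of $Z$, and let $\ell$ be the leaf appended to $z$. I claim that $X \cup \{z\}$ is a vertex cover of $G\langle Z\rangle$; since $\abs{X\cup\{z\}}\le 2$, this immediately yields the lemma.

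To verify the claim I would split the edge set of $G\langle Z\rangle$ according to incidence with $z$. By the definition of $G\langle Z\rangle$, the edges \emph{not} incident to $z$ are precisely the edges of $G[V\setminus Z]=G-Z$, because contraction and the addition of the leaf only affect edges touching $z$. These edges are all covered by $X$. Every edge \emph{incident} to $z$ -- whether it runs to a vertex of $V\setminus Z$ or to the leaf $\ell$ -- is covered simply by including $z$. Hence $X\cup\{z\}$ covers all edges of $G\langle Z\rangle$, as desired.

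There is essentially no obstacle in this argument: the entire combinatorial content is already packaged in \cref{lemma:clique-in-split}, and the contraction is handled by a single additional vertex. The only point requiring a (trivial) check is that contraction introduces no edge among $V\setminus Z$ that was absent in $G-Z$, which is immediate since contraction merely identifies the vertices of $Z$ and reroutes their incident edges to the single vertex $z$.
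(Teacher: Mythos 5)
Your proof is correct, and it takes a more modular route than the paper's. The paper's own proof of \cref{lemma:cvc-clique-in-split} never invokes \cref{lemma:clique-in-split}; instead it repeats the split-representation case analysis from scratch: if $Z=C$ then $G\langle Z\rangle$ is a star covered by the contraction vertex alone, and otherwise $Z$ contains exactly one vertex of the independent side, $2$-maximality forces $\abs{C\setminus Z}\le 1$, and the cover consists of that one leftover vertex together with the contraction vertex. You instead use \cref{lemma:clique-in-split} as a black box to get a cover $X$ of $G-Z$ with $\abs{X}\le 1$, and then observe that every edge of $G\langle Z\rangle$ is either an edge of $G-Z$ or incident to the contraction vertex $z$ (the leaf edge included), so $X\cup\{z\}$ suffices; your closing check that contraction creates no new edges inside $V\setminus Z$ is exactly the point that makes this split clean. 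Your route buys brevity and avoids duplicating an argument the paper has already made once; what it gives up is only the slightly finer information the inline derivation yields (e.g., that a single vertex suffices when $Z=C$), which is never used later. Both proofs ultimately rest on the same two facts — the contraction vertex absorbs all edges touching $Z$, and $G-Z$ is coverable by one vertex — so the difference is one of organization rather than of mathematical substance, but your reduction is arguably the cleaner way to present it.
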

	\begin{proof}
		Let $C$ and $I$ be the clique and the independent set, respectively, in a split representation of $G$.
		Let $v$ be the vertex in $G\langle Z \rangle$ that results from contracting $Z$.
		If $X=C$, then $G\langle Z \rangle$ is a star and $\{v\}$ is a vertex cover of size $1$.
		Since $X$ is a clique, it may contain at most one vertex from $I$ and, if $X \ne C$, then $C$ must contain exactly one vertex $u \in I$, since otherwise it would not be a maximal clique.
		If there are distinct $v_1,v_2 \in C\setminus X$, then $(X \setminus \{u\}) \cup \{v_1,v_2\}$ is also a clique, contradicting the assumption in the statement.
		Hence, there is at most one vertex $u\in C\setminus X$.
		Then, $\{u,v\}$ is a vertex cover in $G\langle Z \rangle$.
	\end{proof}

	\begin{lemma}
		\label{lemma:cvc-subset}
		Let $G$ be a connected graph and $X$ a minimum connected vertex cover in $G$. Let $Y \subseteq X$ such that $G[Y]$ is connected.
		Then, $\OPT_\mathrm{CVC}(G)  = \OPT_\mathrm{CVC}(G\langle Y \rangle) + \abs{Y} - 1$.
	\end{lemma}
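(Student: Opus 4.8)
The plan is to establish the claimed equality as two separate inequalities, namely $\OPT_\mathrm{CVC}(G) \le \OPT_\mathrm{CVC}(G\langle Y\rangle) + \abs{Y} - 1$ and $\OPT_\mathrm{CVC}(G\langle Y\rangle) \le \OPT_\mathrm{CVC}(G) - \abs{Y} + 1$. Write $v$ for the vertex of $G\langle Y\rangle$ obtained by contracting $Y$ and $\ell$ for the leaf appended to it. The single most useful preliminary observation is that the leaf forces $v$ into every connected vertex cover of $G\langle Y\rangle$: the edge $\{v,\ell\}$ must be covered, and if only $\ell$ were taken then $\ell$ would be isolated in the induced subgraph, so the cover could not be connected unless it equalled $\{\ell\}$, which fails to cover any other edge. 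Symmetrically, for a minimum connected vertex cover $X'$ of $G\langle Y\rangle$ I may assume $\ell \notin X'$, since $\ell$ is a leaf and dropping it preserves both coverage and connectivity.

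For the bound $\OPT_\mathrm{CVC}(G\langle Y\rangle) \le \OPT_\mathrm{CVC}(G) - \abs{Y} + 1$ I would use the hypothesis directly: starting from the given minimum connected vertex cover $X \supseteq Y$, set $X' \coloneqq (X \setminus Y) \cup \{v\}$, which has size $\abs{X} - \abs{Y} + 1$. That $X'$ covers every edge of $G\langle Y\rangle$ is routine, as the leaf edge and all edges formerly incident to $Y$ are covered by $v$, while edges between two uncontracted vertices are already covered by $X \setminus Y$. For connectivity, note that $G\langle Y\rangle[X']$ is exactly the image of $G[X]$ under contracting the connected set $Y$; since contracting a connected vertex subset of a connected graph again yields a connected graph, $G\langle Y\rangle[X']$ is connected. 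This is precisely where the hypothesis that $Y$ lies inside a \emph{minimum} connected vertex cover $X$ is essential: it supplies a minimum cover that can be contracted cleanly.

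For the reverse bound $\OPT_\mathrm{CVC}(G) \le \OPT_\mathrm{CVC}(G\langle Y\rangle) + \abs{Y} - 1$ I would take a minimum connected vertex cover $X'$ of $G\langle Y\rangle$ with $v \in X'$ and $\ell \notin X'$, and expand it to $X'' \coloneqq (X' \setminus \{v\}) \cup Y$ in $G$, of size $\abs{X'} - 1 + \abs{Y}$. Coverage is again immediate, since every edge of $G$ either lies inside $Y$ or corresponds to an edge of $G\langle Y\rangle$ covered by $X'$. The main obstacle is the connectivity of $G[X'']$: here I would lift any $v$-passing path of $G\langle Y\rangle[X']$ by observing that each neighbor $u$ of $v$ in $X'$ has a neighbor in $Y$ in $G$, so using the connectivity of $G[Y]$ one can reroute a transition $u - v - u'$ through a path inside $Y$. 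This rerouting, together with $G[Y]$ being connected, shows that $G[X'']$ is connected. Combining the two inequalities yields the claimed equality; the degenerate case $V(G) = Y$ stays consistent because the hypothesis then forces $X = V(G)$.
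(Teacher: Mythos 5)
Your proposal is correct and follows essentially the same route as the paper's proof: both directions are established by exactly the same two constructions, namely contracting the given minimum cover $X$ to $(X\setminus Y)\cup\{v\}$ in $G\langle Y\rangle$, and expanding a minimum cover $X'\ni v$ of $G\langle Y\rangle$ (with $v$ forced in by the appended leaf) to $(X'\setminus\{v\})\cup Y$ in $G$. The only difference is that you spell out the coverage and connectivity verifications (contraction of a connected subset, rerouting paths through $G[Y]$) and the degenerate case $Y=V(G)$, which the paper leaves implicit.
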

	\begin{proof}
		Let $v$ be the vertex in $G \langle Y \rangle$ obtained by contracting $Y$.
		
		$\ge$:
		The set $(X \setminus Y) \cup \{v\} $ is a connected vertex cover in $G\langle Y \rangle$ of size at most $\abs{X} - \abs{Y} + 1 = \OPT_\mathrm{CVC}(G) - \abs{Y} + 1$.
		
		$\le$: Let $X'$ be a minimum connected vertex cover in $G\langle Y \rangle$.
		Since $v$ is adjacent to a leaf, $v\in X'$.
		Then, $(X'\setminus \{v\}) \cup Y$ is a connected vertex cover of size at most $\OPT_\mathrm{CVC}(G\langle Y \rangle) + \abs{Y} - 1$ in $G$.
	\end{proof}
	
	\begin{lemma}
		\label{lemma:cvc-2-plus-c}
		For any constant $c$, there is a polynomial-time algorithm that computes a connected vertex cover of size at most $\max\{\OPT_\mathrm{CVC}(G),\OPT_\mathrm{CVC}(G)+\OPT_\mathrm{VC}(G)-c\}$ in any connected graph $G$.
	\end{lemma}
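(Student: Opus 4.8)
The plan is to follow the template of \cref{lemma:vc-2-plus-c}: brute force a set $Y$ of at most $c$ vertices that a good solution can afford to contain, pay for it cheaply, approximate the remainder with a base routine for connected vertex cover, and output the smallest connected vertex cover found over all guesses.

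Two ingredients drive the base routine. First, for \emph{connected} vertex cover there is a clean analogue of the $2$-approximation used in \cref{lemma:vc-2-plus-c}: the internal (non-leaf) vertices of a depth-first-search tree of a connected graph form a connected vertex cover, and a classical argument of Savage bounds its size by $2\OPT_\mathrm{VC}$; since $\OPT_\mathrm{VC}\le \OPT_\mathrm{CVC}$, this is already at most $\OPT_\mathrm{CVC}+\OPT_\mathrm{VC}$. Second, to incorporate $Y$ without breaking connectivity I would use the contraction $G\langle Y\rangle$ together with \cref{lemma:cvc-subset}: if $G[Y]$ is connected then contracting $Y$ lowers the connected-vertex-cover optimum by exactly $|Y|-1$, and any connected vertex cover of $G\langle Y\rangle$ lifts to one of $G$ at an additive cost of $|Y|-1$. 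The low-optimum case is then easy: if $\OPT_\mathrm{VC}\le c$ then an optimal vertex cover has at most $c$ vertices and can be connected with at most $c-1$ extra vertices, so $\OPT_\mathrm{CVC}\le 2c-1$; hence separately brute forcing all vertex subsets of size at most $2c-1$ and keeping the smallest connected vertex cover returns an exact optimum, matching the first term of the claimed maximum.

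For the main case $\OPT_\mathrm{VC}>c$ the target is $\OPT_\mathrm{CVC}+\OPT_\mathrm{VC}-c$, which lies above $\OPT_\mathrm{CVC}$, so there is slack of $\OPT_\mathrm{VC}-c$. Here I would guess a size-$c$ subset $Y$ of an optimal vertex cover $V^*$; then $V^*\setminus Y$ is an optimal vertex cover of $G-Y$, so $\OPT_\mathrm{VC}(G-Y)=\OPT_\mathrm{VC}-c$. Running the depth-first-search cover on each component of $G-Y$, each rooted at a vertex adjacent to $Y$, yields a vertex cover $W$ of $G-Y$ with $|W|\le 2(\OPT_\mathrm{VC}-c)$ whose restriction to every component attaches to $Y$, so that $S\coloneqq Y\cup W$ is a vertex cover of $G$ with $|S|\le 2\OPT_\mathrm{VC}-c$. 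When $Y$ can additionally be chosen connected in $G$, contracting it and using \cref{lemma:cvc-subset} in place of the reattachment gives a connected vertex cover of size at most $2\OPT_\mathrm{VC}-c\le\OPT_\mathrm{CVC}+\OPT_\mathrm{VC}-c$ directly, using the sharper bound $2\OPT_\mathrm{VC}-1$ for the depth-first-search cover on the contracted instance.

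The hard part will be the connectivity of $S$ when no size-$c$ connected subset of an optimal vertex cover exists (for example when every optimal vertex cover is independent, as for a balanced complete bipartite graph). Every component of $G[S]$ contains a vertex of $Y$, so $G[S]$ has at most $c$ components and can always be reconnected with at most $c-1$ further vertices; but this only recovers the bound $2\OPT_\mathrm{VC}-1$, matching the plain depth-first-search cover and falling short of the target by up to $c-1$. To obtain the full saving of $c$ one must charge the reconnection against the gap $\OPT_\mathrm{CVC}-\OPT_\mathrm{VC}$: the claim to establish is that $Y$ can be chosen so that $G[S]$ has at most $\OPT_\mathrm{CVC}-\OPT_\mathrm{VC}+1$ components, equivalently that reattaching the guessed part never costs more than the slack the target provides over $\OPT_\mathrm{CVC}$. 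Making this component-count bound precise -- reconciling the vertex-cover saving obtained by deleting $Y$ with the connectivity overhead of putting it back -- is the crux of the argument.
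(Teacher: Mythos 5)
Your proposal does not close, and you have correctly located the gap yourself: the whole argument hinges on the unproven claim that a guessed set $Y$ inside a minimum \emph{vertex cover} can be chosen so that the cost of reconnecting $S = Y \cup W$ is absorbed by the slack $\OPT_\mathrm{CVC}-\OPT_\mathrm{VC}$. Your own complete-bipartite example shows that a connected size-$c$ subset of an optimal vertex cover need not exist, and the component-count bound you would need (that $G[S]$ has at most $\OPT_\mathrm{CVC}-\OPT_\mathrm{VC}+1$ components for a suitable $Y$) is precisely what is missing; as you note, the generic reconnection argument only recovers $2\OPT_\mathrm{VC}-1$, which can exceed the target $\OPT_\mathrm{CVC}+\OPT_\mathrm{VC}-c$ by up to $c-1$. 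So as written this is not a proof of the lemma.

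The idea you are missing is \emph{where to guess from}. The paper iterates over all connected sets $Y$ of size $c+1$ and, in the analysis, takes $Y^*$ to be a connected subset of a minimum \emph{connected} vertex cover $X^*$, not of a minimum vertex cover. This removes the connectivity obstacle at the root: $G[X^*]$ is connected, so a connected subset of size $c+1$ exists whenever $\abs{X^*}=\OPT_\mathrm{CVC}\ge c+1$ (in particular whenever $\OPT_\mathrm{VC}>c$), and \cref{lemma:cvc-subset} applies verbatim to give $\OPT_\mathrm{CVC}(G\langle Y^*\rangle)=\OPT_\mathrm{CVC}(G)-\abs{Y^*}+1$, with the lift back from $G\langle Y^*\rangle$ costing exactly $\abs{Y^*}-1$ and preserving connectivity by construction; Savage's guarantee of $\OPT_\mathrm{CVC}(G\langle Y^*\rangle)+\OPT_\mathrm{VC}(G\langle Y^*\rangle)$ on the contracted graph then yields the bound. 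That said, the tension you identified is real and does not fully disappear in the paper's accounting: its calculation also substitutes $\OPT_\mathrm{VC}(G\langle Y^*\rangle)=\OPT_\mathrm{VC}(G)-\abs{Y^*}+1$, which would be justified if $Y^*$ sat inside a minimum vertex cover (your choice), but is asserted without justification for $Y^*\subseteq X^*$ and is false in general --- e.g.\ for $G=P_6$ with $Y^*$ the first three internal vertices, $\OPT_\mathrm{VC}(G\langle Y^*\rangle)=2$ while $\OPT_\mathrm{VC}(G)-\abs{Y^*}+1=1$. In short: guessing inside the minimum vertex cover (your route) makes the vertex-cover saving immediate but breaks connectivity, while guessing inside the minimum connected vertex cover (the paper's route) makes connectivity and the $\OPT_\mathrm{CVC}$ accounting immediate but leaves the $\OPT_\mathrm{VC}$ accounting as the delicate step; a fully rigorous proof has to reconcile the two, which neither your proposal nor the paper's writeup does completely.
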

	\begin{proof}
		Given a connected graph $G=(V,E)$, our algorithm iterates over all vertex sets $Y\subseteq V$ of size exactly $c+1$ such that the graph $G[Y]$ is connected. 
		For each such set $Y$, our algorithm applies Savage's $2$-approximation for \textsc{Connected Vertex Cover}~\cite{Savage1982} to the graph $G\langle Y\rangle$.
		Savage's algorithm outputs a connected vertex cover of size at most $\OPT_\mathrm{CVC}(G\langle Y \rangle) + \OPT_\mathrm{VC}(G\langle Y\rangle)$.
		Let $Z_Y$ be the connected vertex cover obtained by inputting $G\langle Y \rangle$.
		Our algorithm outputs $Y^* \cup (Z_{Y^*}\setminus \{v\})$ where $Y^*$ is the set that minimizes $Y \cup (Z_{Y}\setminus \{v\})$ and $v$ is the vertex in $G \langle Y \rangle$ obtained by contracting $Y$.
		
		Clearly, this algorithm runs in polynomial time for any fixed $c$ and always outputs a feasible connected vertex cover.
		Let $X^*$ be a minimum connected vertex cover in $G$ and $Y^*$ be any subset of $X^*$ of size $c$ such that $G[Y]$ is connected.
		Then, $\OPT_{\mathrm{CVC}}(G\langle Y^* \rangle) = \OPT_{\mathrm{CVC}}(G) - \abs{Y^*} + 1$.
		Hence,
		\begin{align*}
		\abs{Y^* \cup (Z_{Y^*} \setminus \{v\})} &= \abs{Y^*} + \abs{Z_{Y^*}} -1 \le \abs{Y^*} + \OPT_{\mathrm{CVC}}(G\langle Y^* \rangle) + \OPT_\mathrm{VC}(G\langle Y^* \rangle) - 1 \\
		&= \abs{Y^*} + (\OPT_\mathrm{CVC}(G) - \abs{Y^*} + 1) + (\OPT_\mathrm{VC}(G) - \abs{Y^*} +1) - 1 \\
		&=\OPT_\mathrm{CVC} +\OPT_\mathrm{VC} - \abs{Y^*} + 1 = \OPT_\mathrm{CVC} +\OPT_\mathrm{VC} - c\qedhere
		\end{align*}
	\end{proof}

	\begin{lemma}
		\label{lemma:cvc-small-sol-after-contraction}
		For any constant $c$, there is a polynomial-time algorithm that computes a minimum vertex cover in any graph $G$ that contains a clique $Z$, which is given to the algorithm as part of the input, such that $\OPT_\mathrm{CVC}(G\langle Z \rangle) \le c$.
	\end{lemma}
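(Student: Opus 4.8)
The plan is to use the hypothesis $\OPT_\mathrm{CVC}(G\langle Z \rangle)\le c$ to extract a set $W$ of constant size whose deletion turns $G$ into a split graph, and then to compute a minimum vertex cover of $G$ by brute-forcing over how an optimal cover behaves on $W$ while solving the split remainder exactly. This is the standard ``constant-size deletion set to a tractable class'' paradigm, with the twist that the deletion set must first be produced from the connected-vertex-cover bound on the contraction.

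First I would compute a \emph{minimum} connected vertex cover $X'$ of $G\langle Z \rangle$. Since its size is at most the constant $c$, this is done by enumerating all $O(n^{c})$ vertex subsets of size at most $c$ and testing each for being a connected vertex cover, which is polynomial for fixed $c$. Let $v$ be the vertex of $G\langle Z \rangle$ obtained by contracting $Z$ and let $\ell$ be its appended leaf. Since $\ell$ is adjacent only to $v$, a minimum connected vertex cover contains $v$ and omits $\ell$; the sole exception is the degenerate $X'=\{\ell\}$, which forces $G\langle Z\rangle$ to have the single edge $v\ell$ and hence $G$ to already be split, handled by taking $W=\emptyset$. Otherwise set $W \coloneqq X'\setminus\{v\}\subseteq V\setminus Z$, so that $\abs{W}\le c-1$.

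The key structural observation is that $G-W$ is a split graph. As $Z$ is a clique by assumption, it suffices to show that $R \coloneqq (V\setminus Z)\setminus W$ is independent in $G$. Contracting $Z$ does not alter edges among $V\setminus Z$, so the subgraph of $G\langle Z\rangle$ induced on $V\setminus Z$ equals $G[V\setminus Z]$; since $X'$ is a vertex cover of $G\langle Z\rangle$, the set $X'\cap(V\setminus Z)=W$ covers every edge of $G[V\setminus Z]$, whence $R$ is independent. Thus $V=Z\uplus W\uplus R$ with $Z$ a clique, $R$ independent, and $\abs{W}$ constant, so $G-W$ is split.

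It then remains to compute a minimum vertex cover of a graph that becomes split after deleting the constant-size set $W$. I would iterate over all $2^{\abs{W}}$ partitions $W=W_\mathrm{in}\uplus W_\mathrm{out}$, reading $W_\mathrm{in}$ as the part of the cover inside $W$; a partition is discarded unless $W_\mathrm{out}$ is independent. For each surviving partition, every neighbour of $W_\mathrm{out}$ lying in $G-W$ is forced into the cover, these forced vertices are deleted, and minimum vertex cover is solved on the remaining induced subgraph of $G-W$, which is again split and hence polynomial-time solvable (split graphs are perfect). Returning the smallest cover found is correct: for the partition with $W_\mathrm{in}=X^*\cap W$ coming from an optimal cover $X^*$, the set $W_\mathrm{out}$ is independent and all its neighbours lie in $X^*$, so the forcing is consistent and the exactly-solved split residual costs at most $\abs{X^*\cap(V(G-W))}$, yielding a feasible cover of size at most $\abs{X^*}$. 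I expect the main obstacle to be the clean extraction of the split structure --- pinning down that $v\in X'$ and that deleting $W$ leaves exactly a clique-plus-independent-set partition --- rather than the final enumeration, which is routine.
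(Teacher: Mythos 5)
Your proposal proves the statement only under its literal reading, and that reading is a typo in the paper: the lemma belongs to the \textsc{Connected Vertex Cover} section and is invoked in \cref{alg:cvc-svd} to return an ``optimum solution'' that the proof of \cref{thm:cvc-split} treats as a minimum \emph{connected} vertex cover of $G$ (the algorithm's output must itself be a connected vertex cover). The paper's own proof accordingly computes a minimum connected vertex cover: it brute-forces minimum connected vertex covers of $G\langle Z\rangle$ and of $G\langle Z\setminus\{u\}\rangle$ for every $u\in Z$ (each of these graphs has connected vertex cover number at most $c+1$, so enumeration over sets of size at most $c+1$ is polynomial), lifts each to a candidate for $G$ via $X=(X'\setminus\{v\})\cup Z$ and $X_u=(Z\setminus\{u\})\cup(X'_u\setminus\{v\})$, and then uses \cref{lemma:cvc-subset}, together with the fact that any vertex cover contains all or all but one vertex of the clique $Z$, to show that the smallest candidate has size exactly $\OPT_\mathrm{CVC}(G)$; connectedness of the candidates comes from the un-contraction itself. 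Your algorithm instead outputs a minimum unconstrained vertex cover, which may be disconnected, so it cannot play the lemma's role in \cref{alg:cvc-svd}.

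This gap is not cosmetic, because connectivity is precisely what your final step cannot handle. Your structural extraction is correct and genuinely nice: $v\in X'$, the set $W\coloneqq X'\setminus\{v\}$ has constant size, and $G-W$ is split because $W$ covers all edges of $G[V\setminus Z]$. But the ``partition $W$ into $W_\mathrm{in}\uplus W_\mathrm{out}$, force the neighbours of $W_\mathrm{out}$, solve the split remainder'' scheme works only because the covering constraint decomposes locally over that partition; requiring the final solution to induce a connected subgraph of $G$ is a global constraint that this decomposition ignores. After forcing, you would have to find a minimum vertex cover of the split residue whose union with $W_\mathrm{in}$ and the forced vertices is connected in $G$ --- a connectivity-constrained problem that ``split graphs are perfect'' does not address, since extra vertices may need to be added purely to glue components together, and choosing them optimally is exactly the difficulty the paper's contraction-based argument avoids. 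To repair your proof you would either need an algorithm for this constrained problem or, more simply, follow the paper and work with $\OPT_\mathrm{CVC}$ of the contracted graphs throughout, so that connectivity is preserved by construction.
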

	\begin{proof}
		There is a polynomial-time algorithm that computes a minimum vertex cover in any graph $G$ with $\OPT_\mathrm{CVC}(H) \le c+1$, which works by simply brute forcing over all vertex sets of size $c+1$.
		Let $Y$ be a minimum connected vertex cover in $G\langle Z \rangle$.
		Note that $\abs{Y} \le c$.
		Let $v$ be the vertex in $G\langle Z \rangle$ created by contracting $Z$.
		Because $v$ is adjacent to a leaf, $v\in Y$.
		For any $u \in Z$, the graph $G\langle Z\setminus \{u\} \rangle$ contains a connected vertex cover of size at most $c+1$, namely $Y \cup \{u\}$.
		This set induces a connected graph because $Y$ induces a connected graph, $v\in Y$, and $v$ is adjacent to $u$ in $G\langle Z\setminus \{u\} \rangle$.
		
		Our algorithm computes a minimum connected vertex cover $X'_u$ in $G\langle Z\setminus \{u\} \rangle$ for each $u\in Z$ and stores $X_u \coloneqq (Z\setminus \{u\}) \cup (X'_u \setminus \{v\})$ as a potential solution.
		It additionally computes a minimum connected vertex cover $X'$ in $G\langle Z \rangle$ stores $X \coloneqq (X'\setminus \{v\}) \cup Z$ as a potential solution.
		The algorithm outputs the smallest solution encountered that is the smallest set in $\{X\} \cup \{X_u \mid u \in Z\}$.
		
		Clearly, this algorithm can be implemented to run in polynomial time.
		Any vertex cover in $G$ contains either all of $Z$ or all of $Z$ except for a single vertex.
		Let $X^*$ be a minimum connected vertex cover in $G$.
		Suppose that $u \in Z$ is not contained in $X^*$.
		Then, by \cref{lemma:cvc-subset}, $\OPT_\mathrm{CVC}(G) = \OPT_\mathrm{CVC}(G\langle Z\setminus \{u\} \rangle) + \abs{Z} -2 = \abs{X_u}$.
		Hence, $X_u$ is an optimum solution and it is output by the algorithm.
		Now, suppose that $Z\subseteq X^*$.
		Then, by \cref{lemma:cvc-subset}, $\OPT_\mathrm{CVC}(G) = \OPT_\mathrm{CVC}(G\langle Z \rangle) + \abs{Z} -1 = \abs{X}$ and $X$ is output as an optimum solution.
	\end{proof}

	\begin{lemma}
		\label{lemma:cvc-contract-clique}
		Let $G$ be a graph and $Z$ be a clique in $G$. Then,
		\begin{compactenum}[(i)]
			\item $\OPT_\mathrm{CVC}(G\langle Z\rangle) \le \OPT_\mathrm{CVC}(G) - \abs{Z} + 2$
			\item and $\OPT_\mathrm{SVD}(G \langle Z \rangle) \le \OPT_\mathrm{SVD}(G) - 1$, if $\abs{Z} \ge 2$ and $Z$ intersects a minimum modulator to a split graph in $G$.
		\end{compactenum}
	\end{lemma}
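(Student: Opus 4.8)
The plan is to prove each part by transporting an optimal object of $G$ across the contraction, losing only the stated additive amount. Throughout, let $v$ be the vertex of $G\langle Z\rangle$ obtained by contracting $Z$ and let $\ell$ be the appended leaf.

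For part (i), I would start from a minimum connected vertex cover $X$ of $G$ and set $X'\coloneqq (X\setminus Z)\cup\{v\}$. Because $Z$ is a clique and $X$ is a vertex cover, $X$ omits at most one vertex of $Z$, so $\abs{X\cap Z}\ge \abs{Z}-1$ and hence $\abs{X'}\le \abs{X}-\abs{Z}+2$. It then remains to check that $X'$ is a connected vertex cover of $G\langle Z\rangle$. Feasibility is routine: edges avoiding $Z$ are covered by $X\setminus Z$, every edge that used to touch $Z$ now touches $v\in X'$, and the pendant edge at $\ell$ is covered by $v$. For connectivity I would use that $X'$ is exactly the image of the connected set $X$ under the contraction map (using $X\cap Z\neq\emptyset$, which holds once $\abs{Z}\ge 2$), so $G\langle Z\rangle[X']$ is connected; the degenerate case $\abs{Z}=1$ with the contracted vertex outside $X$ is dispatched by observing that all its neighbours lie in $X$ because $X$ is a vertex cover.

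For part (ii), fix a minimum split modulator $M$ with $Z\cap M\neq\emptyset$ and a split partition $V\setminus M=C\uplus I$ of $G-M$ with $C$ a clique and $I$ independent. Since $\abs{Z\cap M}\ge 1$, the set $M\setminus Z$ has size at most $\abs{M}-1$, and the natural candidate modulator for $G\langle Z\rangle$ is $M'\coloneqq M\setminus Z$; I would try to certify that $G\langle Z\rangle-M'$ is split via the partition $\bigl((C\setminus Z)\cup\{v\}\bigr)\uplus\bigl((I\setminus Z)\cup\{\ell\}\bigr)$. The independent side is immediate, since $\ell$ is adjacent only to $v$. The clique side is valid exactly when $v$ is adjacent to all of $C\setminus Z$, and this holds whenever $Z$ meets the clique part: any $z^*\in Z\cap C$ is adjacent within the clique $C$ to every vertex of $C\setminus Z$, so $v$ inherits all these edges. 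If instead $\abs{Z\cap M}\ge 2$, I would delete $v$ as well, taking $M'\coloneqq (M\setminus Z)\cup\{v\}$, which still has size at most $\abs{M}-1$; removing $v$ isolates $\ell$, leaving an induced subgraph of the split graph $G-M$ plus one isolated vertex, which is split.

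The remaining, and hardest, case is $\abs{Z\cap M}=1$ with $Z\cap C=\emptyset$. Then $Z\setminus M\subseteq I$, and since a clique meets an independent set in at most one vertex this forces $\abs{Z}=2$: here $Z$ is a single edge joining one modulator vertex to one vertex of $I$. In this configuration the contracted vertex $v$ need not be adjacent to all of $C$, so the partition above can fail, and because $v$ drags along the pendant leaf $\ell$, the pair $\{v,\ell\}$ behaves like a stray edge that can form an induced obstruction (an induced $2K_2$ together with an edge inside $C$) to splitness of $G\langle Z\rangle-(M\setminus Z)$; yet we can no longer afford to delete $v$ as in the previous case. Resolving this configuration is the crux of the proof: I would aim to exploit the rigidity of a size-$2$ straddling clique, either by reassigning the unique vertex of $Z\cap I$ to the other side of the split partition, or by arguing that one may choose the minimum modulator $M$ so that $Z$ always meets its clique side, thereby reducing this case to the one already handled.
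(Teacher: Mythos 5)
Your part (i) is correct and uses the same construction as the paper, namely $X'=(X\setminus Z)\cup\{v\}$; you are in fact more careful than the paper, which never verifies that $X'$ is a feasible \emph{connected} vertex cover of $G\langle Z\rangle$.

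For part (ii), your two completed cases ($Z$ meets the clique side $C$ of a split partition of $G-M$; $\abs{Z\cap M}\ge 2$) are correct. The case you flag as the crux is indeed a gap in your proposal, but it is not one you could have closed: the lemma is \emph{false} in exactly that configuration, and the paper's own proof is wrong there. The paper's entire argument for (ii) is the assertion that if $Z\subseteq M$ then $(M\setminus Z)\cup\{v\}$ is a modulator, and if $Z\not\subseteq M$ then $M\setminus Z$ is; the second claim is given without justification and breaks precisely in your crux case. Concretely, let $G$ consist of a triangle $c_1c_2c_3$ plus two vertices $m,u$ with edges $mc_1$ and $mu$. Then $\{c_2,c_3,m,u\}$ induces a $2K_2$, so $G$ is not split, while $M=\{m\}$ is a minimum modulator ($G-m$ is a clique plus an isolated vertex), with split partition $C=\{c_1,c_2,c_3\}$ and $I=\{u\}$. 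The clique $Z=\{m,u\}$ satisfies all hypotheses of the lemma ($\abs{Z}=2$ and $Z$ intersects $M$), yet in $G\langle Z\rangle$ the contracted vertex $v$ is adjacent only to $c_1$ and the appended leaf $\ell$, so $\{c_2,c_3,v,\ell\}$ again induces a $2K_2$; hence $\OPT_\mathrm{SVD}(G\langle Z\rangle)=1=\OPT_\mathrm{SVD}(G)$, contradicting the claimed bound $\OPT_\mathrm{SVD}(G)-1=0$. The same example defeats both of your suggested repairs: $u$ cannot be reassigned to the clique side (it has no neighbors in $C$), and the only minimum modulators of $G$ are $\{m\}$ and $\{u\}$, neither of which admits a split partition whose clique side meets $Z$.

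This failure is also not an artifact of ignoring the context in which the lemma is invoked: replacing the triangle by a clique on $n\ge 4$ vertices keeps $G$ connected, keeps $Z=\{m,u\}$ a $2$-maximal clique, makes $\OPT_\mathrm{CVC}(G\langle Z\rangle)=n\ge 4$, and still violates the inequality. So the inductive case ``$Z$ intersects $M$'' in the proof of \cref{thm:cvc-split}, which relies on this inequality, inherits the error. In short: your case analysis is sharper than the paper's, your two completed cases are the only ones that can be completed, and the obstruction you isolated is a counterexample to the lemma itself rather than a deficiency of your argument.
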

	\begin{proof}
		\begin{compactenum}[(i)]
			\item Let $X$ be a minimum connected vertex cover in $G$.
			Because $Z$ is a clique, $\abs{Z\cap X} \ge \abs{Z} -1$.
			Let $X' \coloneqq (X \setminus Z)\cup \{v\}$ where $v$ is the vertex in $G\langle Z\rangle$ that results from contracting $Z$.
			Then, $\abs{X'} = \abs{X} - \abs{Z} + 2 = \OPT_\mathrm{CVC}(G) - \abs{Z} + 2$.
			\item Let $M$ be a minimum modulator to a split graph in $G$.
			If $Z\subseteq M$, then $(M\setminus Z) \cup \{v\}$ is also a modulator to a split graph.
			If $Z \not\subseteq M$, then $M\setminus Z$ is. \qedhere
		\end{compactenum}		
	\end{proof}

	\begin{theorem}
		\label{thm:cvc-split}
		There is a polynomial-time algorithm that, given a graph $G$, computes a connected vertex cover of size at most $\OPT(G)+ \OPT_\mathrm{SVD}(G)$.
	\end{theorem}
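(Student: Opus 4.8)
The plan is to adapt the recursive local-search algorithm behind \cref{thm:vc-split} (\cref{alg:svd}), replacing the deletion $G-Z$ with the clique contraction $G\langle Z\rangle$ supplied by the lemmas of this section. On input $G$ the algorithm first checks whether $G$ is edgeless (return $\emptyset$) and, having computed a $2$-maximal clique $Z$, whether $\OPT_\mathrm{CVC}(G\langle Z\rangle)$ is bounded by a fixed constant; in the latter case \cref{lemma:cvc-small-sol-after-contraction} returns an exact minimum connected vertex cover and we are done. Otherwise it produces two candidates: $X_1$ obtained by running the algorithm of \cref{lemma:cvc-2-plus-c} with $c=3$ on $G\langle Z\rangle$, and $X_2$ obtained by recursing on $G\langle Z\rangle$. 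Each candidate $X'_i$ is a connected vertex cover of $G\langle Z\rangle$ and hence contains the contracted vertex $v$ (which carries a pendant leaf); replacing $v$ by all of $Z$ turns $X'_i$ into a connected vertex cover $X_i=(X'_i\setminus\{v\})\cup Z$ of $G$ with $\abs{X_i}=\abs{X'_i}+\abs{Z}-1$, as in \cref{lemma:cvc-subset}. The algorithm outputs the smaller of $X_1,X_2$.

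For correctness I would fix a minimum modulator $M$ to a split graph in $G$ and argue $\abs{X_i}\le\OPT_\mathrm{CVC}(G)+\abs{M}$ for the appropriate $i$, by induction on $\OPT_\mathrm{SVD}(G)$. Suppose first that $Z\cap M=\emptyset$. Then $G-M$ is a split graph in which $Z$ is still a $2$-maximal clique, so \cref{lemma:cvc-clique-in-split} gives a vertex cover of size at most $2$ in $(G-M)\langle Z\rangle$; since $Z$ and $M$ are disjoint, $(G-M)\langle Z\rangle=(G\langle Z\rangle)-M$, whence $\OPT_\mathrm{VC}(G\langle Z\rangle)\le\abs{M}+2$. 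Writing $A=\OPT_\mathrm{CVC}(G\langle Z\rangle)$ and $B=\OPT_\mathrm{VC}(G\langle Z\rangle)$, \cref{lemma:cvc-2-plus-c} with $c=3$ gives $\abs{X'_1}\le\max\{A,A+B-3\}$, while \cref{lemma:cvc-contract-clique}(i) gives $A\le\OPT_\mathrm{CVC}(G)-\abs{Z}+2$. Feeding these into $\abs{X_1}=\abs{X'_1}+\abs{Z}-1$ and using $B\le\abs{M}+2$, both branches of the maximum collapse to $\abs{X_1}\le\OPT_\mathrm{CVC}(G)+\abs{M}$ (the first branch uses $\abs{M}\ge1$, which holds here because $M=\emptyset$ forces $G$ to be split and hence $\OPT_\mathrm{CVC}(G\langle Z\rangle)$ to be a constant, i.e.\ a base case).

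In the remaining case $Z\cap M\ne\emptyset$ we have $\abs{Z}\ge2$ (a $2$-maximal clique in a non-edgeless graph cannot be a single vertex), so \cref{lemma:cvc-contract-clique}(ii) yields $\OPT_\mathrm{SVD}(G\langle Z\rangle)\le\abs{M}-1$. The inductive hypothesis applied to the recursive call then bounds $\abs{X'_2}\le\OPT_\mathrm{CVC}(G\langle Z\rangle)+\OPT_\mathrm{SVD}(G\langle Z\rangle)$, and combining this with \cref{lemma:cvc-contract-clique}(i) and $\abs{X_2}=\abs{X'_2}+\abs{Z}-1$ gives $\abs{X_2}\le(\OPT_\mathrm{CVC}(G)-\abs{Z}+2)+(\abs{M}-1)+\abs{Z}-1=\OPT_\mathrm{CVC}(G)+\abs{M}$. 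Since $Z$ falls into exactly one of the two cases and the algorithm returns $\min\{\abs{X_1},\abs{X_2}\}$, the claimed bound follows.

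The main obstacle I anticipate is not the approximation bookkeeping but the polynomial running time. Unlike the deletion $G-Z$ used for \cref{thm:vc-split}, the contraction $G\langle Z\rangle$ removes $\abs{Z}$ vertices but re-adds one contracted vertex and a pendant leaf; a clique of size at least three therefore strictly shrinks the instance (and strictly decreases $\OPT_\mathrm{CVC}$ by \cref{lemma:cvc-contract-clique}(i)), but a $2$-maximal clique of size exactly two leaves the vertex count unchanged, and one can exhibit instances (e.g.\ long subdivided stars) where contracting a ``leaf edge'' does not even decrease $\OPT_\mathrm{CVC}$. Guaranteeing that the recursion reaches a base case in polynomially many steps --- by choosing $Z$ so that each contraction makes measurable progress, by treating triangle-free instances directly, or by exhibiting a suitable monovariant --- is thus the delicate point the proof must settle, whereas the error analysis above closes cleanly once the constant $c=3$ is chosen.
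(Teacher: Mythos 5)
Your proposal is, up to bookkeeping, the paper's own proof: the same algorithm (base case, exact solve via \cref{lemma:cvc-small-sol-after-contraction} when $\OPT_\mathrm{CVC}(G\langle Z\rangle)$ is at most a constant, one candidate from \cref{lemma:cvc-2-plus-c} and one from recursion, lifting by $(X'_i\setminus\{v\})\cup Z$), and the same two-case analysis depending on whether $Z$ meets the modulator $M$, using \cref{lemma:cvc-clique-in-split} in the disjoint case and \cref{lemma:cvc-contract-clique} in the intersecting case. The paper uses $c=4$ where you use $c=3$; both constants close the arithmetic, and your accounting (carrying the $-1$ from the contraction explicitly, and treating the first branch of the $\max$ in \cref{lemma:cvc-2-plus-c} via the observation that $M=\emptyset$ forces a base case) is if anything more careful than the paper's.

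The termination issue you flag is genuine --- and it is the one point the paper does not resolve either: its proof simply asserts that the algorithm ``can be implemented in polynomial time.'' Your subdivided-star example is a true fixed point of the recursion: in the spider with legs $c$--$x_i$--$y_i$ ($i=1,\ldots,n$, $n\ge 3$), the pendant edge $\{x_1,y_1\}$ is a $2$-maximal clique and $G\langle\{x_1,y_1\}\rangle$ is isomorphic to $G$, while $\OPT_\mathrm{CVC}(G)=n+1\ge 4$, so the exact-solve check never fires and an adversarial choice of $Z$ recurses forever. The fix you sketch (``treating triangle-free instances directly'') does work and closes the gap for both your write-up and the paper's. If $G$ is triangle-free, then the clique side of any induced split graph has at most two vertices, so $\OPT_\mathrm{VC}(G)\le\OPT_\mathrm{SVD}(G)+2$, and a single call to \cref{lemma:cvc-2-plus-c} with $c=2$ already returns a connected vertex cover of size at most
\begin{align*}
\max\{\OPT_\mathrm{CVC}(G),\ \OPT_\mathrm{CVC}(G)+\OPT_\mathrm{VC}(G)-2\}\ \le\ \OPT_\mathrm{CVC}(G)+\OPT_\mathrm{SVD}(G),
\end{align*}
with no recursion at all. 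If $G$ contains a triangle, start the local search for $Z$ from that triangle: since every improving swap strictly increases $\abs{Z}$, the resulting $2$-maximal clique has $\abs{Z}\ge 3$, hence $G\langle Z\rangle$ has at most $\abs{V(G)}-1$ vertices and the recursion depth is bounded by $n$. With this amendment (or any equivalent rule forcing measurable progress in the size-two case), your proof is complete and establishes exactly the statement of \cref{thm:cvc-split}.
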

		\begin{algorithm}[t]
		\DontPrintSemicolon
		\SetKwInOut{Input}{input}\SetKwInOut{Output}{output}
		\Input{A connected graph $G=(V,E)$}
		\Output{A connected vertex cover of $G$ of size at most $\OPT_{\mathrm{CVC}}(G) + \OPT_\mathrm{SVD}(G)$}
		\If{$G$ consists of a single vertex}{\Return $\emptyset$}
		Find a $2$-maximal clique $Z$ in $G$\;
		\If{$G\langle Z\rangle$ contains a connected vertex cover of size at most $3$}{Use the algorithm in \cref{lemma:cvc-small-sol-after-contraction} to compute an optimum solution $X$\;\label{l:cvc-1}
		\Return{$X$}\;}
		Recursively call this algorithm on $G\langle Z \rangle$ to obtain a solution $X'_1$\;
		Call the algorithm in \cref{lemma:vc-2-plus-c} with $c=4$ on $G\langle Z \rangle$ to obtain a solution $X'_2$\;\label{l:cvc-split-call-lemma}
		$X_i \gets (X'_i \setminus \{v\}) \cup Z$ for $i\in \{1,2\}$ \tcp*{$v$ is the vertex replacing $Z$ in $G\langle Z\rangle $}
		\Return the smaller of $X_1$ and $X_2$
		
		\caption{}
		\label{alg:cvc-svd}
	\end{algorithm}

	\begin{proof}
		The algorithm is presented as \Cref{alg:cvc-svd}.
		It is clear that this algorithm can be implemented in polynomial time.
		We must show that the solution it outputs has size at most $\OPT+ k$.
		
		Let $M$ be a minimum modulator to a split graph in $G$ and let $X^*$ be a minimum connected vertex cover in $G$.
		We will prove the claim by induction on the number of recursive calls.
		For the base case, note that if $G$ consists of a single vertex, then $\emptyset$ is clearly an optimum solution.
		Similarly, if the algorithm terminates in line~\ref{l:cvc-1}, then it outputs an optimum solution.
		We now distinguish two cases depending on whether the clique $Z$ found by the algorithm intersects $M$ or not.
	
		If $Z$ does not intersect $M$, we claim that $\abs{X_1} \le \abs{X^*} + \abs{M}$.
		Because any vertex cover contains all but at most one vertex from any clique, it follows that:
		\begin{align*}
		\abs{X^* \cap Z} \ge \abs{Z} -1.
		\end{align*}
		The clique $Z$ is a $2$-maximal clique in the split graph $G-M$.
		By \cref{lemma:cvc-clique-in-split}, $\OPT_\mathrm{VC}(G-M) \langle Z \rangle) \le 2$.
		It follows that:
		\begin{align*}
		\OPT_\mathrm{VC}(G\langle Z \rangle) \le \abs{M} + 2.
		\end{align*}
		Because $X'_1$ is obtained using the algorithm in \cref{lemma:vc-2-plus-c} with $c=4$, and the fact that $\OPT_\mathrm{CVC}(G\langle Z\rangle) \ge 4$, it follows that:
		\begin{align*}
		\abs{X'_1} \le \OPT_\mathrm{CVC}(G \langle Z \rangle) + \OPT_\mathrm{VC}(G \langle Z \rangle) -4.
		\end{align*}
		Putting all this together we get that:
		\begin{align*}
		\abs{X^*} + \abs{M} &= \abs{X^* \cap Z} + \abs{X^* \setminus Z} + \abs{M}\\
		&\ge \abs{Z} -1 + \abs{X^* \setminus Z} + \abs{M}\\
		&\ge \abs{Z} -1 + \OPT_\mathrm{CVC}(G \langle Z \rangle) -1 + \abs{M} \\
		&\ge \abs{Z} -1 + \OPT_\mathrm{CVC}(G \langle Z \rangle) -1 + \OPT_\mathrm{VC}(G \langle Z \rangle) - 2\\
		&=\abs{Z} +\OPT_\mathrm{CVC}(G \langle Z \rangle) + \OPT_\mathrm{VC}(G \langle Z \rangle) - 4\\
		&\ge \abs{Z} +\abs{X'_1} = \abs{X_1}
		\end{align*}
		
		Now consider the case where $Z$ does intersect $M$.
		Here, we claim that $\abs{X_2} \le \abs{X^*} + \abs{M}$.
		By induction hypothesis and \cref{lemma:cvc-contract-clique},
		\begin{align*}
			\abs{X'_2} &\le \OPT_\mathrm{CVC}(G\langle Z\rangle) + \OPT_\mathrm{SVD}(G\langle Z\rangle) \le \OPT_\mathrm{CVC}(G) - \abs{Z} + 2 + \OPT_\mathrm{SVD}(G) - 1\\
			&= \OPT_\mathrm{CVC}(G) + \OPT_\mathrm{SVD}(G)- \abs{Z} +1.
		\end{align*}
		Hence,
		\begin{align*}
		\abs{X_2} &= \abs{(Z \cup X'_2)\setminus \{v\}} = \abs{Z} + \abs{X'_2} - 1 \le \OPT_\mathrm{CVC}(G) + \mathrm{SVD}(G).
		\qedhere
		\end{align*}
	\end{proof}

	\cref{thm:cvc-split} is tight under the Unique Games Conjecture.
	It also improves on the solution quality that can be obtained by applying the structural rounding approach.
	This approach cannot be applied in the usual manner, since deleting a modulator could lead to a disconnected graph.
	Instead, we have to contract the approximated modulator to a single vertex and then solve the problem exactly on a graph that has a modulator of size $1$ to a split graph.
	\textsc{Connected Vertex Cover} can only be structurally lifted with respect to vertex deletions with a constant $c=2$.
	This is because, in addition to adding the entire modulator to the solution, we must also make the resulting vertex cover connected.
	This is possible while only doubling the size of the modulator (see, e.g., \cite[Lemma~16]{Hols2020}).
	Hence, in this manner we would get a connected vertex cover of size at most $\OPT_\mathrm{CVC} + 4 \OPT_\mathrm{SVD}$.

\section{Coloring}
	\label{sec:col}
	In this section, we will give several efficient parameterized algorithms for the \chromaticnumber problem.
	All of these algorithms compute a coloring that uses at most $\OPTCOL(G-M) + \abs{M}$ colors where $G$ is a given graph and $M$ is a modulator in $G$ to a graph class on which \chromaticnumber can be solved in polynomial time.
	Note that $\OPTCOL(G-M) + \abs{M} \le \OPTCOL(G) + \abs{M}$ and that this is a slightly stronger result than that required in our definition of an EPA.
	
	\subsection{Bounded chromatic number}
	We will start by considering as parameter the size of a modulator to any graph class $\calC$ in which all graphs have a chromatic number bounded by a constant $c$ and on which a $c$-coloring can be found in polynomial time.
	We assume that the $c$-coloring algorithm for $\calC$ may also be applied to graphs that are not in $\calC$, but, when applied to such a graph, it may output a valid or an invalid coloring.
	Whether this coloring is valid or not, can of course be checked in polynomial time.
	Hence, we do not have to require that the graph class $\calC$ be decidable in polynomial time.

	\begin{theorem}
		\label{thm:col-bounded-c}
		Let $\calC$ be a graph class and $c \in \mathbb N$ be a constant such that every graph in $\calC$ is $c$-colorable and there is a polynomial-time algorithm that $c$-colors graphs in $\calC$.
		Then, there is a polynomial-time algorithm that colors any graph with $c + k$ colors where $k$ is the size of a minimum modulator to $\calC$.
	\end{theorem}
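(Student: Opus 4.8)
The easy part is the existence of the bound: a minimum modulator $M$ with $|M|=k$ and $G-M\in\calC$ immediately yields a proper $(c+k)$-coloring — apply the given polynomial-time $c$-coloring subroutine (call it $A$) to $G-M\in\calC$, and give each of the $k$ vertices of $M$ its own private color. The entire difficulty is that, as an EPA, the algorithm is handed neither $k$ nor $M$, and for typical choices of $\calC$ (for instance bipartite graphs, where a modulator is an odd cycle transversal) computing a minimum modulator is NP-hard. So the plan is to realize the bound $c+k$ constructively and \emph{modulator-free}, using only the standing assumption that $A$ may be run on any induced subgraph and that the validity of its output can be checked in polynomial time.

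My plan is a peeling recursion in the spirit of the recursive clique arguments used earlier in the paper. Greedily compute an inclusion-maximal vertex set $W$ such that $A$ validly $c$-colors $G[W]$ (start from $W=\emptyset$, add vertices one at a time, and keep a vertex whenever $A$ still returns a valid coloring). Color $G[W]$ with the $c$ base colors returned by $A$, and then extend this coloring to the remaining set $S:=V\setminus W$, reusing the colors already present whenever a vertex's already-colored neighborhood permits and spending a fresh color only when every current color is blocked; concretely, recurse on $G[S]$ while reusing the palette. The output is clearly a proper coloring, and the whole procedure is polynomial, since each maximality computation and each validity check is polynomial and each round removes at least one vertex. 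Note that using $A$ (rather than a naive greedy order) to color $W$ is essential: bipartite graphs such as $K_{n,n}$ admit no low-back-degree ordering, so only the optimal $c$-coloring of the $\calC$-part keeps the base palette at size $c$.

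The heart of the argument is to show that the number of colors used is at most $c+k$ (indeed at most $\OPTCOL(G-M)+k$, the slightly stronger bound this section targets). I would prove this by induction, charging each color introduced beyond the $c$ base colors to a distinct vertex of a minimum modulator $M$, where $M$ enters only the analysis and is never computed. The structural facts I would exploit are that $M\cap S$ is a modulator of every subgraph $G[S]$ on which we recurse, so the minimum modulator size does not grow, and that $G-M$ is $c$-colorable by $A$; comparing the computed coloring against the ideal coloring that places all of $V\setminus M$ on the $c$ base colors, I expect an exchange argument to show that a vertex is forced onto a new color only ``across'' a modulator vertex, so the fresh colors inject into $M$.

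The main obstacle I anticipate is making this charging tight \emph{under color reuse}. The complement $S$ of a maximal $A$-colorable subgraph is only a minimal (inclusion-wise) modulator and can be far larger than a minimum one, so assigning one private color per vertex of $S$ would badly overshoot; the whole point of reusing colors while extending over $S$ is to keep the number of fresh colors at $k$ rather than $|S|$. Proving that this reuse-extension introduces at most $k$ new colors — without ever identifying $M$ and without solving the NP-hard modulator problem — is the crux, and I expect it to rest on the claim that the ``new-color demand'' of the remainder left by a maximal $A$-colorable subgraph is governed precisely by the minimum modulator size. The corollaries for bipartite ($c=2$) and planar ($c=4$) graphs then follow by instantiating $A$ with a $2$-coloring and with the polynomial-time $4$-coloring of planar graphs, respectively.
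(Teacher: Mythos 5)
You have the right target, and you correctly identify that the entire difficulty is algorithmic (neither $k$ nor $M$ is given), but the core of your plan has a genuine gap --- and it is not merely the unproven ``crux'' you flag: the claim that your commit-then-extend scheme spends at most $k$ fresh colors is in fact \emph{false}. Take $\calC$ to be the bipartite graphs, $c=2$, and let $G$ have vertices $m,p,q,u,u'$, with $m$ adjacent to all other vertices and additional edges $up$, $u'q$, $uu'$. Since $G-m$ is the path $p\,u\,u'\,q$, we have $k=1$, so the target is $3$ colors. Your proposal fixes no vertex order, so the analysis must survive every order; if the greedy maximality loop examines $p,q,m,u,u'$ in this order, it outputs $W=\{p,q,m\}$ (adding $u$ or $u'$ creates a triangle), and every proper $2$-coloring of this star puts $m$ alone in one class, say color $1$, and $p,q$ in color $2$. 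Now $u$ and $u'$ are each adjacent to $m$ and to a vertex of color $2$, so both base colors are blocked for both of them, and since $u\sim u'$ they receive two \emph{distinct} fresh colors: $4>c+k$ colors in total. The same count arises if you instead recurse on $G[S]$ with a fresh palette and afterwards try to merge whole color classes, since $\{u\}$ and $\{u'\}$ each conflict with both old classes. The overshoot is even unbounded for per-vertex reuse: with $x,y$ adjacent to each other and to all of $a_1,\dots,a_n,b_1,\dots,b_n$, where $a_i\sim b_j$ exactly when $i\ne j$, one has $k=2$, yet taking $W=\{x,y\}$ and then extending in the order $a_1,b_1,a_2,b_2,\dots$ forces a fresh color on every pair $a_i,b_i$, i.e.\ $n+2$ colors against the target $4$. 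The moral: once you have committed to a coloring of $W$, the ``new-color demand'' of the remainder is \emph{not} governed by the minimum modulator size; it is governed by greedy coloring of $G[S]$ under the constraints your committed coloring creates.

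The idea you are missing --- and the mechanism of the paper's proof --- is that the algorithm must never commit: it processes vertices one at a time and is allowed to \emph{recolor} already-colored vertices wholesale, using $A$ itself as the recoloring tool. When a new vertex $v_i$ arrives and $\ell$ colors are in use, the paper's algorithm tries every set of $c$ existing color classes and runs $A$ on the subgraph induced by the union of those classes together with $v_i$; if some attempt returns a valid $c$-coloring, these vertices are recolored accordingly (all others keep their colors) and no new color is spent; only if every attempt fails does $v_i$ get a fresh color. The charging argument is then clean: by induction the current coloring uses at most $c+k_{i-1}$ colors, where $k_{i-1}=\abs{M\cap\{v_1,\dots,v_{i-1}\}}$; if it uses exactly that many and $v_i\notin M$, then at least $c$ color classes contain no vertex of $M$, so their union together with $v_i$ induces a subgraph of $G-M\in\calC$, hence (for a hereditary class) $A$ must succeed and the count does not grow. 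Thus fresh colors are only ever charged to modulator vertices, giving $c+k$ overall. In the five-vertex example above this recoloring is exactly what saves the day: when $u'$ arrives, the algorithm re-runs $A$ on $\{p,q,u,u'\}$ and moves $q$ into $u$'s class --- a step that no scheme which fixes the coloring of a peeled subgraph and only extends it can ever perform.
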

	\begin{algorithm}[t]
		\DontPrintSemicolon
		\SetKwInOut{Input}{input}\SetKwInOut{Output}{output}
		\Input{A graph $G=(V,E)$}
		\Output{A $(k+c)$-coloring of $G$ where $k$ is the size of a minimum modulator to $\calC$.}
		Let $V=\{v_1,\ldots,v_n\}$ be an arbitrary ordering of the vertices of $G$\;
		$G_i \gets G[v_1,\ldots,v_i]$ for each $i \in \{1,\ldots,n\}$\;
		$c_0 \gets \bot$ \tcp*{Trivial coloring of the empty graph}
		$\ell \gets 0$ \tcp*{Number of colors used so far}
		\For{$i=1,\ldots,n$}{
			$\mathrm{colored} \gets \mathrm{false}$\;
			\For{ $S \in \binom{\{1,\ldots,\ell\}}{c}$ }{
				$V' \gets \{ v \in \{v_1,\ldots,v_{i-1}\} \mid c_{i-1}(v) \in S\} \cup \{v_i\}$, $G' \gets G[V']$\;
				Attempt to color $G'$ using the $c$-coloring algorithm for $\calC$\;
				\If{this attempt produces a valid $c$-coloring $c'$ of $G'$}{
					$c_i(v) \gets c'(v)$ for all $v\in V'$\;
					$c_i(v) \gets c_{i-1}(v)$ for all $v\in \{v_1,\ldots,v_{i-1}\} \setminus V'$\;
					$\mathrm{colored} \gets \mathrm{true}$\;
					\textbf{Break} inner for-loop\;
				}
			}
			\If{$\mathrm{colored} = \mathrm{false}$}{
				$c_i(v) \gets c_{i-1}(v)$ for all $v\in \{v_1,\ldots,v_{i-1}\}$\;
				$c_i(v_i) \gets \ell +1$\;
				$\ell \gets \ell + 1$\;
			}
		}
		\Return $c_n$
		\caption{}
		\label{alg:k+c-col}
	\end{algorithm}
	\begin{proof}
		The algorithm is presented as \Cref{alg:k+c-col}.
		We start by analyzing its running time.
		The outer for-loop is run for $n$ iterations.
		The inner loop is run for $\binom{\ell}{c} = \bigO(\ell^c)$ iterations where $\ell$ is the number of colors that have been used so far.
		As we will show next, $\ell$ never exceeds $c+k$.
		As a result, we get that the running time is bounded by $\bigO((c+k)^c \cdot n^{\alpha +1})$ where $\bigO(n^\alpha)$ is the polynomial bound on the running time of the $c$-coloring algorithm for $\calC$.
		
		Now, we will prove that this algorithm uses at most $c+k$ colors.
		To this end, let $M$ be a minimum modulator to $\calC$ in a given input graph $G$.
		We will prove, by induction on $i$, that the coloring $c_i$ of $G_i$ uses at most $c+k_i$ colors where $k_i \coloneqq \abs{M \cap \{v_1,\ldots,v_i\}}$ is the number of modulator vertices among the first $i$ vertices according to the arbitrary ordering that the algorithm uses.
		For $i=0$, the claim is trivial as $c_0$ uses $0 \le k + c$ colors.
		
		Suppose that the claim is true for $i-1$.
		Note that, in each iteration of the outer for-loop, the number of colors used increases by at most one.
		Hence, if $c_{i-1}$ uses strictly fewer than $c+k_{i-1}$ colors or if $k_i > k_{i-1}$ (that is, if $v_i \in M$), then the claim is clearly true.
		Therefore, we may assume that $c_{i-1}$ uses exactly $c+k_{i}$ colors.
		It follows that there are is a set $S \subseteq \{1,\ldots,\ell\}$ of at least $c$ colors such that no vertex in $M$ receives a color in $S$ under $c_{i-1}$.
		Let $V' \subseteq \{v_1,\ldots,v_i\}$ contain the vertex $v_i$ and all vertices that receive a color in $S$ under $c_{i-1}$.
		Let $G' \coloneqq G[V']$.
		Note that $M\cap V' = \emptyset$.
		It follows that $G' \subseteq G - M$ and that $G' \in \calC$.
		Therefore, the $c$-coloring algorithm for $\calC$ will be successful in its attempt to color $G'$.
		This means that $c_i$ uses the same number of colors as $c_{i-1}$, i.e. $c+k_i$ colors.
	\end{proof}

	Recall that $\OPT_\mathrm{OCT}$ and $\OPT_\mathrm{PVD}$ refer to the size of a minimum modulator to a bipartite and to a planar graph, respectively.
	Since bipartite graphs can be $2$-colored in polynomial time and planar graphs can be $4$-colored in polynomial time~\cite{Robertson1996}, \cref{thm:col-bounded-c} implies the following:
	
	\begin{corollary}
		\label{cor:col-bounded-c}
		There are polynomial-time algorithms that color any graph with at most
		\begin{compactenum}[(i)]
			\item $2 + \OPT_\mathrm{OCT}$ colors or
			\item $4 + \OPT_\mathrm{PVD}$ colors.
		\end{compactenum}
	\end{corollary}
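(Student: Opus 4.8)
The plan is to apply \cref{thm:col-bounded-c} twice, since each item is merely an instantiation of that theorem for a concrete tractable class $\calC$ together with a constant $c$. All that remains is to verify, in each case, the theorem's two hypotheses -- that every graph in $\calC$ is $c$-colorable and that a $c$-coloring of such graphs is computable in polynomial time -- and to identify a minimum modulator to $\calC$ with the advertised parameter.

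For~(i) I would take $\calC$ to be the class of bipartite graphs with $c = 2$. Every bipartite graph admits a proper $2$-coloring by definition, and one is found in polynomial time by a breadth-first search that alternates colors between consecutive layers. A minimum modulator to the bipartite class is precisely a minimum odd cycle transversal, so the parameter $k$ of \cref{thm:col-bounded-c} equals $\OPT_\mathrm{OCT}$, and the theorem outputs a $(2 + \OPT_\mathrm{OCT})$-coloring. For~(ii) I would take $\calC$ to be the class of planar graphs with $c = 4$: here $4$-colorability is the Four Colour Theorem, and a proper $4$-coloring is computable in polynomial time by the algorithm of Robertson et al.~\cite{Robertson1996}. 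Since a minimum modulator to the planar class is a minimum planar vertex deletion set, we have $k = \OPT_\mathrm{PVD}$, and the theorem yields a $(4 + \OPT_\mathrm{PVD})$-coloring.

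I expect essentially no obstacle here, as the corollary is a direct consequence of \cref{thm:col-bounded-c}. The one point to keep in mind is that the $c$-coloring subroutine must be defined on arbitrary input graphs rather than only on members of $\calC$; this is harmless because \cref{thm:col-bounded-c} explicitly tolerates an invalid coloring on out-of-class inputs and checks validity afterwards, and because both the breadth-first procedure and the Robertson et al.\ algorithm run on any graph. Consequently I need not even appeal to polynomial-time recognition of bipartite or planar graphs. The only substantive ingredient is therefore the external polynomial-time $4$-coloring result for planar graphs, which I would simply cite.
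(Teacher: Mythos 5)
Your proposal is correct and matches the paper's own derivation exactly: the paper also obtains both items by instantiating \cref{thm:col-bounded-c} with the class of bipartite graphs ($c=2$) and the class of planar graphs ($c=4$, via the Robertson et al.\ algorithm~\cite{Robertson1996}). Your additional remark about the coloring subroutine being allowed to fail on out-of-class inputs is precisely the provision already built into \cref{thm:col-bounded-c}, so nothing further is needed.
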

	
	The results in \cref{cor:col-bounded-c} are tight in the sense that there are graphs that require $2+ \OPT_\mathrm{OCT}$ or $4 + \OPT_\mathrm{PVD}$ colors (consider a graph that contains a modulator that consists of a clique of vertices that are adjacent to all other vertices).
	Like in the discussion in \cref{sec:vc-oct}, we can get a coloring with at most $2+ \OPT_\mathrm{OCT}^{3/2}$
	using structural rounding.
	There is also no known constant-factor approximation for computing a modulator to a planar graph.
	The best known approximation achieves a ratio of $\bigO(\OPT^\eps)$ for any $\eps > 0$~\cite{Jansen2022}.
	Hence, the structural rounding approach yields a coloring with $4 + \OPT_\mathrm{PVD}^{1+\eps}$ colors.
	It is not possible to achieve $\OPTCOL + \OPT_\mathrm{PVD}$ colors, unless $\mathrm{P}=\mathrm{NP}$, because coloring a planar graph optimally is NP-hard~\cite{Garey1976}.
	
	\subsection{Cograph, chordal, and cochordal deletion}
	In the following, we will explore the power of greedy algorithms for EPAs for \chromaticnumber.
	We will show that with two simple greedy strategies, we can get EPAs for this problem parameterized by the size of a modulator to a cograph, a chordal graph, or a cochordal graph.

	The \emph{degeneracy} of a graph $G=(V,E)$ is
	\begin{align*}
		\mathrm{degen}(G) \coloneqq \max_{V'\subseteq V} \min_{v\in V'} \deg_{G[V']}(v).
	\end{align*}
	It is well-known that any graph $G$ can be colored with $\mathrm{degen}(G) + 1$ in polynomial time~\cite{Matula1968}.
	
	\begin{lemma}
		\label{lemma:col-chordal}
		Let $G=(V,E)$ be a graph and $M\subseteq V$ be a modulator to a chordal graph in $G$.
		Then, $\mathrm{degen}(G) \le \OPT_\mathrm{COL}(G-M) + \abs{M} -1$.
	\end{lemma}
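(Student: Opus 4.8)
The plan is to work directly from the definition $\mathrm{degen}(G)=\max_{V'\subseteq V}\min_{v\in V'}\deg_{G[V']}(v)$. It therefore suffices to exhibit, for every nonempty $V'\subseteq V$, a single vertex $v\in V'$ whose degree in $G[V']$ is at most $\OPT_\mathrm{COL}(G-M)+\abs{M}-1$; this bounds the inner minimum over $V'$, and hence the outer maximum, by the claimed quantity. The structural facts I would rely on are that chordal graphs are closed under taking induced subgraphs and that every nonempty chordal graph contains a \emph{simplicial} vertex, that is, a vertex whose neighborhood is a clique.

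First I would fix an arbitrary nonempty $V'\subseteq V$ and pass to $V'\setminus M$. Since $G-M$ is chordal and $G[V'\setminus M]$ is one of its induced subgraphs, $G[V'\setminus M]$ is again chordal. Assuming $V'\setminus M\neq\emptyset$, let $v$ be a simplicial vertex of $G[V'\setminus M]$. Then $N_{G[V'\setminus M]}[v]$ is a clique, so $\deg_{G[V'\setminus M]}(v)+1=\abs{N_{G[V'\setminus M]}[v]}\le\omega(G[V'\setminus M])$. Because a clique forces that many colors and the chromatic number is monotone under induced subgraphs, $\omega(G[V'\setminus M])\le\OPT_\mathrm{COL}(G[V'\setminus M])\le\OPT_\mathrm{COL}(G-M)$; in particular this step does not even require perfection of chordal graphs. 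Hence the degree of $v$ measured \emph{inside} $G[V'\setminus M]$ is at most $\OPT_\mathrm{COL}(G-M)-1$.

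The only place where edges reappear is the passage from $G[V'\setminus M]$ to $G[V']$: the chosen $v$ may additionally be adjacent to vertices of $M\cap V'$, and there are at most $\abs{M}$ of these. This charging of the extra degree to the modulator is exactly where the additive $\abs{M}$ term originates, and it is the only substantive idea in the argument. Accounting for these edges gives $\deg_{G[V']}(v)\le(\OPT_\mathrm{COL}(G-M)-1)+\abs{M}$, as desired. Finally I would dispose of the degenerate case $V'\subseteq M$ separately: there $G[V']$ has at most $\abs{M}$ vertices, so every vertex has degree at most $\abs{M}-1\le\OPT_\mathrm{COL}(G-M)+\abs{M}-1$. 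Taking the maximum over all $V'$ yields the bound. There is no genuinely hard step; the main content is selecting the simplicial vertex within the chordal part $G-M$ and recognizing that its closed neighborhood there is a clique bounded by $\OPT_\mathrm{COL}(G-M)$, with the $\abs{M}$ modulator edges accounting for the rest.
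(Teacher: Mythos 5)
Your proof is correct and takes essentially the same approach as the paper's: pick a simplicial vertex in the chordal part of an arbitrary induced subgraph, bound its closed neighborhood there by $\omega(G-M)\le\OPT_\mathrm{COL}(G-M)$, and charge the at most $\abs{M}$ remaining edges to the modulator. You are in fact slightly more careful than the paper, which silently skips the degenerate case where the induced subgraph lies entirely inside $M$ (so the chordal part is empty and has no simplicial vertex), and which invokes the perfection equality $\omega(G-M)=\OPT_\mathrm{COL}(G-M)$ where, as you note, only the trivial direction $\omega\le\chi$ is needed.
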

	\begin{proof}
		Let $H$ be an induced subgraph of $G$.
		Then, $H-M$ is a chordal graph and, therefore, contains a simplicial vertex $v$.
		Because $v$ is simplicial in $H-M$, $N_{H-M}[v]$ induces a clique in $G$.
		Hence, $N_{H-M}[v] \le \omega(G-M) = \OPT_\mathrm{COL}(G-M)$.
		The degree of $v$ in $H$ is
		\begin{align*}
			\deg_H(v) & = \abs{N_H(v) \cap M} + \abs{N_H(v) \setminus M} = \abs{N_H(v) \cap M} + \abs{N_H[v] \setminus M} -1\\
			&  \le \abs{M} + \OPT_\mathrm{COL}(G-M) -1.\qedhere
		\end{align*}
	\end{proof}

	\begin{theorem}
		\label{thm:col-chordal}
		There is a polynomial-time algorithm that colors any graph $G$ with at most $\OPT_\mathrm{COL}(G-M) + \abs{M}$ colors where $M$ is a minimum modulator to a chordal graph in $G$.
	\end{theorem}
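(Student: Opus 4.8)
The plan is to observe that the statement follows almost immediately by combining \cref{lemma:col-chordal} with the classical greedy coloring based on degeneracy, \emph{without} the algorithm ever needing to know the modulator $M$. First I would run the well-known algorithm that colors any graph $G$ with at most $\mathrm{degen}(G)+1$ colors in polynomial time~\cite{Matula1968}: order the vertices so that each vertex has at most $\mathrm{degen}(G)$ neighbors among those ordered before it (repeatedly removing a minimum-degree vertex), then color greedily in the reverse of this order. Crucially, this step is oblivious to $M$, which is exactly what an EPA requires, since it receives only the graph and no structural witness of the parameter.

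The key step is then the quality bound, and here \cref{lemma:col-chordal} does all the real work. Letting $M$ be a \emph{minimum} modulator to a chordal graph in $G$ (so that $\abs{M}$ equals the parameter value $k$), the lemma gives $\mathrm{degen}(G) \le \OPT_\mathrm{COL}(G-M) + \abs{M} - 1$. Therefore the greedy coloring uses at most $\mathrm{degen}(G) + 1 \le \OPT_\mathrm{COL}(G-M) + \abs{M}$ colors, which is precisely the claimed guarantee. I would emphasize that the inequality is applied with the minimum modulator so that $\abs{M}$ matches $k$; any larger modulator would only weaken the bound.

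Since the degeneracy is an intrinsic quantity of $G$ and the bound in \cref{lemma:col-chordal} holds for the minimum modulator regardless of whether the algorithm is aware of it, there is essentially no remaining obstacle at the level of the theorem itself; the genuine difficulty was already resolved in the lemma by locating a simplicial vertex in the chordal part $H-M$ of every induced subgraph $H$ and bounding its closed neighborhood by $\omega(G-M)=\OPT_\mathrm{COL}(G-M)$. To close, I would recall that $\OPT_\mathrm{COL}(G-M)+\abs{M} \le \OPT_\mathrm{COL}(G)+\abs{M}$, so the algorithm attains the EPA guarantee with ratio $1$ and additive error $k$ recorded in \cref{tab:results}, and in fact the slightly stronger bound in terms of $\OPT_\mathrm{COL}(G-M)$ noted at the start of \cref{sec:col}.
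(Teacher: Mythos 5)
Your proposal is correct and is essentially identical to the paper's own proof: both apply the classical degeneracy-based greedy coloring (which is oblivious to $M$) and then invoke \cref{lemma:col-chordal} to bound $\mathrm{degen}(G)+1 \le \OPT_\mathrm{COL}(G-M) + \abs{M}$. Your additional remarks about minimality of $M$ and the EPA guarantee are accurate but not needed beyond what the paper states.
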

	\begin{proof}
		As noted above, one can color any graph $G$ with $\mathrm{degen}(G) + 1$ colors and, by \cref{lemma:col-chordal}, $\mathrm{degen}(G) + 1 \le \OPT_\mathrm{COL}(G-M) + \abs{M}$.
	\end{proof}

	A graph has the \emph{CK-property} if every maximal clique intersects every maximal independent set in the graph.
	\begin{lemma}[Corneil et al.~\cite{Corneil1981}]
		\label{lemma:cograph}
		A graph is a cograph if and only if it and every one of its induced subgraphs has the CK-property.
	\end{lemma}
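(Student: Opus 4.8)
The plan is to prove both implications, using the fact recalled in \cref{section:preliminaries} that the cographs are exactly the $P_4$-free graphs.

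For the ``only if'' direction, since every induced subgraph of a cograph is again a cograph, it suffices to prove that every cograph has the CK-property, and I would do this by structural induction following the cotree. A cograph on at least two vertices is either disconnected, hence a disjoint union $G_1 \uplus G_2$ of smaller cographs, or its complement is disconnected, hence a join $G_1 + G_2$ of smaller cographs. In the disjoint-union case every maximal clique lies entirely within a single part $G_j$ (there are no edges between the parts), whereas every maximal independent set has the form $I_1 \cup I_2$ with each $I_j$ maximal in $G_j$; thus a maximal clique $K \subseteq V(G_j)$ meets $I_j$ by the induction hypothesis and hence meets $I_1 \cup I_2$. The join case is exactly dual: every maximal independent set lies within a single part (any cross pair is an edge), while every maximal clique decomposes as $K_1 \cup K_2$, and applying the induction hypothesis inside the part that contains the independent set again forces a nonempty intersection. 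The base case of a single vertex is immediate, since then the unique maximal clique and the unique maximal independent set both equal the whole vertex set.

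For the ``if'' direction I would argue by contraposition: if $G$ is not a cograph then it contains an induced $P_4$, say on vertices $a,b,c,d$ with edges $ab$, $bc$, and $cd$, and this $P_4$ already fails the CK-property. Indeed, $\{b,c\}$ is a maximal clique of the $P_4$ and $\{a,d\}$ is a maximal independent set of the $P_4$, and these two sets are disjoint. Hence $G$ has an induced subgraph lacking the CK-property, which is the contrapositive of the desired implication.

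The real work sits in the forward direction: the backward direction reduces to a one-line check once the $P_4$ obstruction is pinned down, but the induction hinges on the clique/independent-set duality under union and join, so the main obstacle is verifying carefully that maximal cliques and maximal independent sets decompose exactly as claimed under each operation, and in particular that part-by-part maximality is both necessary and sufficient for global maximality. Once that decomposition is established the two inductive cases are symmetric under complementation, and the argument closes.
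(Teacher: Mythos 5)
Your proof is correct, but note that the paper does not prove this lemma at all: it is imported as a known result, cited to Corneil et al., so there is no in-paper argument to compare against. Your backward direction is airtight and fully elementary: the induced $P_4$ with the disjoint maximal clique $\{b,c\}$ and maximal independent set $\{a,d\}$ is exactly the right obstruction. Your forward direction is also sound, with one dependency worth flagging explicitly: the claim that every cograph on at least two vertices is either a disjoint union or a join of smaller cographs is itself a nontrivial classical theorem (Seinsche's theorem, equivalently the cotree decomposition), not an immediate consequence of the paper's definition of cographs as $P_4$-free graphs. Since that decomposition theorem is independent of the CK-property characterization, there is no circularity, and your use of it is legitimate; but a fully self-contained proof would either cite it or derive it. Granting it, your case analysis is complete and correct: in the disjoint-union case maximal cliques localize to one part while maximal independent sets restrict to maximal independent sets in every part, and in the join case the roles dualize exactly as you say, so the induction closes.
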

	\begin{theorem}
		\label{thm:col-cograph}
		There is a polynomial-time algorithm that colors any graph $G$ with at most $\OPTCOL(G-M) + \abs{M}$ colors where $M$ is a minimum modulator to a cograph in $G$.
	\end{theorem}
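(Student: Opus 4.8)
The plan is to color $G$ greedily by repeatedly extracting a maximal independent set, assigning it a fresh color, and deleting it, iterating until the graph is empty. This runs in polynomial time, since a maximal independent set is found by a trivial greedy sweep, and — as befits an EPA — the algorithm never needs to know the modulator $M$; the modulator enters only in the analysis. Because the number of colors used equals the number of iterations, it suffices to bound the latter by $\OPTCOL(G-M)+\abs{M}$.

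First I would set up a potential argument. Write $G'$ for the current graph at some iteration, put $M' \coloneqq M \cap V(G')$, and consider $\Phi \coloneqq \omega(G'-M') + \abs{M'}$. Initially $\Phi = \omega(G-M) + \abs{M}$, and $\Phi=0$ once $G'$ is empty. Since $\omega(G-M) \le \OPTCOL(G-M)$ holds trivially, it is enough to show that each extraction of a maximal independent set $I$ from $G'$ decreases $\Phi$ by at least one; this bounds the number of iterations, and hence the number of colors, by the initial value of $\Phi$. Observe that after removing $I$ the new graph is $G'' = G'-I$ with $M'' = M'\setminus I$, and $G''-M'' = (G'-M')-(I\setminus M')$ is an induced subgraph of $G'-M'$, so $\omega(G''-M'')\le \omega(G'-M')$ and $\abs{M''}\le\abs{M'}$ hold in every case; only the strictness requires an argument.

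The case analysis splits on whether $I$ meets $M'$. If $I\cap M'\neq\emptyset$, then $\abs{M'}$ drops by at least one while $\omega$ does not increase, so $\Phi$ drops. The harder case, which I expect to be the main obstacle, is $I\cap M'=\emptyset$. Here I would first argue that $I$, being a maximal independent set of $G'$ contained in $V(G')\setminus M'$, is also a maximal independent set of the induced subgraph $G'-M'$: any vertex $u\in V(G'-M')\setminus I$ has a neighbor in $I$ within $G'$ by maximality, and that edge survives in $G'-M'$ because both of its endpoints avoid $M'$. Now $G'-M'$ is an induced subgraph of the cograph $G-M$, so by \cref{lemma:cograph} it has the CK-property, meaning every maximal clique meets every maximal independent set. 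In particular, every maximum clique of $G'-M'$ — being maximal — meets $I$, so no clique of size $\omega(G'-M')$ can survive in $(G'-M')-I = G''-M''$; hence $\omega(G''-M'')\le\omega(G'-M')-1$ and $\Phi$ again drops by at least one.

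Combining the cases, each of the at most $n$ iterations decreases $\Phi$ by at least one, so the algorithm uses at most $\omega(G-M)+\abs{M}\le\OPTCOL(G-M)+\abs{M}$ colors. The only mildly delicate points are the two induced-subgraph observations (that $I$ restricts to a \emph{maximal} independent set of $G'-M'$, and that $\omega$ cannot rise under vertex deletion), both of which are routine; the real substance is the single application of the CK-property in the modulator-avoiding case, which is what forces the clique number of the cograph part strictly down. This mirrors the chordal result in \cref{thm:col-chordal}, but replaces the degeneracy ordering by iterated maximal-independent-set removal.
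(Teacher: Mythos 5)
Your proposal is correct and follows essentially the same route as the paper: the identical greedy algorithm (repeatedly extract a maximal independent set, color it, delete it), with the key step being exactly the paper's use of the CK-property of cographs (\cref{lemma:cograph}) to force the clique number of the cograph part strictly down whenever the extracted set avoids the modulator. The only difference is bookkeeping: your potential function tracks $\omega(G'-M')$ directly and closes with the trivial inequality $\omega(G-M)\le\OPTCOL(G-M)$, whereas the paper tracks the chromatic number of the cograph part and invokes perfection of cographs to pass between $\chi$ and $\omega$ --- a negligible (if slightly leaner) repackaging of the same argument.
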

	\begin{proof}
		The algorithm picks a maximal independent set $I$, assigns the same color to each vertex in $I$, and deletes $I$ from the input graph.
		This is repeated until the graph is empty.
		
		Let $M$ be a minimum modulator to a cograph in the input graph $G=(V,E)$.
		We must show that the described algorithm uses at most $\OPTCOL(G-M) + \abs{M}$ colors.
		Let $C_1$ be the set of colors that are each assigned to at least one vertex in $M$ and let $C_2$ be all other colors.
		Clearly, $\abs{C_1} \le \abs{M}$, so we will show that $\abs{C_2} \le \OPTCOL(G-M)$.
		
		Let $M'$ be the set of vertices that receive a color in $C_1$ and let $H \coloneqq G- M'$.
		Since $M \subseteq M'$, it follows that $G-M'$ is a cograph and that $\OPTCOL(G-M') \le \OPTCOL(G-M)$,
		Therefore, it is sufficient to show that $\abs{C_2} \le \OPTCOL(G-M')$.
		Let $I_1,\ldots,I_\ell$ be the independent sets picked by the algorithm in that order.
		Let $G_i$ be the graph that remains after $I_1,\ldots,I_i$ have been deleted and let $H_i \coloneqq G_i - M'$.
		
		Suppose that $I_j \in C_2$.
		We claim that $\OPTCOL(H_j) \le \OPTCOL(H_{j-1})-1$, in other words, that deleting $I_j$ decreases the chromatic number of the graph.
		Because $I_j$ is a maximal independent set in the graph $G_{j-1}$ and because $I_j$ does not intersect $M'$, it follows that $I_j$ is a maximal independent set in the cograph $H_{j-1}$.
		By \cref{lemma:cograph}, $I_j$ intersects every maximal clique in $H_{j-1}$.
		This implies $\omega(H_j) \le \omega(H_{j-1})$
		Since $H_{j-1}$ and $H_j$ are cographs, they are perfect, so $\OPTCOL(H_{j-1}) = \omega(H_{j-1})$ ad $\OPTCOL(H_{j}) = \omega(H_j)$.
		Hence, $\OPTCOL(H_j) \le \OPTCOL(H_{j-1}) -1$.
		Inductively it follows that $\abs{C_2} \le \OPTCOL(H)$.
	\end{proof}

	A vertex $v\in V$ is \emph{co-simplicial} if $v$ is simplicial in $\overline{G}$.
	In other words, $v$ is co-simplicial if the set of vertices that are not adjacent to $v$ form an independent set.
	A graph is chordal if and only if it and every one of its subgraphs contains a simplicial vertex.

	We will say that a vertex set $X\subseteq V$ in $G=(V,E)$ is a \emph{separator} if $G-X$ contains more connected components than $G$.
	It is a \emph{coconnected component} (a \emph{coseparator}) if it is a connected component  (a separator) in $\overline{G}$.
	
	\begin{observation}
		\label{obs:col-cochordal}
		Let $G$ be a cochordal graph and let $I$ be a maximal independent set in $G$.
		Then, $I$ is a coseparator in $G$ or there is a minimum coloring of $G$ under which all vertices in $I$ are assigned the same color.
	\end{observation}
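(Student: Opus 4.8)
The plan is to pass to the complement and reformulate everything for the chordal graph $H \coloneqq \overline{G}$. Under complementation the maximal independent set $I$ becomes a maximal clique $K \coloneqq I$ of $H$; a coloring of $G$ is exactly a partition of $V$ into cliques of $H$ (a clique cover), and since $K$ is a \emph{maximal} clique, ``all of $I$ gets the same color'' is equivalent to ``$K$ itself is one of the clique classes''; finally, $I$ being a coseparator in $G$ is by definition $K$ being a separator in $H$. Chordal graphs are perfect and perfection is preserved under complementation, so $G$ is perfect and a minimum coloring of $G$ uses $\OPTCOL(G) = \omega(G) = \alpha(H)$ colors, which is exactly the size of a minimum clique cover of $H$. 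Thus it suffices to show: either $K$ is a separator in $H$, or $H$ has a minimum clique cover (of size $\alpha(H)$) one of whose cliques is $K$.

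First I would record the elementary bound $\alpha(H) - 1 \le \alpha(H - K) \le \alpha(H)$: the upper bound is monotonicity, and the lower bound holds because any maximum independent set of $H$ meets the clique $K$ in at most one vertex, so deleting $K$ destroys at most one of its members. This yields a clean dichotomy. If $\alpha(H-K) = \alpha(H) - 1$, then combining the clique $K$ with a minimum clique cover of $H - K$ (which has size $\alpha(H-K) = \alpha(H) - 1$, since $H - K$ is again chordal, hence perfect) gives a clique cover of $H$ of size $\alpha(H)$ containing $K$, i.e.\ a minimum coloring of $G$ in which $I$ is monochromatic. So it remains to treat the case $\alpha(H - K) = \alpha(H)$ and show that then $K$ must be a separator.

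For that case I would argue the contrapositive: if $K$ is \emph{not} a separator of $H$, then $K$ contains a simplicial vertex $v$ of $H$, and since $v \in K$ and $K$ is a maximal clique this forces $N_H[v] = K$. Granting such a $v$, its only neighbours lie inside $K$, so for any maximum independent set $S$ of $H - K$ the set $S \cup \{v\}$ is independent in $H$; hence $\alpha(H) \ge \alpha(H-K) + 1$, contradicting $\alpha(H-K) = \alpha(H)$. The structural lemma I therefore need is: \emph{a non-separating maximal clique of a chordal graph contains a simplicial vertex.} I would prove this using a clique tree $T$ of the component of $H$ containing $K$: a maximal clique of degree at least two in $T$ separates $H$, because deleting it splits the remaining vertices according to the branches of $T$ with no edges across branches, and each branch contributes a nonempty vertex set; hence a non-separating $K$ is a leaf of $T$ (or the only node, when the component is complete), and a leaf clique $K$ with unique neighbour $K'$ always has a private vertex $v \in K \setminus K'$, which lies in no other maximal clique and is thus simplicial with $N_H[v] = K$.

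The main obstacle is precisely this last structural lemma tying ``non-separating'' to the existence of a simplicial vertex \emph{inside} the clique; everything else is bookkeeping via perfection and the one-line independence-number inequality. If one prefers to avoid clique trees, the same lemma can be obtained by induction on $\abs{V(H)}$ using Dirac's theorem (a non-complete chordal graph has two non-adjacent simplicial vertices), but the clique-tree formulation is cleanest and makes the ``branches become separate components'' picture explicit.
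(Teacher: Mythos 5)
Your proof is correct, and it differs from the paper's in an instructive way. Both arguments ultimately pivot on the same structural fact about chordal graphs---a non-separating maximal clique contains a simplicial vertex (equivalently: a maximal clique with no simplicial vertex is a separator)---but the paper merely \emph{asserts} this fact (in the form ``$I$ contains no cosimplicial vertex, hence $I$ is a separator in $\overline{G}$''), whereas you actually prove it via a clique tree; that is genuine added value. Where you diverge is in the dichotomy and in the construction of the good coloring: the paper splits on whether $I$ contains a cosimplicial vertex $v$ and, when it does, finishes in one line by taking an \emph{arbitrary} minimum coloring of $G$ and recoloring all of $I$ with $v$'s color (properness follows because every vertex of $I$ and every vertex in $v$'s color class is a non-neighbor of $v$, and these form an independent set). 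You instead split on whether $\alpha(H-K)$ equals $\alpha(H)-1$ or $\alpha(H)$, and in the first case assemble a minimum clique cover of $H$ containing $K$, which requires the perfection machinery (chordal graphs and their complements being perfect). Your route buys a statement that is purely about clique covers and independence numbers, but the perfection detour is avoidable: the simplicial vertex $v$ with $N_H[v]=K$ that your structural lemma produces is precisely a cosimplicial vertex of $G$ lying in $I$, at which point the paper's recoloring argument finishes both of your cases at once, with no counting and no appeal to the weak perfect graph theorem. In short: correct, slightly heavier than necessary in the coloring case, and more complete than the paper in the separator case.
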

	\begin{proof}
		We distinguish two cases depending on whether $I$ contains a cosimplicial vertex or not.
		If $I$ does contain a cosimplicial vertex $v$, let $c$ be any minimum coloring of $G$.
		We obtain $c'$ by assigning the color of $v$ to every vertex in $I$.
		It is easy to see that $c'$ is also a proper coloring and that $c'$ assigns the same color to every vertex in $I$.
		Now, suppose that $I$ does not contain a cosimplicial vertex.
		Then, $I$ does not contain a simplicial vertex in the chordal graph $\overline{G}$.
		It follows that $I$ is a separator in $\overline{G}$.
		Hence, $I$ is a coseparator in $G$.
	\end{proof}

	\begin{theorem}
			\label{thm:col-cochordal}
			There is a polynomial-time algorithm that colors any non-empty graph $G$ with at most $2\OPTCOL(G-M) + \abs{M} -1$ colors where $M$ is a minimum modulator to a cochordal graph in $G$.
	\end{theorem}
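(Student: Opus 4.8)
The plan is to reuse verbatim the greedy algorithm from \cref{thm:col-cograph}: repeatedly pick a maximal independent set $I$, give all of $I$ a fresh color, delete $I$, and recurse until the graph is empty. Fix a minimum modulator $M$ to a cochordal graph. Exactly as in the proof of \cref{thm:col-cograph}, I would split the colors into the set $C_1$ of colors appearing on at least one vertex of $M$ (so $\abs{C_1}\le\abs{M}$) and the remaining colors $C_2$, and let $M'\supseteq M$ be the set of all vertices colored from $C_1$. Since cochordality is hereditary, $H\coloneqq G-M'$ is cochordal and $\OPTCOL(G-M')\le\OPTCOL(G-M)$. The color classes in $C_2$ are precisely the independent sets picked by the algorithm that avoid $M'$; a short bookkeeping argument (identical to the cograph case) shows that, restricted to $H$, these classes are obtained by exactly the same greedy rule, i.e.\ each is a maximal independent set of the graph remaining after the previously chosen $C_2$-classes have been deleted. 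Hence it suffices to prove the following self-contained statement: running maximal-independent-set-removal greedily on any cochordal graph $H$ produces at most $2\OPTCOL(H)-1$ color classes.

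For the core claim I would argue by a potential function. Let $p(H)$ denote the number of nonempty coconnected components of $H$ (the connected components of $\overline{H}$), and set $\Psi(H)\coloneqq 2\OPTCOL(H)-p(H)$. Since distinct coconnected components are joined to each other in $H$, the graph is the join of its coconnected components and $\OPTCOL(H)$ is the sum of their chromatic numbers, each at least $1$; thus $p(H)\le\OPTCOL(H)$, so $\Psi(H)\ge\OPTCOL(H)\ge 0$, $\Psi$ vanishes only on the empty graph, and $\Psi(H)\le 2\OPTCOL(H)-1$ whenever $H$ is nonempty. The goal is to show that removing one maximal independent set $I$ strictly decreases $\Psi$, which then bounds the number of greedy steps by $\Psi(H)\le 2\OPTCOL(H)-1$.

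To analyze a single removal I would first note that $I$, being independent in $H$, is a clique in $\overline{H}$ and therefore lies inside a single connected component of $\overline{H}$; hence deleting $I$ affects only that one coconnected component, and the sign of the change $\Delta p$ is governed entirely by whether $I$ separates it. \cref{obs:col-cochordal} supplies the dichotomy. If $I$ is \emph{not} a coseparator, then the observation gives a minimum coloring in which $I$ is monochromatic; because $I$ is maximal, every outside vertex has a neighbor in $I$, so that color class equals $I$ and $\OPTCOL(H-I)=\OPTCOL(H)-1$. Moreover a non-coseparator cannot split its component (it either leaves one component or removes the component entirely), so $\Delta p\ge -1$, giving $\Delta\Psi=-2-\Delta p\le -1$. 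If $I$ \emph{is} a coseparator, then deleting it splits its component and $p$ increases by at least $1$; combined with the elementary fact that removing a single independent set drops the chromatic number by at most one ($\OPTCOL(H-I)\ge\OPTCOL(H)-1$), this yields $\Delta\Psi=2\,\Delta\OPTCOL-\Delta p\le 0-1=-1$. In all cases $\Psi$ drops by at least $1$.

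Combining the core claim with the reduction gives a total of $\abs{C_1}+\abs{C_2}\le\abs{M}+2\OPTCOL(G-M')-1\le\abs{M}+2\OPTCOL(G-M)-1$ colors, as claimed. I expect the main obstacle to be the coseparator case: a greedily chosen maximal independent set need not decrease the chromatic number, so the entire difficulty lies in accounting for these ``wasted'' steps. The potential $\Psi$ does exactly this by charging every wasted step to an increase in the number of coconnected components, and the two structural facts that an independent set resides in a single coconnected component and that non-coseparators decrease the chromatic number by exactly one (via \cref{obs:col-cochordal}) are precisely what make $\Psi$ monotone.
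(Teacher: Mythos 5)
Your proposal is correct and takes essentially the same route as the paper: your potential $\Psi(H)=2\OPTCOL(H)-p(H)$ is exactly the paper's strengthened induction invariant ($\abs{C_2}\le 2\OPTCOL(H)-r$ with $r$ the number of coconnected components), and your per-step case analysis via \cref{obs:col-cochordal} (coseparator, so $\Delta p\ge 1$ and $\Delta\OPTCOL\le 0$, versus monochromatic maximal class, so $\Delta\OPTCOL=-1$ and $\Delta p\ge -1$) coincides with the paper's two inductive cases. The one detail the paper treats explicitly that you gloss over is the degenerate case where $H=G-M'$ is empty, in which your chain $\abs{C_1}+\abs{C_2}\le\abs{M}+2\OPTCOL(G-M')-1$ breaks down (the core claim would bound $0$ steps by $-1$); this is repaired in one line, since minimality of $M$ and non-emptiness of $G$ give $\OPTCOL(G-M)\ge 1$, whence $\abs{C_1}\le\abs{M}\le\abs{M}+2\OPTCOL(G-M)-1$.
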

	\begin{proof}
		We will show that the same algorithm as in \cref{thm:col-cograph} achieves this.
		The algorithm picks a maximal independent set $I$, assigns the same color to each vertex in $I$, and deletes $I$ from the input graph.
		This is repeated until the graph is empty.
		
		Let $M$ be a minimum modulator to a cochordal graph in the input graph $G=(V,E)$.
		We must show that the described algorithm uses at most $2\OPTCOL(G-M) + \abs{M}$ colors.
		Let $C_1$ be the set of colors that are each assigned to at least one vertex in $M$ and let $C_2$ be all other colors.
		Clearly, $\abs{C_1} \le \abs{M}$, so we will show that $\abs{C_2} \le 2\OPTCOL(G-M)$.
		
		Let $M'$ be the set of vertices that receive a color in $C_1$ and let $H \coloneqq G- M'$.
		We will first consider the edge case where $H$ is the empty graph.
		Since $G$ is non-empty, $G-M$ is non-empty, because, otherwise, $M$ would not be minimum.
		Hence, $\OPTCOL(G-M) \ge 1$.
		However, since every color is assigned to at least one vertex in $M$, it follows that it uses at most $\abs{M} \le 2\OPTCOL(G-M) + \abs{M} -1$ colors.
		
		Now, suppose that $H$ is non-empty.
		Since $M \subseteq M'$, it follows that $G-M'$ is cochordal and that $\OPTCOL(G-M') \le \OPTCOL(G-M)$.
		Therefore, it is sufficient to show that $\abs{C_2} \le 2\OPTCOL(H) -1$.
		By induction on $\abs{C_2}$, we will prove  a stronger claim, namely that $\abs{C_2} \le 2\OPTCOL(H) -r$ where $r$ is the number of coconnected components in $H$ (note that, since $H$ is non-empty, $r\ge 1$).
		
		For the base case, note that, if $\abs{C_2} = 1$, then the claim clearly holds.
		Otherwise, let $I$ be the first independent set contained in $H$ chosen by the algorithm.
		Note that $I$ is a maximal independent set in $H$.
		Hence, by \cref{obs:col-cochordal}, \begin{inparaenum}[(i)]
			\item is a coseparator in $H$ or
			\item there is a minimum coloring of $H$ under which all vertices in $I$ are assigned the same color.
		\end{inparaenum}
		In case~(i), $H-I$ contains at least $r+1$ coconnected components.
		Hence, by induction, the algorithm uses at most $2\OPTCOL(H-I) - r -1$ colors to color $H-I$.
		Hence, taking into account the color assigned to the vertices in $I$, we get that $H$ is colored with at most $2\OPTCOL(H-I) - r \le 2\OPTCOL(H) - r$ colors.
		In case~(ii), $H-I$ contains at least $r-1$ connected components, since an any independent set must be contained within a single coconnected component.
		Moreover, $\OPTCOL(H-I) = \OPTCOL(H) -1$, since $I$ is a color in an optimum coloring of $H$.
		By induction, the algorithm uses at most $2\OPTCOL(H-I) - r +1 = 2\OPTCOL(H) - r -1$ colors to color $H-I$.
		Taking into account the color assigned to the vertices in $I$, this means that the algorithm uses at most $2\OPTCOL(H) - r$ colors to color $H$.
	\end{proof}

	We leave open whether \cref{thm:col-cochordal} can be improved by providing a coloring with $\OPTCOL + \allowbreak \OPT_\mathrm{CChVD}$ colors.
	We will briefly argue that the analysis of the algorithm in \cref{thm:col-cochordal} is tight.
	Consider the complement of the graph in \Cref{fig:col-cochoral-tight}.
	The graph is cochordal, so $\OPT_\mathrm{CChVD} =0$.
	An optimum coloring assigns the same color to $x_i$, $y_i$, and $z_i$ for each $i\in [n]$ for a total of $n$ colors.
	The algorithm may choose the independent sets $\{x_i,y_i\}$ first for $i \in [n-1]$, and then $\{x_1,z_1\}$, $\{y_n,z_n\}$, and $\{z_i\}$ for $i\in \{2,\ldots,n-1\}$ for a total of $2n-1$ colors.
	
	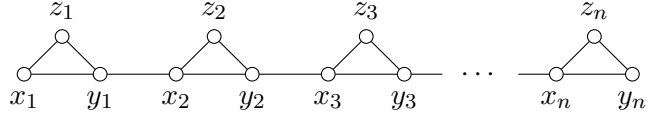
\begin{figure}
		\centering
		\begin{tikzpicture}
			\def\nsc{0.5}
			\def\lheight{-1}
			\tikzstyle{xnode}=[circle,scale=\nsc,draw];
			\node[xnode,label=below:$x_1$] (x1) at (0,0) {};
			\node[xnode,label=below:$y_1$] (y1) at (1,0) {};
			\node[xnode,label=above:$z_1$] (z1) at (0.5,0.5) {};
			\draw (x1) -- (z1) -- (y1) -- (x1);
			
			\node[xnode,label=below:$x_2$] (x2) at (2,0) {};
			\node[xnode,label=below:$y_2$] (y2) at (3,0) {};
			\node[xnode,label=above:$z_2$] (z2) at (2.5,0.5) {};
			\draw (x2) -- (z2) -- (y2) -- (x2);
			\draw (y1) -- (x2);
			
			\node[xnode,label=below:$x_3$] (x3) at (4,0) {};
			\node[xnode,label=below:$y_3$] (y3) at (5,0) {};
			\node[xnode,label=above:$z_3$] (z3) at (4.5,0.5) {};
			\draw (x3) -- (z3) -- (y3) -- (x3);
			\draw (y2) -- (x3);
			
			\draw (y3) -- (5.5,0);
			
			\node () at (6,0) {$\cdots$};
			
			\node[xnode,label=below:$x_n$] (xn) at (7,0) {};
			\node[xnode,label=below:$y_n$] (yn) at (8,0) {};
			\node[xnode,label=above:$z_n$] (zn) at (7.5,0.5) {};
			\draw (xn) -- (zn) -- (yn) -- (xn);
			\draw (6.5,0) -- (xn);
		\end{tikzpicture}
		\caption{An instance showing that the analysis of the algorithm in \cref{thm:col-cochordal} is tight.
		The example is the complement of the pictured graph.}
		\label{fig:col-cochoral-tight}
	\end{figure}

	\subsection[P3+K1-free deletion]{$(P_3+K_1)$-free deletion}
			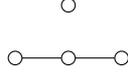
\begin{figure}
		\centering
		\begin{tikzpicture}[scale=.7]
		\def\nsc{0.5}
		\def\lheight{-1}
		\tikzstyle{xnode}=[circle,scale=\nsc,draw];
		\node[xnode] (a) at (1,2) {};
		\node[xnode] (b) at (0,1) {};
		\node[xnode] (c) at (1,1) {};
		\node[xnode] (d) at (2,1) {};
		\draw (b) -- (c) -- (d);
	\end{tikzpicture}
	\caption{The graph $P_3 + K_1$}
	\label{fig:p3k1}
	\end{figure}
	Let $H$ be a graph.
	We say that a graph $G$ is \emph{$H$-free} if $H$ does not occur as an induced subgraph of $G$.
	Kr\'al' et al.~\cite{Kral2001} proved that the Coloring problem is polynomial-time solvable on the class of $H$-free graphs if and only if $H$ is a subgraph of $P_4$ or of $P_3+K_1$ (the graph pictured in \Cref{fig:p3k1}).
	The class of $P_4$-free graphs are precisely the cographs and we showed in \cref{thm:col-cograph} that there is an algorithm that colors any graph with $\OPT_\mathrm{COL}(G-M) + \abs{M}$ colors where $M$ is a minimum modulator to a cograph.
	Here we prove that the same is also to true for a modulator to a $(P_3+K_1)$-free graph.
	
	The \emph{full join} of two graphs $G_1=(V_1,E_1)$ and $G_2=(V_2,E_2)$ is the graph $G_1\times G_2=(V_1 \cup V_2,E')$ where $E'=E_1\cup E_2 \cup \{\{v_1,v_2\}  \mid v_1 \in V_1, v_2\in V_2\}$.
	
	\begin{lemma}[Olariu~\cite{Olariu1988}]
		A graph $G$ is $(P_3+K_1)$-free if and only if $G$ is the full join of graphs $G_1$ and $G_2$ such that $G_1$ is a cocluster graph and $G_2$ is $\overline{K_3}$-free.
	\end{lemma}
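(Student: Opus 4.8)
The plan is to prove both implications, treating the forward (``if'') direction by hand and reducing the backward direction to the structure theory of paw-free graphs via complementation.

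For the forward direction, suppose $G = G_1 \times G_2$ with $G_1$ a cocluster graph and $G_2$ an $\overline{K_3}$-free graph. First I would check that each factor is itself $(P_3+K_1)$-free. A cocluster (i.e.\ complete multipartite) graph cannot contain an induced $P_3$ together with a fourth vertex missing all of it: in an induced $P_3$ the two endpoints share a part while the midpoint lies in a different part, so any other vertex is adjacent to at least one of the three. An $\overline{K_3}$-free graph has independence number at most $2$, whereas $\alpha(P_3+K_1)=3$, so it contains no $P_3+K_1$ either. It then suffices to observe that a full join of two $(P_3+K_1)$-free graphs is again $(P_3+K_1)$-free: in any induced copy of $P_3+K_1$ the isolated vertex has no neighbour among the other three, but every vertex of $V_1$ is adjacent to every vertex of $V_2$, so all four vertices must lie on the same side of the join, where by assumption no such configuration occurs.

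For the backward direction I would pass to the complement. A direct computation shows that $\overline{P_3+K_1}$ is the paw (a triangle with one pendant edge), so $G$ is $(P_3+K_1)$-free if and only if $\overline{G}$ is paw-free. The substantive content is then the structural description of paw-free graphs: every connected component of a paw-free graph is either triangle-free or complete multipartite. I would prove this by fixing a component $H$ that contains a triangle and showing it is $(K_2+K_1)$-free, which is equivalent to complete multipartiteness (non-adjacency being an equivalence relation is exactly the absence of an induced edge-plus-isolated-vertex). The key local fact, immediate from paw-freeness, is that no vertex outside a triangle can be adjacent to exactly one of its three vertices. Assuming an induced $K_2+K_1$ in $H$ and following a shortest path in $H$ from the isolated vertex to the edge, repeated use of this local fact (together with the triangle $H$ is assumed to contain) should force a contradiction. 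I expect this structure theorem to be the main obstacle, since it requires a careful case analysis around the triangle and the connecting path; it is essentially Olariu's original argument.

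Finally I would translate back. Decomposing $\overline{G}$ into its connected components exhibits $G$ as the full join of its co-connected pieces $G[C_1],\dots,G[C_t]$, where each $\overline{G}[C_i]$ is either triangle-free, so $G[C_i]$ is $\overline{K_3}$-free, or complete multipartite, so $G[C_i]$ is a disjoint union of cliques. Collecting the pieces of the first type into $G_2$ already yields an $\overline{K_3}$-free graph, because the independence number of a full join is the maximum of the independence numbers of the factors. The remaining step is to assemble the pieces of the second type into a single cocluster factor $G_1$; this recombination is the delicate point I would have to verify, as a join of arbitrary cluster-type pieces need not be complete multipartite, so one must control the exact shape of these pieces. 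Establishing this bookkeeping, together with the structure theorem above, completes the argument.
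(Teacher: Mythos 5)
Your forward direction is correct and complete: cocluster graphs and $\overline{K_3}$-free graphs are each $(P_3+K_1)$-free, and any induced $P_3+K_1$ in a full join must lie entirely inside one factor, so the join is again $(P_3+K_1)$-free. Your reduction of the backward direction to the structure theorem for paw-free graphs (every connected component is triangle-free or complete multipartite) is also the right move; note that the paper itself offers no proof at all but simply cites Olariu, whose actual theorem is exactly this component-wise statement.

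However, the step you defer --- assembling the cluster-type co-components into a single cocluster factor $G_1$ --- is a genuine gap, and it cannot be filled, because the two-factor statement you are asked to prove is false as written. Take $G = K_2 + 2K_1$ (four vertices, one edge $ab$, isolated vertices $c,d$). It contains no induced $P_3$ at all, hence is $(P_3+K_1)$-free. Any full-join decomposition $G = G_1 \times G_2$ with both sides nonempty would force at least $1\cdot 3 = 3$ cross edges, but $G$ has only one edge; so one factor is empty and $G$ itself would have to be a cocluster or $\overline{K_3}$-free. It is neither: $\overline{G}$ is $K_4$ minus the edge $ab$, which contains the induced path $a$--$c$--$b$, so $\overline{G}$ is not a cluster graph; and $\{a,c,d\}$ is an independent set of size $3$ in $G$. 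What your co-component analysis actually proves is the correct, multi-factor version: $G$ is $(P_3+K_1)$-free if and only if every co-connected component of $G$ induces either a cluster graph or an $\overline{K_3}$-free graph (equivalently, $G$ is the full join of finitely many graphs, each a cluster graph or $\overline{K_3}$-free). A co-component inducing a cluster graph with at least three cliques, not all singletons --- as in the example above --- can be merged neither into a cocluster factor (that would require it to be edgeless) nor into an $\overline{K_3}$-free factor, which is precisely why the recombination you flag as delicate is impossible in general. So the defect lies not in your strategy but in the paper's restatement of Olariu's result; the downstream application in \cref{thm:col-p3k1free} should be read against the component-wise formulation (indeed, that proof treats the relevant factor as inducing a cluster graph, not a cocluster, which is a further sign that the lemma's statement is garbled).
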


	\begin{observation}
		\label{obs:col-k3-free}
		Let $G$ be a $\overline{K_3}$-free graph with $n$ vertices let $X$ be a maximum matching in $\overline{G}$.
		Then, $\OPTCOL(G) = n - \abs{X}$ and there is an optimum coloring of $G$ that assigns the same color to vertices that are matched under $X$ and unique colors to free vertices under $X$.
	\end{observation}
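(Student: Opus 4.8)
The plan is to translate \chromaticnumber on $G$ into a matching problem on the complement $\overline{G}$, exploiting the constraint that $\overline{K_3}$-freeness places on the size of color classes. First I would observe that $G$ being $\overline{K_3}$-free means precisely that $G$ contains no independent set on three vertices, i.e.\ $\alpha(G) \le 2$. Since every color class in a proper coloring is an independent set in $G$, each color class therefore has size at most two.

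Next I would establish a correspondence between colorings of $G$ and matchings in $\overline{G}$. A color class of size two is a pair of non-adjacent vertices in $G$, which is exactly an edge of $\overline{G}$; as color classes are pairwise disjoint, the size-two classes of any coloring form a matching in $\overline{G}$. Conversely, any matching $M$ in $\overline{G}$ yields a valid coloring of $G$: give each matched pair a common color and each free vertex its own unique color. This coloring is proper (matched pairs are non-edges of $G$, hence independent), and it uses exactly $\abs{M} + (n - 2\abs{M}) = n - \abs{M}$ colors.

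With this correspondence in hand, minimizing the number of colors is the same as maximizing $\abs{M}$. Letting $X$ be a maximum matching in $\overline{G}$, the coloring described in the statement uses $n - \abs{X}$ colors, giving $\OPTCOL(G) \le n - \abs{X}$; and since every proper coloring arises from some matching in $\overline{G}$ and hence uses at least $n - \abs{X}$ colors, we get $\OPTCOL(G) = n - \abs{X}$, witnessed by the stated coloring.

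I do not expect a genuine obstacle here: the only points requiring care are verifying both directions of the coloring--matching correspondence and noting that the matching-based lower bound holds for \emph{every} coloring (not just the constructed one), which is what upgrades the inequality $\OPTCOL(G) \le n - \abs{X}$ to an equality.
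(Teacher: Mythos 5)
Your proof is correct and takes essentially the same approach as the paper's: the upper bound comes from the coloring induced by the maximum matching $X$, and the lower bound from observing that the size-two color classes of any proper coloring of the $\overline{K_3}$-free graph $G$ form a matching in $\overline{G}$, hence number at most $\abs{X}$. Packaging the two directions as a coloring--matching correspondence is only a cosmetic difference; the underlying argument is identical.
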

	\begin{proof}
		Let $F$ be the set of free vertices under $X$.
		
		$``\le''$: The coloring that assigns the same coloring to vertices that are matched under $X$ and unique colors to vertices in $F$ is a valid coloring and it uses $\abs{X} + \abs{F} = n - \abs{X}$ colors.
		
		$``\ge''$: Let $C_1,\ldots,C_\ell$ be the color classes in an optimum coloring of $G$.
		Since $G$ is $\overline{K_3}$-free, every $C_i$ contains at most two vertices.
		Suppose that $C_1,\ldots,C_j$ contain two vertices each and $C_{j+1},\ldots,C_\ell$ contain one vertex each.
		If $C_i = \{u,v\}$, $i\le j$, then $\{u,v\}$ is an edge in $\overline{G}$.
		Hence, $C_1,\ldots,C_j$ is a matching in $\overline{G}$.
		Therefore, $j\le \abs{X}$.
		Hence, this coloring uses $\ell = \ell - j + j \le \ell - j + \abs{X} = n -\abs{X}$ colors.
	\end{proof}

	\begin{theorem}
		\label{thm:col-p3k1free}
		There is a polynomial-time algorithm that colors any graph $G$ with at most $\OPT_\mathrm{COL}(G-M) + \abs{M}$ colors where $M$ is a minimum modulator to a $(P_3+K_1)$-free graph in $G$.
	\end{theorem}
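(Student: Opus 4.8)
The plan is to pass to the complement, where coloring $G$ with few colors is exactly covering $V(\overline{G})$ by few cliques. By Olariu's lemma, $G-M = G_1 \times G_2$, where $G_1$ is a cocluster graph---that is, a complete multipartite graph whose parts $P_1,\dots,P_p$ are precisely its maximal independent sets---and $G_2$ is $\overline{K_3}$-free. As $G-M$ is a full join, $\OPTCOL(G-M) = \OPTCOL(G_1) + \OPTCOL(G_2) = p + \OPTCOL(G_2)$, and \cref{obs:col-k3-free} evaluates $\OPTCOL(G_2) = \abs{V(G_2)} - \abs{X_2}$ for a maximum matching $X_2$ of $\overline{G_2}$. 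Taking complements, $\overline{G-M} = \overline{G_1} \sqcup \overline{G_2}$ is the disjoint union of the $p$ cliques $P_1,\dots,P_p$ and the triangle-free graph $\overline{G_2}$. A key observation is that $\overline{G}$ has no edges between $V(G_1)$ and $V(G_2)$ (those pairs are fully joined in $G-M$), so each $P_i$ is still a clique of $\overline{G}$ and the edges of $X_2$ are vertex-disjoint cliques of $\overline{G}$ disjoint from the $P_i$. Packing these cliques and padding with singletons covers $V$ with $p + (\abs{V(G_2)} - \abs{X_2}) + \abs{M} = \OPTCOL(G-M) + \abs{M}$ cliques; the whole difficulty is to find a cover of this size in polynomial time \emph{without} knowing $M$.

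The algorithm I would use first computes a maximum matching $X$ in $\overline{G}$ and forms initial color classes from the matched pairs together with one singleton per unmatched vertex, and then repeatedly merges two classes whenever their union is a clique of $\overline{G}$ (equivalently, an independent set of $G$), halting when no merge applies. Every class is independent in $G$, so the output is a proper coloring, and the number of colors is $\abs{V}$ minus the total ``saving'' $\sum_j(\abs{K_j}-1)$ of the resulting clique packing $\{K_j\}$. The intended behaviour is that, since each part $P_i$ is a clique of $\overline{G}$, all classes lying inside $P_i$ get merged back into a single class, recovering the saving $\abs{P_i}-1$, whereas on the triangle-free side $\overline{G_2}$ no two matched edges can ever merge (there is no triangle), so a maximum matching there is never degraded; the $\le\abs{M}$ modulator vertices then inflate the cover by at most one class apiece. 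Following the template of \cref{thm:col-cograph}, I would split the final colors into those that meet $M$ (at most $\abs{M}$) and those that do not, and show the latter color $G-M$ with at most $\OPTCOL(G-M)$ colors.

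The main obstacle is that the maximum matching $X$ is computed globally on $\overline{G}$ and need not respect the hidden decomposition: a vertex of a part $P_i$ may be matched to a vertex of $M$ rather than within $P_i$, and a vertex of $G_2$ may likewise be matched into $M$, so it is not immediate that the final saving reaches the value $\abs{V(G_1)} - p + \abs{X_2}$ attained by the structured packing above. The heart of the argument will be a local exchange analysis, using the two facts that $\overline{G}$ has no $V(G_1)$--$V(G_2)$ edges and that within $V(G_1)$ every $\overline{G}$-edge stays inside a single part, to show that such ``cross'' matches are never wasteful: a part vertex paired with a modulator vertex saves exactly what that modulator vertex would have saved as a singleton, and \cref{obs:col-k3-free} certifies that on $\overline{G_2}$ a maximum matching is genuinely optimal, so no saving is lost there either. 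Making this exchange argument fully rigorous---in particular ruling out the failure mode in which an arbitrary maximal independent set (as in the plain cograph greedy, which already over-colors $P_4$) is chosen instead of a matching-optimal one on the $G_2$ side---is the step I expect to require the most care.
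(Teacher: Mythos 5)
Your algorithm is genuinely different from the paper's, but as written the proposal is a plan rather than a proof: the entire analytical content of the theorem is deferred to an ``exchange analysis'' that you never carry out. Concretely, after the global maximum matching $X$ of $\overline{G}$ and the greedy merging halt, you must show that the number of color classes avoiding $M$ is at most $\OPTCOL(G-M) = p + \OPTCOL(G_2)$. The $G_1$ side is fine (two final classes inside the same part would still be mergeable, so there is at most one per part), but on the $G_2$ side the restriction $X_{22}$ of $X$ to $\overline{G_2}$ need not be a maximum matching of $\overline{G_2}$, and greedy merging can never repair pair structure inside $G_2$ (no independent set of size $3$ exists there), so ``a maximum matching there is never degraded'' is not by itself an argument. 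The missing step is a counting lemma of the following sort: letting $X_{2M}$ be the set of $X$-edges joining $V(G_2)$ to $M$, every $X_{22}$-augmenting path in $\overline{G_2}$ must have an endpoint covered by $X_{2M}$ (otherwise it would augment $X$ itself), and since augmenting paths are vertex-disjoint this yields $\nu(\overline{G_2}) \le \abs{X_{22}} + \abs{X_{2M}}$; one then verifies that the quantity (number of pairs avoiding $M$ inside $G_2$) plus (number of $G_2$-vertices lying in classes that meet $M$) starts at $\abs{X_{22}}+\abs{X_{2M}}$ and never decreases under any merge, so at termination the classes avoiding $M$ inside $G_2$ number at most $\abs{V(G_2)} - \nu(\overline{G_2}) = \OPTCOL(G_2)$ by \cref{obs:col-k3-free}. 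With this lemma and invariant your algorithm is in fact correct, but neither appears in your write-up, and together they are the heart of the matter --- exactly the step you yourself flag as ``requiring the most care.''

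For comparison, the paper avoids this difficulty entirely by reversing the order of the two ingredients: it first greedily removes maximal independent sets of size at least $3$ (any such set avoiding $M$ must lie in the cocluster factor $G_1$, and by the CK-property of cographs its removal drops the optimum of that factor by one, exactly as in \cref{thm:col-cograph}), and only then runs the matching step on the remaining graph, which by construction has no independent set of size $3$, so \cref{obs:col-k3-free} applies verbatim and certifies optimality of the matching phase. That ordering makes the analysis local and eliminates any interaction between the matching and the hidden modulator, which is precisely the interaction your matching-first approach has to fight.
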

	\begin{proof}
		The algorithm proceeds in two phases.
		In the first phase, the algorithm finds a maximal independent set $X$ of size at least $3$, assigns all vertices in $X$ the same color and deletes $X$ from $G$.
		This is repeated as long as the input graph $G$ contains an independent set of size at least $3$.
		Once no such independent set exists, it enters the second phase.
		In the second phase, the algorithm computes a maximum matching $Y$ in $\overline{G}$.
		The free vertices under $Y$ each receive a unique color while pairs of vertices that are matched by $Y$ receive the same color.
		
		Clearly, this algorithm can be implemented in polynomial time.
		We must show that it uses at most $\OPT_\mathrm{COL}(G-M) + \abs{M}$ colors where $M$ is a fixed minimum modulator to a $(P_3+K_1)$-free graph.
		Let $G-M = G_1 \times G_2$ where $G_1$ is a cocluster and $G_2$ is $\overline{K_3}$-free.
		We partition the set of colors used by the algorithm into $C_1$, $C_2$, and $C_3$ as follows. The set $C_1$ contains all colors that contain at least one vertex from $M$.
		The set $C_2$ contains all colors that do not contain a vertex from $M$ and are assigned in the first phase of the algorithm.
		Finally, $C_3$ contains all other colors, that is all colors assigned in the second phase that do not contain a vertex from $M$.
		Of course $\abs{C_1} \le \abs{M}$, so it suffices to show that $\abs{C_2} + \abs{C_3} \le \OPT_\mathrm{COL}(G-M)$.
		Let $H_i$ be the induced subgraph of $G$ containing all vertices that receive a color in $C_i$.

		We start by showing that $\abs{C_2} = \OPTCOL(G-H_1) - \OPTCOL(H_2)$.
		The colors in $C_2$ are each assigned to at least three vertices in $G-M$.
		Therefore, they are only assigned to vertices in~$G_1$.
		Moreover, each such color is a maximal independent set in that cocluster.
		Since $C$ induces a cocluster in $\overline{G-M}$, it induces a cluster graph in $G - M$.
		Cluster graphs are cographs and, therefore, have the CK property.
		It follows that removing a maximal independent set from $C$ reduces its clique number and therefore its chromatic number by $1$.
		Inductively, the claim follows.
		
		Next, we prove that $\abs{C_3} = \OPTCOL(H_3)$.
		Note that $H_3$ does not contain an independent set of size $3$.
		Therefore, by \cref{obs:col-k3-free}, phase two of the algorithm colors $H_3$ optimally.
	\end{proof}

\section{Triangle Packing}
\label{sec:tp}
In this section, we will present two efficient parameterized approximations for the \textsc{Triangle Packing} problem.
These algorithms compute a triangle packing of size at least $\OPTTP(G-M)$ where $G$ is a given graph and $M$ is a modulator to a cluster graph and a cocluster graph, respectively.
We briefly note that this is enough to fulfill the definition of an EPA.

\begin{observation}
	Let $G=(V,E)$ be a graph, $\calT$ be a triangle packing in $G$, and $M\subseteq V$ be a set of vertices.
	Then, $\OPTTP(G-M) \ge \OPTTP(G) - \abs{M}$.
\end{observation}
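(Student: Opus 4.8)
The plan is to argue directly from an optimal packing in $G$, discarding only the few triangles that touch $M$. Concretely, I would start by letting $\calT$ be an \emph{optimal} triangle packing in $G$, so that $\abs{\calT} = \OPTTP(G)$ (the triangle packing named in the statement can be taken to be this optimal one). The key structural fact is that the triangles in $\calT$ are pairwise vertex-disjoint by the definition of a triangle packing.

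From vertex-disjointness I would derive the crucial counting bound: since every vertex of $M$ is contained in at most one triangle of $\calT$, the number of triangles in $\calT$ that contain at least one vertex of $M$ is at most $\abs{M}$. Let $\calT'$ denote the subfamily of triangles in $\calT$ that are disjoint from $M$; then
\begin{align*}
	\abs{\calT'} \ge \abs{\calT} - \abs{M} = \OPTTP(G) - \abs{M}.
\end{align*}

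Next I would verify that $\calT'$ is a valid triangle packing in $G-M$. Each triangle in $\calT'$ uses only vertices of $V \setminus M$, so it induces a triangle in the induced subgraph $G-M$; and the triangles in $\calT'$ remain pairwise vertex-disjoint since they form a subfamily of $\calT$. Hence $\calT'$ is a feasible solution for \textsc{Triangle Packing} on $G-M$, which gives $\OPTTP(G-M) \ge \abs{\calT'} \ge \OPTTP(G) - \abs{M}$, as claimed.

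There is essentially no hard step here: the statement is a clean ``deletion decreases the optimum by at most the number of deleted vertices'' bound, and the only thing requiring any care is the counting step, where vertex-disjointness is exactly what guarantees that deleting $M$ destroys at most $\abs{M}$ of the triangles. I would make sure to state that dependence explicitly so the inequality $\abs{\calT'} \ge \abs{\calT} - \abs{M}$ is fully justified rather than asserted.
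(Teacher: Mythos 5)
Your proof is correct. It takes a slightly different route from the paper: the paper argues by induction on $\abs{M}$ (base case $\abs{M}=0$, then an inductive step asserting $\OPTTP(G-(M\cup\{v\})) \ge \OPTTP(G-M) - 1$), whereas you give a direct one-shot counting argument over a single optimal packing. The two rest on the same underlying fact---vertex-disjointness means each deleted vertex destroys at most one triangle---but your version is arguably more self-contained: the paper's inductive step is asserted without justification (it even introduces an optimum packing $\calT$ that is never used afterwards), and the justification it would need is exactly the single-vertex case of your disjointness counting, which you spell out. One caveat that applies to both proofs: the paper's formal definition of a triangle packing only requires that each $T\in\calT$ induce a triangle and does not explicitly state pairwise vertex-disjointness; disjointness is nevertheless essential, since without it the observation is false (in $K_4$ there are four triangles, but deleting one vertex leaves only one), so you are right to invoke it explicitly---it is implicit in the problem being a packing problem, as the paper's other proofs also assume.
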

\begin{proof}
	By induction on $\abs{M}$.
	If $\abs{M} = 0$, then $\OPTTP(G-M) = \OPTTP(G) - \abs{M}$.
	Suppose that the claim holds for $M$ and let $v\in V \setminus M$.
	Let $\calT$ be an optimum triangle packing in $G$.
	Then,
	\begin{align*}
		\OPTTP(G-(M\cup\{v\})) & \ge \OPTTP(G-M) - 1 \ge \OPTTP(G) - \abs{M} - 1\\
		& = \OPTTP(G) - \abs{M\cup \{v\}} \qedhere
	\end{align*}
\end{proof}

For a triangle packing $\calT$, we will say that a vertex is \emph{free} under $\calT$, if it is not contained in any triangle in $\calT$.
We will start by considering the parameterization by the size of a modulator to a cluster graph.

\begin{theorem}
	\label{thm:tp-cvd}
	There is a polynomial-time algorithm that, given a graph $G$, computes a triangle packing of size at least $\OPTTP(G-M)$ where $M$ is a minimum modulator to a cluster graph.
\end{theorem}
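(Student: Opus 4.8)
The plan is to have the algorithm simply compute, greedily, \emph{any} maximal triangle packing $\calT$ in $G$: repeatedly add a triangle on three currently free vertices until none remain. This clearly runs in polynomial time (e.g.\ by scanning triples of free vertices), and crucially it needs no knowledge of $M$. The entire content of the proof is then to argue that maximality \emph{alone} forces $\abs{\calT} \ge \OPTTP(G-M)$.

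Let $M$ be a minimum modulator to a cluster graph and let $C_1,\dots,C_p$ be the cliques (connected components) of the cluster graph $G-M$, so that $\OPTTP(G-M) = \sum_{j} \lfloor \abs{C_j}/3\rfloor$. The structural observation driving everything is that, since $G-M$ is an \emph{induced} subgraph, there are no edges of $G$ between distinct cliques $C_j$ and $C_{j'}$. Consequently every triangle of $G$ meets at most one clique: if two of its vertices lay in $V\setminus M$ they would have to be adjacent and hence in a common clique. Writing $t_j$ for the number of triangles of $\calT$ that meet $C_j$, and noting that triangles contained entirely in $M$ meet no clique, this gives $\sum_j t_j \le \abs{\calT}$.

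Next I would bound $t_j$ from below for each fixed clique. Maximality of $\calT$, together with the fact that $C_j$ is a clique, implies that at most two vertices of $C_j$ are free under $\calT$ (three free clique vertices would form an addable triangle), so the triangles meeting $C_j$ cover at least $\abs{C_j}-2$ of its vertices. Since each triangle covers at most three vertices of $C_j$, we get $t_j \ge (\abs{C_j}-2)/3$; because $t_j$ is an integer this rounds up to $t_j \ge \lceil (\abs{C_j}-2)/3\rceil = \lfloor \abs{C_j}/3\rfloor$. Summing over $j$ then yields $\abs{\calT} \ge \sum_j t_j \ge \sum_j \lfloor \abs{C_j}/3\rfloor = \OPTTP(G-M)$, which is exactly the claimed bound.

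The step I expect to be the crux is this per-clique integrality argument, since the obvious global count is too weak. Arguing globally — ``at most two free vertices per clique, hence at least $\abs{V\setminus M} - 2p$ covered vertices and so at least $(\abs{V\setminus M}-2p)/3$ triangles'' — can fall short of $\OPTTP(G-M)$ by as much as $2p/3$, because a triangle that spends one or two of its vertices in $M$ appears to ``waste'' clique vertices. The fix is to localize the count to a single clique and round there: the identity $\lceil (\abs{C_j}-2)/3\rceil = \lfloor \abs{C_j}/3\rfloor$ recovers precisely the lost slack, while the observation that no triangle crosses two cliques is what makes the local bounds add up to the global one. Once this is in place, the result is immediate.
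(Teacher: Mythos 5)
Your proof is correct and follows essentially the same route as the paper's: greedily take any maximal triangle packing, note that maximality leaves at most two free vertices per clique of $G-M$, that no triangle meets two cliques, and conclude $t_j \ge \lceil(\abs{C_j}-2)/3\rceil = \lfloor\abs{C_j}/3\rfloor$ per clique. You in fact spell out the per-clique integrality rounding more explicitly than the paper does, but the argument is the same.
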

\begin{proof}
	We will show that greedily choosing a maximal triangle packing leads to a triangle packing of size at least $\OPTTP(G-M)$.
	
	Suppose that $G-M$ contains $\ell$ clusters $C_1,\ldots,C_\ell$ of sizes $s_1,\ldots,s_\ell$.
	Then, $$ \OPTTP(G-M) = \sum_{i=1}^\ell \lfloor \frac{s_i}{3} \rfloor.$$
	Let $\calT$ be the triangle packing output by the algorithm.
	Under $\calT$, at most two vertices are free in each cluster of $G-M$, since otherwise $\calT$ would not be maximal.
	Hence, for each $i\in [\ell]$, the packing $\calT$ contains at least $\lfloor \frac{s_i}{3} \rfloor$ triangles that intersect $C_i$.
	Since there are no edges between clusters, no triangle can intersect more than one cluster.
	Hence,
	\begin{align*}
		\abs{\calT} = \sum_{i=1}^\ell \abs{\{ T \in \calT \mid T\cap C_i \ne \emptyset\}} \ge \sum_{i=1}^\ell \lfloor \frac{s_i}{3} \rfloor = \OPTTP(G-M). & \qedhere
	\end{align*}
\end{proof}

Next, we will consider the parameterization by the size of a minimum modulator to a cocluster graph.
A triangle packing $\calT$ is $\alpha$-maximal if there are no triangles $T_1,\ldots,T_{\alpha-1}\in \calT$ and $T'_1,\ldots,T'_\alpha \notin \calT$ such that $(\calT \setminus \{T_1,\ldots,T_{\alpha-1}\}) \cup \{T'_1,\ldots,T'_\alpha\}$ is also a triangle packing.

\begin{theorem}
	\label{thm:tp-ccvd}
	There is a polynomial-time algorithm that, given a graph $G$, computes a triangle packing of size at least $\OPTTP(G-M)$ where $M$ is a minimum modulator to a cocluster graph.
\end{theorem}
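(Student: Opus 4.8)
The plan is to run a local-search algorithm: compute a triangle packing $\calT$ that is $\alpha$-maximal for a suitable constant $\alpha$, and then argue that $\abs{\calT}\ge\OPTTP(G-M)$ directly, unlike the simple greedy maximal packing that sufficed for the cluster case (\cref{thm:tp-cvd}). Such a packing is computable in polynomial time for fixed $\alpha$, since every improving swap of size at most $\alpha$ can be found by brute force over the $\Oh(\abs{V}^{3\alpha})$ candidate triangle sets and strictly increases $\abs{\calT}$, so at most $\abs{V}/3$ improvements occur. The single-free-part case below forces $\alpha\ge 3$, so $1$-maximality (which was enough for clusters) is provably insufficient here.

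First I would record the structure of $G-M$. Writing $G-M$ as the complete multipartite graph with parts (independent sets) $I_1,\dots,I_\ell$ of sizes $n_1,\dots,n_\ell$, every triangle of $G-M$ is a transversal of three distinct parts, and a short exchange argument establishes
\[
 \OPTTP(G-M)=\max\Bigl\{\,T\in\Nzero \;\Big|\; \sum_{i=1}^{\ell}\min(n_i,T)\ge 3T\,\Bigr\}=:t,
\]
the key point being that a vertex multiset with at most $T$ per part and total $3T$ can always be split into $T$ part-distinct triples. The second fact uses only maximality: the vertices of $G-M$ left free by $\calT$ lie in at most two parts, since three free vertices in three distinct parts would form a triangle disjoint from $\calT$. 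As $\calT$ uses at most $\abs{\calT}$ vertices of any part, every part with $n_i>\abs{\calT}$ contains a free vertex, so at most two parts exceed $\abs{\calT}$ in size.

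The heart of the argument is to set $p:=\abs{\calT}$, assume $p<t$, and derive a bounded improving swap. Since $p+1\le t$ and $\sum_i\min(n_i,T)-3T$ is concave in $T$ and nonnegative at $0$ and $t$, we get $\sum_i\min(n_i,p+1)\ge 3p+3$; as $\calT$ occupies at most $3p$ vertices of $G-M$, this forces at least three free vertices of $G-M$ (this count is robust to the modulator, since modulator vertices used by $\calT$ only reduce the occupied $V\setminus M$ vertices). If the free vertices occupy two parts $A,B$, then taking a free vertex from each together with one vertex released by deleting a single suitably chosen triangle (one avoiding $A$, say) yields a remove-one/add-two swap. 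The stubborn case is when all free vertices lie in one part $A$: here the same counting forces at least two triangles of $\calT$ to avoid $A$, and deleting two of them releases six vertices outside $A$, which can be paired into three part-distinct pairs and completed with three free $A$-vertices into three new triangles — a remove-two/add-three swap. In every case the new triangles live inside $G-M$ and reuse only free or just-released vertices, so they are compatible with the untouched part of $\calT$, contradicting $\alpha$-maximality.

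The main obstacle I anticipate is twofold. First, the swap analysis must be made exhaustive: one has to verify every distribution of the free vertices and every way the deleted triangles can meet the free parts (including degenerate subcases where, e.g., all triangles meet both $A$ and $B$, which I expect to turn out non-suboptimal by the counting), and the pairing step needs that each part appears at most twice among the released vertices. Determining the smallest $\alpha$ for which this always succeeds is the crux, and I expect $\alpha=3$. Second, handling the modulator: triangles of $\calT$ may use vertices of $M$, so deleting them frees fewer than three vertices of $G-M$, and the algorithm does not know $M$ or the adjacencies across $M$. I would neutralize this by always selecting the deleted triangles from those contained in $G-M$ and forming all replacement triangles inside $G-M$; the remaining task is to confirm, via the bound that at most two parts exceed $\abs{\calT}$, that enough such removable triangles are available in each case.
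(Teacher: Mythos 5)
Your overall plan coincides with the paper's proof: compute a $3$-maximal packing, observe that the free vertices of $G-M$ lie in at most two parts, and exhibit remove-one/add-two and remove-two/add-three swaps whenever $\abs{\calT}<\OPTTP(G-M)$; your formula $\OPTTP(G-M)=\max\{T \mid \sum_i\min(n_i,T)\ge 3T\}$ is a correct and somewhat cleaner substitute for the paper's ad-hoc counting bounds, and your free-vertex count is sound. The genuine gap is in your handling of the modulator: you promise to find improving swaps that delete only triangles contained in $G-M$, and such swaps need not exist. Concretely, let $G-M=K_{k,k,k,k}$ with parts $A,B,C,D$ of size $k\ge 3$, let $M=\{m_1,\ldots,m_k\}$ where $m_i$ is adjacent exactly to $c_i$ and $d_i$, and let $\calT=\{\{m_i,c_i,d_i\}\mid i=1,\ldots,k\}$. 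Here $M$ is a minimum modulator (for any $m_i$ kept, the $k$ disjoint induced $K_2+K_1$'s on $\{m_i,a_j,b_j\}$ force $k$ deletions), and $\abs{\calT}=k<\lfloor 4k/3\rfloor=\OPTTP(G-M)$. The free vertices are exactly $A\cup B$, so your two-part case applies; but every triangle of $\calT$ uses a modulator vertex, so there is no deletable triangle of your restricted form, and the free vertices alone induce a bipartite graph. Hence no restricted swap of any size exists, and the confirmation you defer to (``enough such removable triangles are available'') is false: your analysis cannot be completed as outlined.

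The repair requires exactly the swaps you ruled out: deleting triangles that meet $M$. In the example, removing $\{m_1,c_1,d_1\}$ releases $c_1,d_1\in V\setminus M$, and $\{a_1,b_1,c_1\}$, $\{a_2,b_2,d_1\}$ give a remove-one/add-two improvement. Such triangles are profitable to delete precisely because they cover at most two vertices of $G-M$, and your own counting supplies the resources: the number of free vertices is at least $3$ plus the number of triangles of $\calT$ meeting $M$. Interestingly, your one-free-part case does survive your restriction: there, $2\OPTTP(G-M)\le\abs{V(G-M)\setminus A}$ together with the coverage count still forces two triangles of $\calT$ inside $G-M$ avoiding $A$ (triangles through $M$ cover at most two vertices outside $A$, the same as triangles meeting $A$), so only the two-part case breaks. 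For comparison, the paper avoids the issue by proving that $3$-maximal packings are maximum in cocluster graphs and transferring to graphs with a modulator; your instinct that this transfer is the delicate point is sound, but the restriction you chose to effect it is the wrong one.
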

\begin{proof}
	We will show that any $3$-maximal triangle packing in a cocluster graph is optimal.
	This implies that, in any graph $G$ with a modulator $M$ to a cocluster,
	any $3$-maximal triangle packing has size at least
	$\OPTTP(G-M)$.
	Clearly, a $3$-maximal triangle packing can be computed in polynomial time.
	
	Let $\calT$ be a triangle packing in a cocluster $G=(V,E)$.
	By case distinction, we will prove that either $\calT$ is maximum or it is not $3$-maximal.
	Let $C_1,\ldots,C_\ell$ be the coclusters of $G$, that is, the connected components of $\overline{G}$.
	\begin{compactenum}[(i)]
		\item 	If there are at most two free vertices under $\calT$, then $\calT$ is clearly a maximum packing.
		\item 	If there are three free vertices in distinct coclusters, then $\calT$ is not a maximal packing, since those three vertices form a triangle that can be added to $\calT$.
		\item 	Now consider the case where 
		there are free vertices $v_1$ and $v_2$ that are in the same cocluster $C_i$, while the free vertex $v_3$ is in a different cocluster $C_j$.
		No other cocluster contains a free vertex, as otherwise (ii) applies.
		\begin{compactenum}
			\item 	If there is triangle $T = \{u_1,u_2,_3\} \in \calT$ which does not intersect $C_i$, then at least two of the vertices in $T$, say $u_1$ and $u_2$ are not contained in $C_j$.
			Then, $T'_1 \coloneqq \{v_1,v_3,u_1\}$ and $T'_2 \coloneqq \{v_2,u_2,u_3\}$ are both triangles and $(\calT \setminus \{T\}) \cup \{T'_1,T'_2\}$ is a larger triangle packing proving that $\calT$ is not $3$-maximal.
			\item Now, suppose that $C_i$ intersects every triangle in $\calT$.
			\begin{compactenum}[A.]
				\item If there is a second free vertex $v_4 \ne v_3$ contained in $C_j$ and $C_j$ does not intersect every triangle in $\calT$, then the same argument as in (a) applies with the roles of $C_i$ and $C_j$ reversed.
				\item If there is no additional free vertex in $C_j$, then we will show that $\calT$ is maximum.
				In any triangle packing of $G$, every triangle must contain at least two vertices that are not in $C_i$.
				Hence, $\OPTTP(G) \le \lfloor \frac{\abs{V \setminus C_i}}{2} \rfloor$.
				The packing $\calT$ covers every vertex outside of $C_i$ except $v_3$ and every triangle in $\calT$ covers exactly two such vertices.
				Therefore $\abs{\calP} = \frac{\abs{V\setminus C_i}-1}{2}$.
				Since $\abs{\calP}$ is an integer, it follows that $\abs{\calP} = \frac{\abs{V\setminus C_i}-1}{2} = \lfloor \frac{\abs{V \setminus C_i}}{2} \rfloor \ge \OPTTP(G)$.
				\item Now, we may assume that every triangle intersects both $C_i$ and $C_j$.
				Since any triangle packing uses a vertex outside of both $C_i$ and $C_j$, it follows that $\OPTTP(G) \le \abs{V\setminus (C_i \cup C_j)}$.
				Since every vertex outside of $C_i$ and $C_j$ is covered by $\calT$ and every triangle in $\calT$ covers exactly one such vertex, it follows that $\abs{\calT} = \abs{V\setminus (C_i \cup C_j)}$.
			\end{compactenum}
		\end{compactenum}
		\item Now consider the case where there are free $v_1,v_2,v_3$, all contained in the same cocluster $C_i$ and no vertices outside of $C_i$ are free.
		We distinguish two sub-cases depending on whether every triangle in $\calT$ intersects $C_i$.
		\begin{compactenum}
			\item Suppose that there is a triangle $T=\{u_1,u_2,u_3\} \in \calT$ that does not intersect $C_i$.
			Let $T' \coloneqq \{v_1,u_1,u_2\}$ and
			let $\calT'\coloneqq (T \setminus \{T\}) \cup \{T'\}$.
			We can apply the argument in~(iii), replacing $\calT$ with $\calT'$.
			The argument in (iii) implies that either $\calT'$ is a maximum packing in $G$, which means that $\calT$ is also maximum since $\abs{\calT} = \abs{\calT'}$, or that $\calT$ can augmented by removing one triangle and adding two, in which case $\calT$ can be augmented by removing two triangles and adding three.
			\item If every triangle in $\calT$ intersects $C_i$, then we will show that $\calT$ is maximum.
			(This argument is essentially the same as in (iii)(b).B).
			Any triangle contains at least two vertices outside of $C_i$ and, therefore, $\OPTTP(G) \le \frac{\abs{V\setminus C_i}}{2}$.
			The packing $\calT$ covers all vertices outside of $C_i$ and, therefore, $\abs{\calT} = \frac{\abs{V\setminus C_i}}{2}$.
			\qedhere
		\end{compactenum}
	\end{compactenum}
	
\end{proof}

\section{Conclusion}\label{section:conclusion}

We have initiated a dedicated study of efficient parameterized approximation algorithms with small additive error depending on some structural parameter. This is motivated by the goal of leveraging structure for improved polynomial-time approximation results, similarly to how fixed-parameter tractable algorithms improve upon exact exponential-time algorithms within certain regimes of input structure. As a byproduct of the structural rounding framework of Demaine et al.~\cite{Demaine2019}, such error bounds can be obtained for a variety of hard graph problems relative to modulator-based parameters. By focusing directly on the question of additive errors and, as far as possible, seeking to avoid an explicit (approximate) computation of the relevant modulators, we obtain better error bounds than by straightforward application of the framework. In many cases, further improvement of our bounds would contradict known lower bounds for approximation, or the bounds are tight for certain inputs. Arguably, this kind of efficient parameterized approximation algorithm is a sound alternative to computing exact solutions in time exponential in the parameter via FPT-algorithms.

For future work, more examples of efficient parameterized approximation algorithms with error bound better than via the structural rounding framework are of great interest. Ideally, the (approximate) computation of the parameter/modulator can be avoided. Let us name some specific problems related to our results: \vertexcover relative to the size of a modulator to chordal graphs, graphs of treewidth at most two, or claw-free graphs. \weightedvertexcover and \trianglepacking with modulator to split graphs. \chromaticnumber with modulator to weakly chordal, comparability, or perfect graphs. For problems that resist polynomial/linear error bounds one may also study parameterization by the size of an \emph{edge} modulator (deletion/addition/editing) to the class in question. As an example, is there a linear additive error for \vertexcover with edge modulator to bipartite graphs?

Certainly lower bounds are of great interest as well. Note that these are challenging because one needs to construct inputs with a small but \emph{unknown modulator} to some target class, while being hard to find a good approximation for the target problem. As an example, with a given optimal modulator to bipartite graphs it would clearly be no challenge to beat the lower bound of $(2-\varepsilon)\OPT + c \cdot \OPT_{\mathrm{OCT}}$ for \vertexcover.

\bibliography{str-appr}

\begin{thebibliography}{38}
\providecommand{\natexlab}[1]{#1}
\providecommand{\url}[1]{\texttt{#1}}
\expandafter\ifx\csname urlstyle\endcsname\relax
  \providecommand{\doi}[1]{doi: #1}\else
  \providecommand{\doi}{doi: \begingroup \urlstyle{rm}\Url}\fi

\bibitem[Agarwal et~al.(2005)Agarwal, Charikar, Makarychev, and
  Makarychev]{Agarwal2005}
Amit Agarwal, Moses Charikar, Konstantin Makarychev, and Yury Makarychev.
\newblock { $O(\sqrt{\log n} )$} approximation algorithms for {Min UnCut}, {Min
  2CNF deletion}, and directed cut problems.
\newblock In \emph{Proceedings of the Thirty-Seventh Annual ACM Symposium on
  Theory of Computing}, pages 573--581, 2005.
\newblock \doi{10.1145/1060590.1060675}.

\bibitem[Aprile et~al.(2023)Aprile, Drescher, Fiorini, and Huynh]{Aprile2023}
Manuel Aprile, Matthew Drescher, Samuel Fiorini, and Tony Huynh.
\newblock A tight approximation algorithm for the cluster vertex deletion
  problem.
\newblock \emph{Mathematical Programming}, 197\penalty0 (2):\penalty0
  1069--1091, 2023.
\newblock \doi{10.1007/s10107-021-01744-w}.

\bibitem[Bafna et~al.(1999)Bafna, Berman, and Fujito]{Bafna1999}
Vineet Bafna, Piotr Berman, and Toshihiro Fujito.
\newblock A 2-approximation algorithm for the undirected feedback vertex set
  problem.
\newblock \emph{SIAM Journal on Discrete Mathematics}, 12\penalty0
  (3):\penalty0 289--297, 1999.
\newblock \doi{10.1137/S0895480196305124}.

\bibitem[Bansal and Khot(2009)]{Bansal2009}
Nikhil Bansal and Subhash Khot.
\newblock Optimal long code test with one free bit.
\newblock In \emph{Proceedings of the 50th Annual Symposium on Foundations of
  Computer Science (FOCS)}, pages 453--462, 2009.
\newblock \doi{10.1109/FOCS.2009.23}.

\bibitem[Bar-Yehuda et~al.(2004)Bar-Yehuda, Bendel, Freund, and
  Rawitz]{BarYehuda2004}
Reuven Bar-Yehuda, Keren Bendel, Ari Freund, and Dror Rawitz.
\newblock Local ratio: A unified framework for approximation algorithms.
\newblock \emph{ACM Computing Surveys}, 36\penalty0 (4):\penalty0 422–463,
  2004.
\newblock \doi{10.1145/1041680.1041683}.

\bibitem[Becker and Geiger(1996)]{Becker1996}
Ann Becker and Dan Geiger.
\newblock Optimization of {Pearl's} method of conditioning and greedy-like
  approximation algorithms for the vertex feedback set problem.
\newblock \emph{Artificial Intelligence}, 83\penalty0 (1):\penalty0 167--188,
  1996.
\newblock \doi{10.1016/0004-3702(95)00004-6}.

\bibitem[Cao(2016)]{Cao2016}
Yixin Cao.
\newblock Linear recognition of almost interval graphs.
\newblock In \emph{Proceedings of the 2016 Annual ACM-SIAM Symposium on
  Discrete Algorithms (SODA)}, pages 1096--1115, 2016.
\newblock \doi{10.1137/1.9781611974331.ch77}.

\bibitem[Chalermsook et~al.(2023)Chalermsook, Fomin, Hamm, Korhonen, Nederlof,
  and Orgo]{Chalermsook2023}
Parinya Chalermsook, Fedor Fomin, Thekla Hamm, Tuukka Korhonen, Jesper
  Nederlof, and Ly~Orgo.
\newblock Polynomial-time approximation of {Independent Set} parameterized by
  treewidth.
\newblock In \emph{Proceedings of the 31st Annual European Symposium on
  Algorithms (ESA 2023)}, pages 33:1--33:13, 2023.
\newblock \doi{10.4230/LIPIcs.ESA.2023.33}.

\bibitem[Corneil et~al.(1981)Corneil, Lerchs, and Burlingham]{Corneil1981}
D.G. Corneil, H.~Lerchs, and L.~Stewart Burlingham.
\newblock Complement reducible graphs.
\newblock \emph{Discrete Applied Mathematics}, 3\penalty0 (3):\penalty0
  163--174, 1981.
\newblock \doi{https://doi.org/10.1016/0166-218X(81)90013-5}.

\bibitem[Cygan et~al.(2015)Cygan, Fomin, Kowalik, Lokshtanov, Marx, Pilipczuk,
  Pilipczuk, and Saurabh]{Cygan2015}
Marek Cygan, Fedor~V. Fomin, Lukasz Kowalik, Daniel Lokshtanov, D{\'{a}}niel
  Marx, Marcin Pilipczuk, Michal Pilipczuk, and Saket Saurabh.
\newblock \emph{Parameterized Algorithms}.
\newblock Springer, 2015.
\newblock \doi{10.1007/978-3-319-21275-3}.

\bibitem[Demaine et~al.(2019)Demaine, Goodrich, Kloster, Lavallee, Liu,
  Sullivan, Vakilian, and van~der Poel]{Demaine2019}
Erik~D. Demaine, Timothy~D. Goodrich, Kyle Kloster, Brian Lavallee, Quanquan~C.
  Liu, Blair~D. Sullivan, Ali Vakilian, and Andrew van~der Poel.
\newblock Structural rounding: Approximation algorithms for graphs near an
  algorithmically tractable class.
\newblock In \emph{Proceedings of the 27th Annual European Symposium on
  Algorithms (ESA)}, pages 37:1--37:15, 2019.
\newblock \doi{10.4230/LIPIcs.ESA.2019.37}.

\bibitem[Drescher et~al.(2024)Drescher, Fiorini, and Huynh]{Drescher2024}
Matthew Drescher, Samuel Fiorini, and Tony Huynh.
\newblock A simple $(2+\epsilon)$-approximation algorithm for split vertex
  deletion.
\newblock \emph{European Journal of Combinatorics}, 121:\penalty0 103844, 2024.
\newblock \doi{10.1016/j.ejc.2023.103844}.

\bibitem[Fomin et~al.(2023)Fomin, Golovach, Sagunov, and Simonov]{Fomin2023}
Fedor~V. Fomin, Petr~A. Golovach, Danil Sagunov, and Kirill Simonov.
\newblock Approximating long cycle above {Dirac’s} guarantee.
\newblock In \emph{50th International Colloquium on Automata, Languages, and
  Programming (ICALP 2023)}, pages 60:1--60:18, 2023.
\newblock \doi{10.4230/LIPIcs.ICALP.2023.60}.

\bibitem[Fujito(2001)]{Fujto2001}
Toshihiro Fujito.
\newblock On approximability of the independent/connected edge dominating set
  problems.
\newblock \emph{Information Processing Letters}, 79\penalty0 (6):\penalty0
  261--266, 2001.
\newblock \doi{10.1016/S0020-0190(01)00138-7}.

\bibitem[Gallai(1967)]{Gallai1967}
Tibor Gallai.
\newblock Transitiv orientierbare {Graphen}.
\newblock \emph{Acta Mathematica Academiae Scientiarum Hungarica}, 18:\penalty0
  25--66, 1967.
\newblock \doi{10.1007/BF02020961}.

\bibitem[Garey et~al.(1976)Garey, Johnson, and Stockmeyer]{Garey1976}
M.R. Garey, D.S. Johnson, and L.~Stockmeyer.
\newblock Some simplified {NP}-complete graph problems.
\newblock \emph{Theoretical Computer Science}, 1\penalty0 (3):\penalty0
  237--267, 1976.
\newblock \doi{10.1016/0304-3975(76)90059-1}.

\bibitem[Giannopoulou et~al.(2016)Giannopoulou, Lokshtanov, Saurabh, and
  Such\'{y}]{Giannopoulou2016}
Archontia~C. Giannopoulou, Daniel Lokshtanov, Saket Saurabh, and Ond\v{r}ej
  Such\'{y}.
\newblock Tree deletion set has a polynomial kernel (but no $\text{OPT}^{O(1)}$
  approximation).
\newblock \emph{SIAM Journal on Discrete Mathematics}, 30\penalty0
  (3):\penalty0 1371--1384, 2016.
\newblock \doi{10.1137/15M1038876}.

\bibitem[Hols et~al.(2020)Hols, Kratsch, and Pieterse]{Hols2020}
Eva-Maria~C. Hols, Stefan Kratsch, and Astrid Pieterse.
\newblock Approximate {Turing} kernelization for problems parameterized by
  treewidth.
\newblock In \emph{Proceedings of the 28th Annual European Symposium on
  Algorithms (ESA 2020)}, Leibniz International Proceedings in Informatics
  (LIPIcs), pages 60:1--60:23, 2020.
\newblock \doi{10.4230/LIPIcs.ESA.2020.60}.

\bibitem[Holyer(1981)]{Holyer1981}
Ian Holyer.
\newblock The {NP}-completeness of edge-coloring.
\newblock \emph{SIAM Journal on Computing}, 10\penalty0 (4):\penalty0 718--720,
  1981.
\newblock \doi{10.1137/0210055}.

\bibitem[Inamdar et~al.(2023)Inamdar, Kanesh, Kundu, Ramanujan, and
  Saurabh]{Inamdar2023}
Tanmay Inamdar, Lawqueen Kanesh, Madhumita Kundu, M.~S. Ramanujan, and Saket
  Saurabh.
\newblock {FPT} approximations for packing and covering problems parameterized
  by elimination distance and even less.
\newblock In \emph{Proceedings of the 43rd IARCS Annual Conference on
  Foundations of Software Technology and Theoretical Computer Science (FSTTCS
  2023)}, pages 28:1--28:16, 2023.
\newblock \doi{10.4230/LIPIcs.FSTTCS.2023.28}.

\bibitem[Jansen and Pilipczuk(2018)]{Jansen2018}
Bart M.~P. Jansen and Marcin Pilipczuk.
\newblock Approximation and kernelization for chordal vertex deletion.
\newblock \emph{SIAM Journal on Discrete Mathematics}, 32\penalty0
  (3):\penalty0 2258--2301, 2018.
\newblock \doi{10.1137/17M112035X}.

\bibitem[Jansen and W\l{}odarczyk(2022)]{Jansen2022}
Bart M.~P. Jansen and Micha\l{} W\l{}odarczyk.
\newblock Lossy planarization: a constant-factor approximate kernelization for
  planar vertex deletion.
\newblock In \emph{Proceedings of the 54th Annual ACM SIGACT Symposium on
  Theory of Computing (STOC 2022}, page 900–913, 2022.
\newblock \doi{10.1145/3519935.3520021}.

\bibitem[Kashaev and Sch\"{a}fer(2023)]{Kashaev2023}
Danish Kashaev and Guido Sch\"{a}fer.
\newblock Round and bipartize for vertex cover approximation.
\newblock In \emph{Approximation, Randomization, and Combinatorial
  Optimization. Algorithms and Techniques (APPROX/RANDOM 2023)}, pages
  20:1--20:20, 2023.
\newblock \doi{10.4230/LIPIcs.APPROX/RANDOM.2023.20}.

\bibitem[Khot and Regev(2008)]{Khot2008}
Subhash Khot and Oded Regev.
\newblock Vertex cover might be hard to approximate to within $2-\epsilon$.
\newblock \emph{Journal of Computer and System Sciences}, 74\penalty0
  (3):\penalty0 335--349, 2008.
\newblock \doi{10.1016/j.jcss.2007.06.019}.

\bibitem[Kr{\'a}l' et~al.(2001)Kr{\'a}l', Kratochv{\'i}l, Tuza, and
  Woeginger]{Kral2001}
Daniel Kr{\'a}l', Jan Kratochv{\'i}l, Zsolt Tuza, and Gerhard~J. Woeginger.
\newblock Complexity of coloring graphs without forbidden induced subgraphs.
\newblock In \emph{Proceedings of the 27th International Workshop on
  Graph-Theoretic Concepts in Computer Science (WG 2001)}, pages 254--262,
  2001.

\bibitem[Kratsch and Wahlstr\"{o}m(2014)]{Kratsch2014}
Stefan Kratsch and Magnus Wahlstr\"{o}m.
\newblock Compression via matroids: A randomized polynomial kernel for odd
  cycle transversal.
\newblock \emph{ACM Transactions on Algorithms}, 10\penalty0 (4), 2014.
\newblock \doi{10.1145/2635810}.

\bibitem[Krithika et~al.(2018)Krithika, Majumdar, and Raman]{Krithika2018}
R.~Krithika, Diptapriyo Majumdar, and Venkatesh Raman.
\newblock Revisiting connected vertex cover: {FPT} algorithms and lossy
  kernels.
\newblock \emph{Theory of Computing Systems}, 62:\penalty0 1690--1714, 2018.
\newblock \doi{10.1007/s00224-017-9837-y}.

\bibitem[Lavallee et~al.(2020)Lavallee, Russell, Sullivan, and van~der
  Poel]{Lavallee2020}
Brian Lavallee, Hayley Russell, Blair~D. Sullivan, and Andrew van~der Poel.
\newblock Approximating vertex cover using structural rounding.
\newblock In \emph{Proceedings of the 22nd Workshop on Algorithm Engineering
  and Experiments (ALENEX)}, pages 70--80, 2020.
\newblock \doi{10.1137/1.9781611976007.6}.

\bibitem[Lekkeikerker and Boland(1962)]{Lekkeikerker1962}
C.~Lekkeikerker and J.~Boland.
\newblock Representation of a finite graph by a set of intervals on the real
  line.
\newblock \emph{Fundamenta Mathematicae}, 51\penalty0 (1):\penalty0 45--64,
  1962.

\bibitem[Lokshtanov et~al.(2020)Lokshtanov, Misra, Panolan, Philip, and
  Saurabh]{Lokshtanov2020}
Daniel Lokshtanov, Pranabendu Misra, Fahad Panolan, Geevarghese Philip, and
  Saket Saurabh.
\newblock A $(2 + \epsilon)$-factor approximation algorithm for split vertex
  deletion.
\newblock In \emph{Proceedings of the 47th International Colloquium on
  Automata, Languages, and Programming (ICALP 2020)}, pages 80:1--80:16, 2020.
\newblock \doi{10.4230/LIPIcs.ICALP.2020.80}.

\bibitem[Matula(1968)]{Matula1968}
D.~W. Matula.
\newblock A min-max theorem for graphs with application to graph coloring.
\newblock \emph{SIAM Review}, 10:\penalty0 481--482, 1968.
\newblock \doi{10.1137/1010115}.

\bibitem[Mishra et~al.(2011)Mishra, Raman, Saurabh, Sikdar, and
  Subramanian]{Mishra2011}
Sounaka Mishra, Venkatesh Raman, Saket Saurabh, Somnath Sikdar, and
  CR~Subramanian.
\newblock The complexity of {K{\"o}nig} subgraph problems and above-guarantee
  vertex cover.
\newblock \emph{Algorithmica}, 61:\penalty0 857--881, 2011.
\newblock \doi{10.1007/s00453-010-9412-2}.

\bibitem[Misra and Gries(1992)]{Misra1992}
J.~Misra and David Gries.
\newblock A constructive proof of {Vizing's} theorem.
\newblock \emph{Information Processing Letters}, 41\penalty0 (3):\penalty0
  131--133, 1992.
\newblock \doi{10.1016/0020-0190(92)90041-S}.

\bibitem[Nemhauser and Jr.(1974)]{Nemhauser1974}
George~L. Nemhauser and Leslie E.~Trotter Jr.
\newblock Properties of vertex packing and independence system polyhedra.
\newblock \emph{Mathematical Programming}, 6\penalty0 (1):\penalty0 48--61,
  1974.
\newblock \doi{10.1007/BF01580222}.

\bibitem[Nemhauser and Jr.(1975)]{Nemhauser1975}
George~L. Nemhauser and Leslie E.~Trotter Jr.
\newblock Vertex packings: Structural properties and algorithms.
\newblock \emph{Mathematical Programming}, 8\penalty0 (1):\penalty0 232--248,
  1975.
\newblock \doi{10.1007/BF01580444}.

\bibitem[Olariu(1988)]{Olariu1988}
Stephan Olariu.
\newblock Paw-free graphs.
\newblock \emph{Information Processing Letters}, 28\penalty0 (1):\penalty0
  53--54, 1988.
\newblock \doi{10.1016/0020-0190(88)90143-3}.

\bibitem[Robertson et~al.(1996)Robertson, Sanders, Seymour, and
  Thomas]{Robertson1996}
Neil Robertson, Daniel~P. Sanders, Paul Seymour, and Robin Thomas.
\newblock Efficiently four-coloring planar graphs.
\newblock In \emph{Proceedings of the Twenty-Eighth Annual ACM Symposium on
  Theory of Computing (STOC 1996)}, page 571–575, 1996.
\newblock \doi{10.1145/237814.238005}.

\bibitem[Savage(1982)]{Savage1982}
Carla Savage.
\newblock Depth-first search and the vertex cover problem.
\newblock \emph{Information Processing Letters}, 14\penalty0 (5):\penalty0
  233--235, 1982.
\newblock \doi{10.1016/0020-0190(82)90022-9}.

\end{thebibliography}

\end{document}